\documentclass[11pt]{article}
\usepackage[left=2cm,right=2cm,top=2cm,bottom=2cm,letterpaper]{geometry}
\usepackage[onehalfspacing]{setspace}
\usepackage{latexsym}
\usepackage{lscape}
\usepackage{alltt}
\usepackage{amsmath,amssymb}
\usepackage{amsthm}
\usepackage{dsfont}
\usepackage{bm}
\usepackage{multirow}
\usepackage{array}
\usepackage{color}
\usepackage{placeins}
\usepackage{longtable}
\usepackage{adjustbox}
\usepackage{pdflscape}
\usepackage{graphicx}
\usepackage{enumerate}
\usepackage{enumitem}
\usepackage[round]{natbib}
\usepackage{subcaption}
\usepackage{booktabs}
\usepackage{caption}
\usepackage{xcolor}
\usepackage{longtable}
\usepackage{threeparttable}
\usepackage{makecell}
\usepackage{array}
\usepackage{caption}
\usepackage{soul}
\usepackage{float}

\definecolor{orange}{rgb}{1.0, 0.55, 0.0}

\newtheorem{theorem}{Theorem}[section]
\newtheorem{assumption}{Assumption}
\newtheorem{lemma}{Lemma}[section]
\newtheorem{proposition}{Proposition}[section]

\newtheorem{remark}{Remark}[section]

\newtheorem{algorithm}{Algorithm}

\newcommand{\beqn}{\begin{equation}}
\newcommand{\eeqn}{\end{equation}}
\newcommand{\bneqna}{\begin{eqnarray}}
\newcommand{\eneqna}{\end{eqnarray}}
\newcommand{\beqna}{\begin{eqnarray*}}
\newcommand{\eeqna}{\end{eqnarray*}}
\newcommand{\bdmath}{\begin{displaymath}}
\newcommand{\edmath}{\end{displaymath}}

\begin{document}

\renewcommand{\thefootnote}{\arabic{footnote}} 

\title{\Large{Fixed-smoothing Uniform Inference for Quantile Regression}}

\author{Kaicheng Chen\thanks{School of Economics, Shanghai University of Finance and Economics (chenkaicheng@sufe.edu.cn).}
\and
Antonio F. Galvao\thanks{Department of Economics, Michigan State University (agalvao@msu.edu).}
\and 
Seunghwa Rho\thanks{College of Economics and Finance, Hanyang University, Korea (srhoecon@hanyang.ac.kr).}
\and
Timothy J. Vogelsang\thanks{Department of Economics, Michigan State University (tjv@msu.edu).}
\and
Jungmo Yoon\thanks{College of Economics and Finance, Hanyang University, Korea (jmyoon@hanyang.ac.kr).}
\\
}

\date{\vspace{-5ex}}

\maketitle

\markboth{\sc }{\sc }

\begin{abstract}
\begin{spacing}{1.25}
This paper develops fixed-smoothing (fixed-b, fixed-K) inference methods for time-series quantile regression that are robust to heteroskedasticity and autocorrelation. Our approach is uniformly valid over quantile levels and accounts for dependence both over time and across quantiles. It enables the construction of uniform confidence bands, Wald, and Sup-t tests for joint hypotheses, and tests of shape restrictions, providing a unified framework for assessing heterogeneity in quantile effects. A key challenge is that, under weak dependence, uniform inference for quantile regression processes is generally non-pivotal because the limiting distributions depend on the long-run covariance structure across quantiles. To address this issue, we develop two complementary approaches. The uniform-in-$\tau$ method estimates the covariance structure and simulates the non-pivotal limiting distribution. For certain tests involving a finite collection of quantile levels, the stack-Wald method delivers pivotal fixed-smoothing inference. We establish the asymptotic validity of both approaches. Simulation results show that the proposed methods substantially improve size control relative to existing HAC-based procedures while maintaining good power. An application to predictive quantile regressions for stock returns reveals substantial heterogeneity in predictive effects across both quantiles and forecast horizons.
\end{spacing}

\vspace{0.25cm}

\noindent \textbf{Keywords:} Robust standard error, quantile regression, uniform inference, time-series data, heteroskedasticity and autocorrelation consistent covariance matrix estimation.

\vspace{0.25cm}

\noindent \textbf{JEL classification:} C12, C22, C58.
\end{abstract}

\thispagestyle{empty}\setcounter{page}{0}
\baselineskip=19.5pt

\doublespacing

\newpage

\section{Introduction}

Quantile regression (QR) provides an effective framework for studying heterogeneous effects in time-series. Recent applications in economics and finance include measures of systematic risk \citep{Adrian.Brunnermeier.2016, Adrian.Boyarchenko.Giannone.2019}, directional predictability \citep{Han.Linton.Oka.Whang.2016}, predictive regressions \citep{Cenesizoglu.Timmermann.2008, Lee16, MaynardShimotsuKuriyama24}, structural change \citep{Qu.2008, SuXiao08}, and quantile vector autoregressions  \citep{White.Kim.Manganelli.2015}. Much of this literature assumes that regression errors are either independent identically distributed (i.i.d.) or martingale difference sequences (MDS) which substantially simplifies inference. 

Inference becomes considerably more challenging in the presence of serial dependence. An emerging literature has developed robust inference procedures for time-series QR. Examples include the heteroskedasticity and autocorrelation consistent (HAC) covariance matrix estimator of \citet{galvao2024hac} and the block bootstrap of \citet{GregoryLahiriNordman18}. However, inference based on HAC standard errors or bootstrap methods often exhibits substantial size distortions in finite samples, particularly when the data are highly persistent. To improve finite-sample accuracy, recent studies have proposed fixed-smoothing inference \citep{hwang2025har}, self-normalized inference \citep{hoga2025selfnormalized}, and dependent wild bootstrap procedures \citep{CaiLong2026}. These methods, however, are designed for \textit{pointwise inference} at a single given quantile level, whereas many empirical questions require \textit{simultaneous inference} across multiple quantiles or \textit{uniform inference} over an entire range of quantiles.

This paper develops fixed-smoothing (fixed-b and fixed-K) inference methods for the QR process that are robust to heteroskedasticity and autocorrelation (HAR). Importantly, our approach is uniform in quantile levels and explicitly accounts for the dependence structure \textit{across} quantiles. For hypotheses involving a finite set of quantiles, we also develop simultaneous inference procedures based on Wald tests. For hypotheses defined over a continuum of quantiles, we construct uniform confidence bands and Sup-t tests for the entire quantile process. We also develop tests for shape restrictions, including monotonicity and homogeneity of quantile effects. Together, these procedures provide a unified framework for conducting robust uniform inference in time-series QR and assessing heterogeneity in effects.

To illustrate the importance of uniform inference in QR models, consider Figure~\ref{fig-app1}, which presents QR estimates from predicting monthly S\&P 500 returns using stock variance, a measure of market-wide volatility. \citet{Welch.Goyal.2008} provide a comprehensive analysis of stock return predictability, and we use an updated version of their data set. QR allows researchers to examine how the predictive effects varies across the return distribution. The pattern is highly heterogeneous: the effect of stock variance is negative in the left tail, close to zero around the median, and positive in the right tail. In other words, periods of higher market volatility make next month’s returns more dispersed, producing larger losses in the worst outcomes and larger gains in the best outcomes.

The solid black line in the figure shows the estimated quantile effect $\beta(\tau)$, while the colored dotted lines provide the corresponding 95\% pointwise confidence intervals. The blue lines are based on the HAC method of \cite{galvao2024hac}, whereas the purple lines are constructed using the fixed-b HAR procedures developed in \citet{hwang2025har}. Both confidence intervals are pointwise in the sense that they are valid for each quantile level considered separately.

\begin{figure}[!htbp]
\centering
\caption{Quantile Effects of Stock Variance on S\&P 500 Returns}
\label{fig-app1}\vspace{-0.75cm}
\includegraphics[width=0.56\textwidth, height=0.35\textheight]{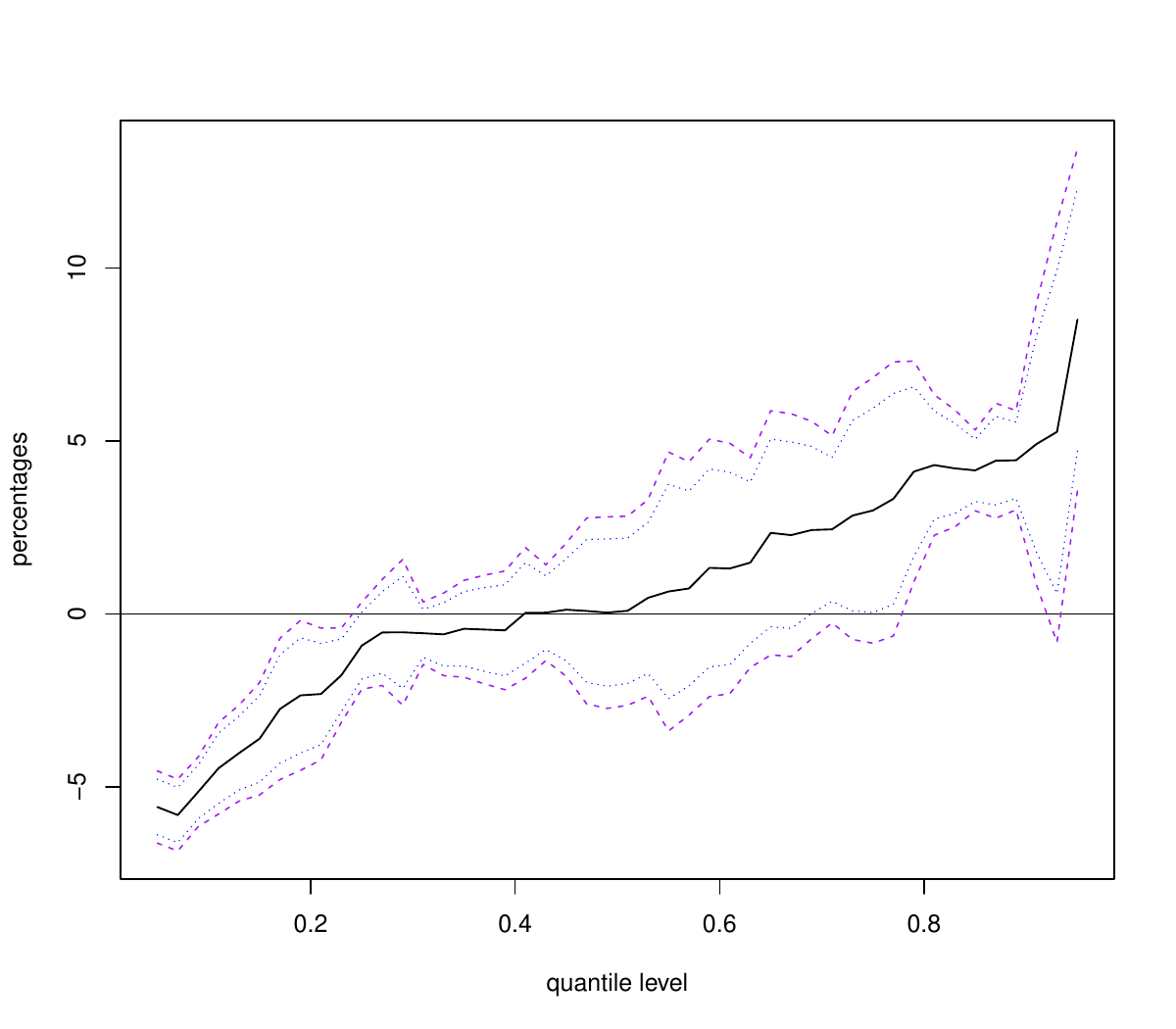}
\par
\begin{minipage}[l]{16.0cm}
{\footnotesize Estimated quantile effects of lagged stock variance on monthly S\&P 500 excess returns (solid black line). The dotted blue lines and dashed purple lines represent 95\% pointwise confidence bands based on the HAC method of \citet{galvao2024hac} and the fixed-b HAR method, respectively. The Bartlett kernel is used with bandwidth $M = 0.2 T$.}
\end{minipage}
\end{figure}

The limitations of pointwise inference are immediately apparent. At $\tau = 0.91$, the estimated effect is $4.91$ percent with a 95\% confidence interval of $(0.83, 8.98)$, suggesting a significant positive effect. Yet at the nearby quantile $\tau = 0.93$, the estimated effect is similar at $5.26$ percent, while the confidence interval becomes $(-0.81, 11.34)$, no longer excluding zero. Should we therefore conclude that stock variance has a significant predictive effect in the right tail, or that the evidence is inconclusive? Because empirical applications rarely provide clear reasons to focus on a particular quantile level, pointwise inference alone provides an incomplete picture.

Researchers are often interested in hypotheses concerning several quantiles simultaneously. For example, given two quantile levels $\tau_1$ and $\tau_2$, one may wish to determine whether both effects differ from zero, whether $\beta(\tau_1) > \beta(\tau_2)$, or whether the two effects are equal. Such questions require simultaneous inference across quantiles rather than separate inference at each quantile level. 

More broadly, many empirical questions concern an entire region of the distribution. Let $\mathcal{T}$ denote a subset of the unit interval. One may wish to test whether a predictor has zero effect over $\mathcal{T}$, that is, $H_0 : \beta(\tau) = 0$ for all $\tau \in \mathcal{T}$ against the alternative that the effect differs from zero for some $\tau \in \mathcal{T}$. Alternatively, one may wish to test whether the effect is uniformly nonnegative or whether it is increasing across quantiles. In these cases, the object of interest is the entire quantile process, $\beta(\tau)$, not its value at a single quantile level. Valid inference in such settings requires procedures that control error uniformly over $\tau$. Uniform inference is essential in a variety of applications, including the analysis of expected shortfall, whose estimand involves $\int_{\mathcal{T}} \beta(\tau) d\tau$ over tail regions $\mathcal{T}$ (\citealp{PATTON2019388}), and portfolio comparison problems, where expected utility of a portfolio depends on the entire return distribution rather than a few selected quantiles (\citealp{Harvey.Siddique.2000}; \citealp{Dittmar2002}).

The main challenge in extending fixed-smoothing inference to the entire QR process is that the resulting limiting distributions are generally non-pivotal across quantiles. This contrasts with conditional mean models where nuisance long-run variances often cancel from the fixed-smoothing limits. In QR under weak dependence, the key nuisance parameter is the long-run covariance kernel of the \textit{quantile score process}, which captures both serial dependence \textit{and} dependence across quantiles. For inference at a fixed quantile level, the cross-quantile dependence is irrelevant, and the fixed-smoothing statistics have pivotal limiting distributions \citep{hwang2025har,hoga2025selfnormalized}. For uniform inference over quantiles, however, the cross-quantile dependence enters the limiting distributions and renders them non-pivotal. To generate critical values for non-pivotal asymptotic limits, we provide a simulation algorithm that uses a consistent estimator of the long-run covariance kernel to simulate the limiting Gaussian process that appears in the limits of the test statistics. We refer to it as the uniform-in-$\tau$ method. 

For certain test statistics involving a \textit{finite} collection of quantiles, asymptotically pivotal limits are obtained using what we call the ``stack-Wald'' method which preserves the dependence structure across quantiles. These limits take ``standard'' fixed-smoothing forms and asymptotic critical values are available in the existing literature.

From the broader perspective this paper contributes to three strands of the literature: fixed-smoothing HAR inference, time-series QR, and simultaneous inference for quantile processes. In time-series settings \citet{KieferVogelsang02,KieferVogelsang05} introduced fixed-b inference for kernel HAR tests and \cite{mueller-lrv} introduced fixed-K inference for orthonormal-series (OS) HAR tests. Subsequent work established higher-order asymptotic properties and developed test-optimal bandwidth selection procedures 
\citep{J2004,SPJ2008,ZhangShao2013,sun2011robust,sun2013os,S2014b,S2014,Lazarus.Lewis.Stock.Watson.2018,Lazarus2021}. \citet{hwang2025har} obtained fixed-smoothing results for quantile regressions that hold for a given quantile. Our paper contributes to this literature by obtaining fixed-smoothing results for tests that involve multiple or a continuum of quantiles - a practically important, and theoretically nontrivial, extension of the given-quantile case.

The second strand is the literature on time-series QR. Existing work on quantile processes typically assumes serially uncorrelated score processes, in which case the limiting process reduces to a Kiefer process, making inference substantially more tractable
\citep{Qu.2008,SuXiao08,GalvaoKatoMontesRojasOlmo14}. By contrast, we establish weak convergence of the partial-sum empirical process under general weak dependence, providing a foundation for inference on QR processes with serially dependent errors. Recent advances in robust inference for time-series QR include HAC estimation \citep{galvao2024hac}, bootstrap procedures \citep{GregoryLahiriNordman18, CaiLong2026}, fixed-smoothing inference methods \citep{hwang2025har}, and self-normalized inference \citep{hoga2025selfnormalized}. These approaches, however, are primarily designed for pointwise inference at a fixed quantile level. Our paper complements this literature by explicitly accounting for the intersection of dependence across time and dependence across quantiles. This allows joint hypothesis tests and uniform confidence bands that involve the entire QR process.

The third strand is the literature on simultaneous inference. When the object of interest is a collection of parameters or a shape restriction on a function, pointwise confidence intervals can be misleading because they do not account for multiple comparisons. Simultaneous confidence bands have a long history in statistics, and the Sup-t approach has recently received renewed attention in economics \citep{montiel2019simultaneous}. For QR, uniform inference procedures have been developed by, e.g., \citet{2002.Koenker.Xiao}, \citet{ChernozhukovFernandezVal05}, \citet{2015.Qu.Yoon}, and \citet{Belloni.Chernozhukov.2019}. These methods provide inference that is uniform over quantile levels, but are largely developed under independent sampling or dependence structures that do not require HAR inference. Our paper bridges this gap by developing uniform inference procedures for quantile processes that remain valid in the presence of heteroskedasticity and autocorrelation.

We conduct a simulation study to evaluate the finite-sample performance of the proposed methods. The results show that the fixed-b and fixed-K approaches substantially reduce the size distortions associated with existing small-b and large-K inference procedures. Both stack-Wald and Sup-t tests exhibit good finite-sample performance in terms of size and power, and similar results are obtained for the tests of shape restrictions on quantile effects. We implement data driven bandwidth selection procedures proposed in the literature, and find that these bandwidth rules work relatively well in practice.

Our empirical application revisits the predictive QR framework of \citet{Cenesizoglu.Timmermann.2008} for stock returns. We examine predictive effects across quantiles and forecast horizons while allowing for serial dependence in the regression errors. The results show that the stock variance has the strongest predictive effect at short horizons, particularly in the tails of the return distribution, and that the effect weakens as the forecast horizon increases. The application illustrates how uniform inference can reliably uncover heterogeneity in predictive effects across both quantiles and horizons.

The remainder of the paper is organized as follows. Section \ref{sec:model} introduces the econometric framework and the long-run variance estimators. Section \ref{sec:fixedb_asym} develops the fixed-smoothing asymptotic theory for the QR process. Section \ref{sec:inference} develops the inference procedures, including uniform-in-$\tau$ tests, the stack-Wald tests, uniform and simultaneous confidence bands, tests of shape restrictions, and data-dependent bandwidth selection. Section \ref{sec:sim} reports simulation results, and Section \ref{sec:app} presents an empirical application to predictive QR. Section \ref{sec:conclusion} concludes. Proofs, additional theoretical and simulation results are provided in the Supplemental Appendix.

\section{Econometric Model}
\label{sec:model}
When modeling economic behavior, empiricists commonly estimate quantile regression (QR) models that allow for heterogeneity across conditional quantile functions. Let $y_t$ be a real-valued dependent variable, and $x_t$ be a $p \times 1$ vector of covariates. Let $\mathcal{T}$ be a closed subset of $(0,1)$, for example, $\mathcal{T} = [\epsilon,1-\epsilon]$ for a small $\epsilon > 0$. For $\tau \in \mathcal{T}$, the $\tau$-th conditional quantile function of $y_t$ is given by
\begin{equation*}
Q_{y_{t}}(\tau|x_t) = x_{t}^{\prime}\beta_{0}(\tau), \quad t = 1,\ldots,T.
\end{equation*}
The quantile error $e_{t}(\tau)$ is defined as: $e_{t}(\tau) = y_t - x_{t}^{\prime}\beta_{0}(\tau)$. This innovation term is allowed to be serially correlated. The $\tau$-th conditional quantile of the error is assumed to be zero, $Q_{e}(\tau|x_t) = 0$, a normalization condition. The parameter of interest is $\beta_{0}(\tau)$, a $p \times 1$ vector, which is allowed to depend on the quantile. 

The coefficient vector $\beta_{0}(\tau)$ is estimated by \begin{equation}
\widehat{\beta}(\tau) = \underset{\beta(\tau)}{\textnormal{argmin}}  \sum_{t=1}^{T} \rho_{\tau}\left(y_{t} - x_{t}^{\prime}\beta(\tau)  \right),
\label{eq-beta.hat}
\end{equation}
where $\rho_{\tau}(u) = u \cdot (\tau - I(u \leq 0))$. For a given $\tau$, it is well known that, under regularity conditions, as $T \to \infty$, $\sqrt{T}(\widehat{\beta}(\tau) - \beta_{0}(\tau)) \stackrel{d}{\rightarrow} N(0, \Sigma(\tau) )$, 
where the asymptotic covariance matrix is given by $\Sigma(\tau) = D(\tau)^{-1} \Lambda(\tau) D(\tau)^{-1}$. The outer term in the asymptotic covariance matrix, $D(\tau) = \textnormal{E}\left[f(0|x_t)x_t x_{t}^{\prime}\right]$, depends on the conditional density function evaluated at the $\tau$-quantile.\footnote{More generally, $D(\tau) = \textnormal{E}\left[f(Q_{e}(\tau|x_t)|x_t) x_{t}x_{t}^{\prime} \right]$. Because $Q_{e}(\tau|x_t) = 0$, the two expression coincide.} It accounts for potential conditional heteroskedasticity. Estimating $D(\tau)$ is relatively straightforward following the literature even in time-series settings.
The middle term, $\Lambda(\tau)$, is the heteroskedasticity and autocorrelation (HAR) component of $\Sigma(\tau)$, capturing the long-run variance of the QR score. Define the $p \times 1$ QR score vector as
\begin{equation*}
z_{t}(\tau) =  \left(\tau - I(e_{t}(\tau) \leq 0)\right) \cdot x_t.
\end{equation*}
The long-run variance of $z_{t}(\tau)$ is then defined as  
$$\Lambda(\tau) = \sum_{j=-\infty}^{\infty} \textnormal{Cov}(z_{t}(\tau) , z_{t-j}(\tau)).$$%

The main objective is to conduct inference for the $\beta(\tau)$ parameter. Consider testing the null hypothesis, $H_{0}:r\left(\beta(\tau)\right)
=0$, against the alternative, $H_{A}:r\left( \beta(\tau)\right) \neq0$, where $r\left( \beta(\tau)\right) $ is a $q\times1$ vector $\left( q\leq p\right) $ of
continuously differentiable functions, with a first derivative matrix $%
R\left( \beta\right) =\partial r\left( \beta\right) /\partial \beta^{\prime}$
of full rank $q$. The Wald statistic for inference on the QR coefficients, using $\widehat{\Lambda}(\tau)$, a HAR long-run variance estimator of $\Lambda(\tau)$, is given by 
\begin{align}
\mathcal{W}(\tau) = T \ r(\widehat\beta(\tau))^{\prime} \left[R(\widehat\beta(\tau)) \, \widehat{\Sigma}(\tau) \, R(\widehat\beta(\tau))^{\prime}\right]^{-1} r(\widehat\beta(\tau)), \label{eq_wald}
\end{align}
where $\widehat{\Sigma}(\tau) = \widehat{D}(\tau)^{-1} \widehat{\Lambda}(\tau) \widehat{D}(\tau)^{-1}$ is an estimator of $\Sigma(\tau)$. In the case of a single restriction, $q=1$, the corresponding $t$-statistic is 
\begin{align}
	t(\tau) = \frac{\sqrt{T} \ r(\widehat\beta(\tau))}{\sqrt{R(\widehat\beta(\tau)) \, \widehat{\Sigma}(\tau) \, R(\widehat\beta(\tau))^{\prime}}}.  
	\label{eq_t-stat}
\end{align}
It is typical to use a kernel-based estimator for $D(\tau)$ of the form:
\begin{align}
	\widehat{D}(\tau)=\frac{1}{hT}\sum_{t=1}^{T} k\left(\frac{\widehat{e}_{t}(\tau)}{h}\right)x_{t}x_{t}^{\prime}, 
\label{eq-D.hat}
\end{align}
where $k(\cdot)$ is a density kernel function and $h$ is a bandwidth. 

For the long run variance matrix, $\Lambda(\tau)$, we focus on kernel-based HAR estimators and orthonormal series (OS) HAR estimators. The kernel-based estimator is given by
\begin{align}
\widehat{\Lambda}^{\rm KE}(\tau) & = \frac{1}{T}\sum_{t=1}^{T}\sum_{s=1}^{T}\mathcal{K}\left(\frac{t-s}{M}\right)\widehat{z}_{t}(\tau)\widehat{z}_{s}(\tau)^{\prime},
\label{eq-Lambda.hat}
\end{align}%
where $\mathcal{K}$ is a symmetric kernel function in $L^2(\mathbb{R})$ satisfying $\mathcal{K}(0)=1$, $|\mathcal{K}(\cdot)|\le 1$, and that $\mathcal{K}(\cdot)$ is continuous at $0$. The kernel heteroskedasticity and autocorrelation consistent (HAC) estimator in (\ref{eq-Lambda.hat}) was studied by \citet{galvao2024hac} under traditional asymptotics for the bandwidth. They established conditions for the consistency of $\widehat{\Lambda}(\tau)$. However, the finite-sample accuracy of the consistency approximation is often unsatisfactory. The resulting t-tests tend to exhibit rejection frequencies above nominal levels in the presence of positive autocorrelation, sometimes considerably so.

The OS HAR estimator is given by
\begin{align}
    \widetilde{\Lambda}^{\rm OS}(\tau) = \frac{1}{K} \sum_{k=1}^K   \widetilde{\Lambda}_k(\tau) = \frac{1}{K} \sum_{k=1}^K  \Phi_T^k(\tau)  \Phi_T^k(\tau)^{\prime},
\label{eq-Lambda.hatOS}
\end{align}
where $K<\infty$\footnote{For hypothesis tests that require inversion of the variance estimator, $K$ must be at least as large as the number of restrictions to ensure that the variance estimator is full rank.} and
\begin{align*}
    \Phi_T^k(\tau) =\frac{1}{\sqrt{T}}\sum_{t=1}^{T} \phi_k\left(\frac{t}{T}\right)\widehat{z}_{t}(\tau), \quad k=1,2,\ldots.
\end{align*} 
The functions $\{\phi_k(r):r\in [0,1], k=1,2,...\}$ are a sequence of orthonormal basis functions in $L^2[0,1]$ satisfying $\int_0^1 \phi_k(r) dr =0$ for each $k$. 

To achieve more accurate inference and provide better finite-sample approximations than traditional methods, we employ fixed-b and fixed-K asymptotic theory. Under fixed-b asymptotics, results are obtained for the sequence of bandwidths $M = [bT]$ where $b$ is held fixed. Under fixed-K theory, $K$ is held fixed as $T$ increases. These fixed-smoothing approaches generate limiting distributions for test statistics that depend on the tuning parameters (kernel, bandwidth, $K$) that can be used as reference distributions to generate tuning-parameter-dependent critical values. In contrast, under traditional asymptotics, $b\rightarrow 0$ and $K\rightarrow \infty$ as $T\rightarrow \infty$ and have been labeled ``small-b'' and ``large-K'' asymptotics. Because small-b and large-K limits rely on consistency of the variance estimator, they necessarily do not depend on the kernel, bandwidth, or $K$.\footnote{For small values of $b$ or large values of $K$, the fixed-smoothing and traditional limits are often similar.}

It is important to note that the formulas we use for  $\widehat{\Lambda}^{\rm KE}(\tau)$ and $\widetilde{\Lambda}^{\rm OS}(\tau)$ are defined using the empirical scores, $\widehat{z}_{t}(\tau)$, that have \textit{not} been recentered using the sample average of $\widehat{z}_{t}(\tau)$. This contrasts with \cite{hwang2025har} where recentered empirical scores were used. Because our uniform-in-$\tau$ fixed-smoothing limits, for a given quantile, simplify to the same fixed-smoothing limits obtained by \cite{hwang2025har}, our results show that the fixed-smoothing limits obtained by \cite{hwang2025har} hold with and without recentering of the empirical scores.\footnote{Intuitively, this is not surprising given that the sample average of $\widehat{z}_{t}$ is $o_p(1)$.}

\section{Uniform Fixed-Smoothing Asymptotic Theory in QR}
\label{sec:fixedb_asym}

This section presents some useful results that hold \textit{uniformly} across quantiles. These results are used to obtain fixed-b and fixed-K asymptotic limits in the inference section that follows. Consider the partial sum empirical process for the score $z_{t}$. For $r \in [0,1]$ and $\tau \in \mathcal{T}$, consider the following partial summation
\begin{equation*}
S([r T],\tau) = T^{-1/2} \sum_{t=1}^{[r T]} z_{t}(\tau) = \frac{1}{\sqrt{T}} \sum_{t=1}^{[rT]} (\tau - I(y_{t} \leq x_{t}^{\prime}\beta_{0}(\tau))) x_{t}.
\end{equation*}
Let $G(r,\tau)$, $(r, \tau) \in [0,1] \times \mathcal{T}$, be a $p$-vector of mean zero Gaussian process with covariance function: 
\begin{align}
    \textnormal{Cov}\left(G(r_1,\tau_1),G(r_2,\tau_2)\right) = (r_1 \wedge r_2) \, \Lambda(\tau_1 , \tau_2), \label{cov}
\end{align}
where $\Lambda(\tau_1 , \tau_2)$ is the long-run covariance kernel defined by 
\begin{equation}
\Lambda(\tau_{1} , \tau_{2}) = \sum_{j=-\infty}^{\infty} \operatorname{Cov}(z_{0}(\tau_{1}), z_{j} (\tau_{2})). \label{eq:cov_kernel}
\end{equation}
Let $\Rightarrow$ denote weak convergence, and let $l^{\infty}([0,1] \times \mathcal{T})$ denote the space of bounded real-valued functions on $[0,1] \times \mathcal{T}$, equipped with the uniform norm. Consider the following assumption.

\begin{assumption}
    \label{assum_gaussian_limit}
    As $T \rightarrow \infty$, $S([r T],\tau) \Rightarrow G(r,\tau)$ \hskip 0.10in in \ $l^{\infty}([0,1] \times \mathcal{T})$.
\end{assumption}

For weakly dependent processes, \cite{Philipp.Pinzur.2019}, \cite{Dehling.Taqqu.1989}, and \cite{Shao.Yu.1996} provide sufficient conditions on mixing coefficients and moment restrictions under which such weak convergence results hold. We treat this as a high-level assumption. Its validity can be established under Assumption~3 below, as shown in Proposition~\ref{prop:donsker} in the Supplemental Appendix.

Related work by \citet{Qu.2008} and \citet{SuXiao08} consider the quantile regression (QR) process under a martingale difference sequence (MDS) assumption on the errors. This assumption leads to a substantially simpler limiting process and, consequently, permits nuisance-parameter-free inference. Our setting allows for weak temporal dependence, under which these simplifications are generally unavailable.

\begin{assumption}
\label{assum_asymp_linear}
    As $T \rightarrow \infty$, the following linear representation holds uniformly in $\tau \in \mathcal{T}$:
\begin{equation*}
\sqrt{T}\left(\widehat{\beta}(\tau) - \beta(\tau)\right) = D(\tau)^{-1} \frac{1}{\sqrt{T}} \sum_{t=1}^{T}\left(\tau - I(e_{t}(\tau) \leq 0)\right) x_{t} + r_{T},
\end{equation*}
where the remainder term $r_{T} = o_{p}(1)$ uniformly in $\tau$. 
\end{assumption}

This is a high-level assumption. Under standard conditions for QR, the above linear representation can be rigorously established. We refer the reader to \citet{GalvaoKatoMontesRojasOlmo14} for primitive sufficient conditions in a $\beta$-mixing setting. Under Assumption \ref{assum_asymp_linear}, we obtain the following limiting result
\begin{equation*}
\sqrt{T}\left(\widehat{\beta}(\tau)-\beta_0(\tau)\right)\Rightarrow D(\tau)^{-1} G(1,\tau)  \quad {\rm in} \ \  \ell^{\infty} (\mathcal{T}).
\end{equation*}

We now present the main results of this section. Define the tied-down version of the Gaussian process by $\widetilde{G}(r, \tau) = G(r, \tau) - r G(1, \tau)$, and consider the following assumption 

\begin{assumption}
\label{assum_reg}
\begin{enumerate}
\item[i.] $\textnormal{E}\left[x_{tj}^{4}\right] < \infty$ for all $j=1,\ldots,p$, where $x_{tj}$ be the $j$-th component in $x_{t} = \left(1,x_{t2},\ldots,x_{tp}\right)^{\prime}$. 

\vskip-0.1in

\item[ii.] The process $(y_{t},x_{t}^{\prime})^{\prime}$ is strictly stationary and $\beta$-mixing with coefficient $\beta(l)$ such that there exist constants $a \in (0,1)$ and $B \geq 0$ such that $\beta(l) \leq B a^{l}$. 

\vskip-0.1in

\item[iii.] The conditional density $f(y|x)$ is continuously differentiable w.r.t $y$ for all values of $x \in \mathcal{X}$. There exists a constant $\bar{f}>0$ such that $f(y|x) \leq \bar{f}$ for all $(y,x) \in R \times \mathcal{X}$. There exists $c>0$ such that $f(y|x) > c$ for all $y \in [Q(\epsilon|x),Q(1-\epsilon|x)]$ and $x \in \mathcal{X}$.

\vskip-0.1in

\item[iv.] There exists a constant $\check{f}>0$ such that $f(y_{t},y_{s}|x_{t},x_{s}) \leq \check{f}$ for all values of $y_{t}$, $y_{s}$, $x_{t}$, $x_{s}$. 

\vskip-0.1in

\item[v.] There exists $c>0$ such that $|f^{\prime}(y|x_{t})| < c$ for all $y \in [Q(\epsilon|x),Q(1-\epsilon|x)]$ and $x \in \mathcal{X}$.

\vskip-0.1in

\item[vi.] $\textnormal{plim}_{T \to \infty} T^{-1} \sum_{t=1}^{[rT]} f(0|x_{t}) x_{t} x_{t}^{\prime} = r D(\tau)$ uniformly in $r$ with $D(\tau)$ positive definite.

\vskip-0.1in

\item[vii.] $\textnormal{plim}_{T \to \infty} T^{-1}\sum_{t=1}^{[rT]} x_{tl} x_{ti} x_{tj} = r \sigma_{lij}$ uniformly in $r$ for any $l,i,j = 1,\ldots, p$ where $\sigma_{lij} = \textnormal{E}[x_{tl} x_{ti} x_{tj}] < \infty$.
\end{enumerate}
\end{assumption}

Assumption \ref{assum_reg} is standard in the QR literature. We impose a $\beta$-mixing condition on the underlying process because our analysis relies on the uniform central limit theorem for $\beta$-mixing processes developed in \citet{arcones1994central}. 
Assumption \ref{assum_reg} also includes regularity conditions on the regressors and the underlying density function, ensuring that the limiting variance-covariance matrix is well defined. Assumption 3(vi) is standard in the fixed-smoothing literature. Because $x_{t}$ includes a constant regressor, Assumption 3(vii) includes $\textnormal{plim}_{T \to \infty} T^{-1}\sum_{t=1}^{[rT]} x_{t} x_{t}^{\prime} = r Q$ as a special case, where $Q = \textnormal{E}\left[x_{t} x_{t}^{\prime} \right]$. Assumptions 3(v) and 3(vii) are used to control the effect of the estimation error in the QR coefficients.

We now present the main theoretical results. The proofs of all theorems and propositions are provided in the Supplemental Appendix. Under the above assumptions, the partial sum empirical process, $\widehat{S}(t,\tau)= T^{-1/2} \sum_{t=1}^{[r T]} \hat{z}_{t}(\tau)$, satisfies a functional central limit theorem, as stated in the next result.

\begin{theorem}
\label{thm_1}
Suppose Assumptions \ref{assum_gaussian_limit}, \ref{assum_asymp_linear}, and \ref{assum_reg} hold. As $T \to \infty$, 
\begin{enumerate}
\item $ \widehat{S}([r T],\tau) = S([r T],\tau) -  r S(T,\tau) + u_{T} \, $ where $\, u_{T}  = o_{p}(1)$ uniformly in $(r, \tau) \in [0,1] \times \mathcal{T}$. 
\item If the remainder term in Assumption \ref{assum_asymp_linear} satisfies $r_{T} = o_{p}(T^{-1/4} \log(T))$, then $\, u_{T} = O_{p}(T^{-1/4} \log(T))$ uniformly in $(r, \tau) \in [0,1] \times \mathcal{T}$. 
\item  $\widehat{S}([r T],\tau) \Rightarrow G(r, \tau) - r G(1, \tau) = \widetilde{G}(r, \tau)$.
\end{enumerate}
\end{theorem}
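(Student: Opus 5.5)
We decompose the estimated score around the true score and handle the resulting empirical-process term. Write $\hat z_t(\tau) = (\tau - I(y_t \le x_t'\widehat\beta(\tau)))x_t$, so that
\begin{equation*}
\widehat S([rT],\tau) = S([rT],\tau) - \frac{1}{\sqrt T}\sum_{t=1}^{[rT]}\Big(I(y_t \le x_t'\widehat\beta(\tau)) - I(y_t\le x_t'\beta_0(\tau))\Big)x_t .
\end{equation*}
For $\delta \in \mathbb{R}^p$, define the process
\begin{equation*}
V_T(r,\tau,\delta) = \frac{1}{\sqrt T}\sum_{t=1}^{[rT]}\Big[\big(I(y_t \le x_t'(\beta_0(\tau)+\delta/\sqrt T)) - I(y_t\le x_t'\beta_0(\tau))\big)x_t - \big(F(x_t'(\beta_0(\tau)+\delta/\sqrt T)|x_t)-\tau\big)x_t\Big].
\end{equation*}
The subtracted term is the sum of the conditional means; its sample analogue is centred by conditioning on $x_t$. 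The correction term then splits into $V_T$ plus a drift, evaluated at $\widehat\delta(\tau)=\sqrt T(\widehat\beta(\tau)-\beta_0(\tau))$.

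\textbf{Step 1: the drift.} Expand $F(\cdot|x_t)$ to second order using Assumptions 3(iii) and 3(v). The drift equals
\begin{equation*}
\Big(T^{-1}\sum_{t=1}^{[rT]} f(0|x_t)x_tx_t'\Big)\widehat\delta(\tau) + O\Big(T^{-3/2}\sum_t \|x_t\|^3\Big)\|\widehat\delta(\tau)\|^2 .
\end{equation*}
By Assumption 3(vi) the first factor is $rD(\tau)+o_p(1)$ uniformly in $r$. By Assumption 3(vii) the remainder is $O_p(T^{-1/2})$, using the uniform tightness $\sup_\tau\|\widehat\delta(\tau)\|=O_p(1)$, which follows from Assumptions 1--2. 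Substituting the Bahadar-type representation of Assumption 2 gives $rD(\tau)\widehat\delta(\tau) = rS(T,\tau)+r\,D(\tau)r_T$. This is the source of the tied-down term $-rS(T,\tau)$.

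\textbf{Step 2: the main obstacle.} We must show
\begin{equation*}
\sup_{r,\tau,\|\delta\|\le C}\|V_T(r,\tau,\delta)\| = o_p(1),
\end{equation*}
a stochastic equicontinuity statement for a partial-sum empirical process under $\beta$-mixing. We view $V_T$ as indexed by the class of functions
\begin{equation*}
(y,x,t)\mapsto I(t\le [rT])\big(I(y\le x'b_1)-I(y\le x'b_2)\big)x_j .
\end{equation*}
This is a VC-type class: it is a product of indicators of half-spaces with an envelope $|x_j|$, which has finite fourth moment by Assumption 3(i). Its $L^2$ size is small when $b_1-b_2=O(T^{-1/2})$, because the conditional density is bounded (Assumptions 3(iii)--(iv)). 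We then apply the maximal inequality and uniform CLT for $\beta$-mixing processes of \citet{arcones1994central}; geometric mixing (Assumption 3(ii)) is what makes this available. The indexing by $r$ can be handled by a blocking argument together with the monotonicity of partial sums, or by treating $t/T$ as part of the function. A bracketing bound with $L^2$ radius $\sigma_T \asymp T^{-1/4}$ gives an expected supremum of order $\sigma_T\sqrt{\log T}$, up to log factors. This yields $o_p(1)$ for part 1.

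\textbf{Step 3: rate and weak convergence.} Sharpening Step 2 gives the explicit rate $O_p(T^{-1/4}\log T)$ for the supremum of $V_T$. Combined with the assumption $r_T=o_p(T^{-1/4}\log T)$ and the $O_p(T^{-1/2})$ drift remainder, this gives $u_T=O_p(T^{-1/4}\log T)$, which is part 2. For part 3, Assumption 1 says $S([rT],\tau)\Rightarrow G(r,\tau)$ in $l^\infty([0,1]\times\mathcal{T})$. The map $g\mapsto g(r,\tau)-rg(1,\tau)$ is continuous on this space, so the continuous mapping theorem and Slutsky's lemma, applied with part 1, give $\widehat S\Rightarrow \widetilde G$.

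The hardest step is Step 2: controlling the entropy of the indicator class uniformly over $(r,\tau,\delta)$ under dependence, and doing so with the $T^{-1/4}\log T$ rate.
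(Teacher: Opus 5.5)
Your architecture matches the paper's. You split $\widehat S-S$ the same way, into a conditionally centred term (the paper's $R(r,\tau,\xi)$, your $V_T$) and a drift $U(r,\tau,\xi)$. You use the same Taylor expansion of the drift, the same Bahadur step producing $-rS(T,\tau)$, and continuous mapping for part 3.

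\textbf{The central step differs.} The paper controls $R$ by hand:
\begin{itemize}
\item it splits $x_t$ into positive and negative parts;
\item it grids $\mathcal{T}$ with mesh $T^{-1/2-d}$ and uses monotonicity in $\tau$ to sandwich $R$ between grid values plus the corrections $B_1,B_2$;
\item it restricts $r$ to $k/T$ and covers $\Delta$ by cubes of side $T^{-1/4}$;
\item it combines the pointwise variance bound $O(T^{-1/2})$ with a Bernstein inequality for geometrically mixing sequences and a union bound.
\end{itemize}
You instead let $b$ range freely in a VC-type class. Then $\tau$-uniformity needs no grid, no sign reduction and no monotone bracketing, and the $L^2$ radius $T^{-1/4}$ follows directly from the bounded density.

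\textbf{Part 1 is fine, with two additions.} Asymptotic equicontinuity for $\beta$-mixing data (\citet{arcones1994central}), applied to $\{x_jI(y\le x'b)\}$, gives $o_p(1)$, and this is a legitimate and shorter route than the paper's. First, you must also handle $\{x_jF(x'b|x)\}$, or linearize it, because $V_T$ is centred by conditional rather than unconditional means. Second, you need a sequential extension in $r$. Your blocking option works: bound the oscillation within each of $N$ blocks by a nonnegative partial sum whose mean is $O(1/N)$ uniformly in $(\tau,\delta)$. A Donsker-to-functional-Donsker result such as Lemma~\ref{lemma_fn_donsker} would also work. ``Treating $t/T$ as part of the function'' does not fit Arcones--Yu, because $(y_t,x_t,t)$ is not stationary.

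\textbf{Part 2 has a genuine gap.} ``Sharpening Step 2'' is asserted, not shown, and the uniform CLT you cite gives no rate. Three specific problems:
\begin{itemize}
\item The bound $E\sup\lesssim\sigma_T\sqrt{\log T}$ is the i.i.d.\ bound for bounded classes. With $\beta$-mixing data and an envelope having only four moments, you need a coupling or blocking maximal inequality plus a truncation of $x_t$.
\item An expectation bound at each fixed $r$ does not control the maximum over the $T$ values of $r$; Markov plus a union bound loses a polynomial factor.
\item ``Up to log factors'' is not enough, since part 2 claims a specific power of $\log T$.
\end{itemize}
The paper's Lemma~\ref{lemma-beta-uf} supplies the missing mechanism. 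After truncating via $\max_t\|x_t\|\le BT^{1/4-\kappa}$, Bernstein gives tails $\exp(-cM^2\log^2T)$. These absorb the polynomial count $N(r)N(\tau)N(\xi)$ of grid points, which yields exactly $T^{-1/4}\log T$ and handles $r$ for free.

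\textbf{The drift also needs a rate.} In your Step 1, the term $\bigl(T^{-1}\sum_{t\le[rT]}f(0|x_t)x_tx_t'-rD(\tau)\bigr)\widehat\delta(\tau)$ is only $o_p(1)$. Part 2 needs it to be $O_p(T^{-1/4}\log T)$ uniformly in $(r,\tau)$, and the plim in Assumption 3(vi) alone does not give that. The paper asserts $O_p(T^{-1/2})$ in Lemma~\ref{lemma-U}. That rate would follow from a functional CLT for $f(0|x_t)x_tx_t'$ under the mixing and moment conditions, made uniform in $\tau$.
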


\vskip 0.2in

\begin{remark}
\citet{bai1994weak} and \citet{Qu.2008} derive analogous results for the partial sum empirical process constructed from estimated errors under an MDS assumption. Theorem~\ref{thm_1} generalizes their findings to allow for weakly dependent errors. 
\end{remark}

\begin{remark}
  Theorem~\ref{thm_1} holds regardless of the smoothing asymptotics adopted. Consequently, test statistics constructed as functionals of the partial-sum empirical process generally have limiting distributions that depend on nuisance parameters under both small-b and fixed-b approximations. 
\end{remark}

In the next two theorems, we establish fixed-b and fixed-K asymptotic limits for the kernel and OS HAR variance estimators, respectively, uniform over quantile levels. Both results are new to the literature.

\begin{theorem}
    \label{thm_2}
    Suppose Assumptions \ref{assum_gaussian_limit}, \ref{assum_asymp_linear}, and \ref{assum_reg} hold. 
  Let $M = bT$ for fixed values $b\in (0,1]$. As $T \to \infty$, $ \widehat{\Lambda}^{\rm KE}({\tau}) \Rightarrow  P\left(\widetilde{G}(r,\tau),b\right)$ in $l^\infty(\mathcal{T}) $, where
\begin{enumerate}
    \item If $\mathcal{K}(\cdot)$ is twice continuously differentiable everywhere and $\mathcal{K}''(\cdot)$ is uniformly bounded,
    \begin{align*}
        P(\widetilde{G}(r,\tau),b)= -&\frac{1}{b^2}\int_0^1\int_0^1 \mathcal{K}''\left(\frac{r-s}{b} \right) \widetilde{G}(r,\tau) \widetilde{G}(s,\tau)^{\prime} dr ds .
    \end{align*}
    \item If $\mathcal{K}(x)$ is continuous, $\mathcal{K}(x)=0$ for $|x|\geq 1$, $\mathcal{K}(.)$ is twice continuously differentiable everywhere except for $|x|=1$, and $\mathcal{K}'\_(x)$, the derivative of $\mathcal{K}(x)$ from the left, is bounded,
    \begin{align*}
        P(\widetilde{G}(r,\tau),b)=- &\frac{1}{b^2}\int\int_{|r-s|<b} \mathcal{K}''\left(\frac{r-s}{b} \right) \widetilde{G}(r,\tau) \widetilde{G}(s,\tau)^{\prime} dr ds  \\
        &+ \frac{\mathcal{K}'\_(1)}{b} \int_0^{1-b} \left( \widetilde{G}(r+b,\tau) \widetilde{G}(r,\tau)^{\prime}+\widetilde{G}(r,\tau) \widetilde{G}(r+b,\tau)^{\prime} \right) dr .
    \end{align*}

    \item If $\mathcal{K}(\cdot)$ is the Bartlett kernel,
    \begin{align*}
    P(\widetilde{G}(r,\tau),b)= &\frac{2}{b}\int_0^1 \widetilde{G}(r,\tau) \widetilde{G}(r,\tau)^{\prime} dr \\
    &-\frac{1}{b}\int_0^{1-b} \widetilde{G}(r,\tau)\widetilde{G}(r+b,\tau)' dr  - \frac{1}{b}\int_0^{1-b} \widetilde{G}(r+b,\tau) \widetilde{G}(r,\tau)^{\prime}dr .
\end{align*}
\end{enumerate}
\end{theorem}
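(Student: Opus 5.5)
The plan is the standard Kiefer--Vogelsang argument: write $\widehat{\Lambda}^{\rm KE}(\tau)$ as a continuous functional of the partial-sum process $\widehat{S}([rT],\tau)$, uniformly in $\tau$, and then apply the continuous mapping theorem using Theorem~\ref{thm_1}(3). Set $\widehat{S}_t(\tau)=T^{-1/2}\sum_{j=1}^{t}\widehat z_j(\tau)$ with $\widehat S_0=0$. Summation by parts twice gives the purely algebraic identity
\begin{align*}
\widehat{\Lambda}^{\rm KE}(\tau)=\sum_{t=1}^{T-1}\sum_{s=1}^{T-1}\Big[2\mathcal{K}\Big(\tfrac{t-s}{M}\Big)-\mathcal{K}\Big(\tfrac{t-s-1}{M}\Big)-\mathcal{K}\Big(\tfrac{t-s+1}{M}\Big)\Big]\widehat S_t(\tau)\widehat S_s(\tau)' + \text{(boundary terms)}.
\end{align*}
The boundary terms involve $\widehat S_T(\tau)$. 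I will use Theorem~\ref{thm_1}(1), which gives $\widehat S_T(\tau)=S(T,\tau)-S(T,\tau)+u_T=o_p(1)$ uniformly in $\tau$. Since $|\mathcal{K}|\le1$, these boundary terms are bounded by $\sup_{t,\tau}|\widehat S_t|\cdot\sup_\tau|\widehat S_T|=O_p(1)o_p(1)$, uniformly in $\tau$. This is why no recentering of the scores is needed.

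\textbf{Case 1.} Write $D_T(x)=T^2[2\mathcal{K}(x)-\mathcal{K}(x-1/(bT))-\mathcal{K}(x+1/(bT))]$. For $\mathcal{K}$ twice continuously differentiable with bounded $\mathcal{K}''$, $D_T(x)\to -b^{-2}\mathcal{K}''(x)$ uniformly in $x$. The main term then equals
\begin{align*}
\int_0^1\!\!\int_0^1 D_T\Big(\tfrac{[rT]-[sT]}{bT}\Big)\widehat S([rT],\tau)\widehat S([sT],\tau)'\,dr\,ds,
\end{align*}
and this is $-b^{-2}\iint\mathcal{K}''((r-s)/b)\widehat S\widehat S'\,dr\,ds+o_p(1)$, uniformly in $\tau$ because $\sup_{r,\tau}\|\widehat S\|=O_p(1)$. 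The map $g\mapsto -b^{-2}\iint\mathcal{K}''((r-s)/b)g(r,\cdot)g(s,\cdot)'\,dr\,ds$ from $l^\infty([0,1]\times\mathcal T)$ to $l^\infty(\mathcal T)$ is continuous, since it is bilinear with a bounded kernel. The continuous mapping theorem then yields the limit $P(\widetilde G,b)$ in $l^\infty(\mathcal T)$.

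\textbf{Case 2.} When $\mathcal{K}$ is truncated with a kink at $|x|=1$, split the double sum into the part with $|t-s|<M$ and the diagonal bands $|t-s|=M$ (or the adjacent lags). On $|t-s|<M$ the same second-difference argument applies. On the bands, the second difference is $O(T^{-1})$ rather than $O(T^{-2})$, with limit $T\cdot[\cdot]\to \mathcal{K}'_-(1)/b$ by the definition of the left derivative. These band terms produce $\frac{\mathcal{K}'_-(1)}{b}\int_0^{1-b}(\widehat S(r+b)\widehat S(r)'+\widehat S(r)\widehat S(r+b)')\,dr$ plus a uniformly negligible error, and the continuous mapping theorem again applies.

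\textbf{Case 3 (Bartlett).} Here $\mathcal{K}''=0$ off the kinks. The kink at $0$ contributes $(2/b)\,T^{-1}\sum_t\widehat S_t\widehat S_t'$, and the kinks at $\pm1$ contribute the two cross-product terms, giving the stated formula.

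The main obstacle is controlling the discretization and boundary errors uniformly over $\tau$. This needs (i) tightness of $\widehat S$ in $l^\infty([0,1]\times\mathcal T)$, which Theorem~\ref{thm_1}(3) provides, so that $\sup_{r,\tau}\|\widehat S\|=O_p(1)$, and (ii) uniform approximation of the kernel second differences by their limits. In Case 2, (ii) requires care at the kink, where $[bT]$ versus $bT$ rounding must be handled; I would bound it by the modulus of continuity of $\widehat S$ in $r$, which is asymptotically small uniformly in $\tau$ by asymptotic equicontinuity.
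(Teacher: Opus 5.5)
Your argument follows the paper's route. You use summation by parts, show the edge terms vanish because $\widehat S_T(\tau)=o_p(1)$, and pass to the limit by continuous mapping from $\widehat S\Rightarrow\widetilde G$ (Theorem~\ref{thm_1}(3)). Reducing the discretized functional to a fixed continuous functional plus a uniformly $o_p(1)$ error is equivalent to the paper's extended-CMT step. Getting $\widehat S_T(\tau)=u_T$ from Theorem~\ref{thm_1}(1) at $r=1$, instead of from the QR subgradient condition, is fine and gives uniformity in $\tau$ directly.

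One justification is wrong, though easily repaired: $|\mathcal K|\le 1$ does not control the edge terms. Write $k_{ts}=\mathcal K((t-s)/M)$. The term $\sum_{t=1}^{T-1}(k_{tT}-k_{t+1,T})\widehat S_t(\tau)\widehat S_T(\tau)'$ is bounded by $\sum_{t}|k_{tT}-k_{t+1,T}|\cdot\sup_{t,\tau}\Vert\widehat S_t(\tau)\Vert\cdot\sup_\tau\Vert\widehat S_T(\tau)\Vert$. Boundedness of $\mathcal K$ only gives $\sum_t|k_{tT}-k_{t+1,T}|\le 2(T-1)$, which leaves a factor of order $T$ that nothing available offsets. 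The same problem arises if you stop after one summation by parts and bound $\sum_t k_{tT}\widehat z_t(\tau)/\sqrt{T}$ directly.

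What you need is that $\mathcal K$ is Lipschitz (or at least of bounded variation) on $[-1/b,1/b]$. Then $|k_{tT}-k_{t+1,T}|\le C/(bT)$, the sum is $O(1)$, and the edge terms are $O_p(1)\,o_p(1)$ uniformly in $\tau$. This holds under each case's hypotheses:
\begin{itemize}
\item In Case~1, $\mathcal K'$ is continuous, hence bounded, on that compact set.
\item In Case~2, the bounded one-sided derivative makes $\mathcal K$ Lipschitz.
\item The Bartlett kernel is Lipschitz.
\end{itemize}
This is exactly what the paper uses when it writes $T(k_{iT}-k_{i+1,T})\to b^{-1}\mathcal K'((1-r)/b)$. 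With that substitution your proof goes through.
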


\begin{theorem}
    \label{thm:fixedK}
    Under the same setting as Theorem \ref{thm_1}, suppose that $\sup_{T\geq 1, r\in[0,1-1/T]}\left|\frac{\phi_k(r+1/T) - \phi_k(r) }{1/T}\right|<\infty$. Then, as $T \to \infty$ with a fixed $K<\infty$, 
    \begin{equation*}
     \widehat{\Lambda}^{\rm OS}(\tau)
     \Rightarrow \frac{1}{K}\sum_{k=1}^KG_k(1,\tau)G_k(1,\tau)^{\prime}\quad {\rm in} \ \ \ell^{\infty}(\mathcal{T}),
     \end{equation*}
    where $\{G_k(1,\tau)\}_{k=1}^{K}$ are independent copies of $G(1,\tau)$.
\end{theorem}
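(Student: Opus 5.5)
The plan is to write each $\Phi_T^k(\tau)$ as a continuous functional of the partial-sum process $\widehat{S}([rT],\tau)$ by summation by parts, and then apply Theorem~\ref{thm_1} together with the continuous mapping theorem in $\ell^\infty(\mathcal{T})$. Summation by parts gives
\begin{align*}
\Phi_T^k(\tau) = \phi_k(1)\,\widehat{S}(T,\tau) - \sum_{t=1}^{T-1}\Big(\phi_k\big(\tfrac{t+1}{T}\big)-\phi_k\big(\tfrac{t}{T}\big)\Big)\widehat{S}(t,\tau).
\end{align*}
By Theorem~\ref{thm_1}(1), $\widehat{S}(T,\tau)=S(T,\tau)-S(T,\tau)+u_T=o_p(1)$ uniformly in $\tau$, so the first term is negligible. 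This is also why no recentering of the scores is needed.

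For the second term, I would substitute $\widehat{S}(t,\tau)=S(t,\tau)-\tfrac{t}{T}S(T,\tau)+u_T$. The remainder contributes at most
\begin{align*}
\sup_{t,\tau}|u_T|\cdot T\cdot\max_t\big|\phi_k(\tfrac{t+1}{T})-\phi_k(\tfrac{t}{T})\big| = o_p(1)\cdot O(1),
\end{align*}
using the Lipschitz-type condition in the statement. The main part is then $-\int_0^1 \phi_k'(r)\,\widetilde{S}_T(r,\tau)\,dr + o_p(1)$, with the difference quotient in place of $\phi_k'$. Here $\widetilde{S}_T(r,\tau)=S([rT],\tau)-rS(T,\tau)$ up to an $O(1/T)$ discrepancy in $r$. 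Rather than assume $\phi_k'$ exists, I would keep the Riemann–Stieltjes form $-\int_0^1 \widetilde{S}_T(r,\tau)\,d\phi_k(r)$. The stated condition makes $\phi_k$ Lipschitz, hence of bounded variation, and the map $g\mapsto \int g\,d\phi_k$ is then continuous from $\ell^\infty([0,1]\times\mathcal{T})$ to $\ell^\infty(\mathcal{T})$.

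By Assumption~\ref{assum_gaussian_limit}, Theorem~\ref{thm_1}(3), and the continuous mapping theorem, the vector $(\Phi_T^1,\ldots,\Phi_T^K)$ converges jointly in $\ell^\infty(\mathcal{T})^K$ to
\begin{align*}
\Big(-\int_0^1 \widetilde{G}(r,\tau)\,d\phi_k(r)\Big)_{k=1}^K .
\end{align*}
Integrating back by parts, using $\widetilde{G}(0,\tau)=\widetilde{G}(1,\tau)=0$ and $\int_0^1\phi_k=0$, this equals $\int_0^1 \phi_k(r)\,dG(r,\tau)$, the Wiener-type integral. The last identity can be checked via the covariance computation below, which avoids stochastic integration with respect to a $\tau$-indexed process.

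To identify the limit, note that the vector of limits is jointly mean-zero Gaussian in $(k,\tau)$. From \eqref{cov}, its covariance is
\begin{align*}
\textnormal{Cov}\Big(\int\phi_k\,dG(\cdot,\tau_1),\int\phi_l\,dG(\cdot,\tau_2)\Big)=\int_0^1\phi_k\phi_l\,dr\;\Lambda(\tau_1,\tau_2)=\delta_{kl}\,\Lambda(\tau_1,\tau_2).
\end{align*}
Mechanically, I would compute this by writing the limit as $-\int\widetilde{G}\,d\phi_k$, using $\textnormal{Cov}(\widetilde{G}(r,\tau_1),\widetilde{G}(s,\tau_2))=(r\wedge s-rs)\Lambda(\tau_1,\tau_2)$, and integrating by parts twice with orthonormality and mean zero of the $\phi_k$. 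Since the processes are Gaussian and uncorrelated across $k$, they are independent. Each has the law of $G(1,\cdot)$ on $\mathcal{T}$, whose covariance is $\Lambda(\tau_1,\tau_2)$.

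Finally, the map $(v_1,\ldots,v_K)\mapsto K^{-1}\sum_k v_kv_k'$ is continuous on $\ell^\infty(\mathcal{T})^K$ because products of bounded functions are continuous in the sup norm. The continuous mapping theorem therefore yields the stated limit.

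I expect the main obstacle to be making the continuous-mapping step rigorous in $\ell^\infty$: handling the discretization $[rT]/T$ versus $r$ uniformly in $\tau$, and the non-measurability issues of weak convergence in $\ell^\infty$. For both, I would rely on the asymptotic tightness and uniform equicontinuity of $S$ implied by Assumption~\ref{assum_gaussian_limit}, using the version of the continuous mapping theorem for Hoffmann-J{\o}rgensen weak convergence.
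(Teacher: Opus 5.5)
Your proposal is correct and follows essentially the same route as the paper. Both arguments use summation by parts to write $\Phi_T^k(\tau)$ as a Riemann--Stieltjes functional of $\widehat{S}$, apply the (extended) continuous mapping theorem, integrate by parts to identify the limit as $\int_0^1\phi_k\,dG(\cdot,\tau)$ with covariance $\mathbf{1}\{k=l\}\Lambda(\tau_1,\tau_2)$, and finish with continuous mapping to $K^{-1}\sum_k v_kv_k'$. Two of your steps are slightly cleaner than the paper's: you dispose of the boundary term $\phi_k(1)\widehat{S}(T,\tau)$ explicitly via Theorem~\ref{thm_1}(1), whereas the paper sets $\phi_k(1+1/T)=0$, which makes $\Delta_T\phi_k(1)=-T\phi_k(1)$ unbounded; and you check the covariance through the tied-down kernel $(r\wedge s-rs)\Lambda(\tau_1,\tau_2)$ rather than by the paper's informal appeal to It\^o's isometry.
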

\noindent The theorems show that under fixed-smoothing asymptotics, the long-run variance estimators no longer converge to the population covariance kernel; instead, they converge to random functionals of the limiting Gaussian process.

The next result presents the asymptotic distributions of the Wald and t-test statistics. Let $R_0 = R(\beta_0(\tau))$. Define the following objects for the limtis of kernel and OS HAR estimators:
\begin{align*}
    H^{\rm KE}(\tau) &= R_0 D(\tau)^{-1} P\left(\widetilde{G}(r,\tau),b\right)D(\tau)^{-1} R_0^{\prime},\\
     H^{\rm OS}(\tau) &= R_0 D(\tau)^{-1} \frac{1}{K}\sum_{k=1}^KG_k(1,\tau)G_k(1,\tau)^{\prime} D(\tau)^{-1} R_0^{\prime}.
\end{align*}

\begin{proposition}
\label{prop:fixedb}
   Suppose that, for $\iota = {\rm KE, OS}$: (1) $r(\cdot): \mathbb{R}^p\to\mathbb{R}^q$, $q<p$, is continuously differentiable and its derivative, $R(\cdot)$, has full rank $q$; (2) $\sup_{\tau\in\mathcal{T}}\Vert \widehat{D}(\tau) -{D}(\tau) \Vert = o_P(1)$; and (3) The smallest eigenvalue of $H^{\iota}(\tau,b)$ is bounded from below by some positive constant.
Then, under the null hypothesis, $r(\beta_0(\tau))=0$, and Assumptions \ref{assum_gaussian_limit}, \ref{assum_asymp_linear}, and \ref{assum_reg}, as $T\to\infty$, $M=bT$ for $b\in (0,1]$ for $\iota = {\rm KE}$ and fixed $K$ for $\iota = {\rm OS}$, in $l^\infty(\mathcal{T})$,
    \begin{align}
        \mathcal{W}^{\iota}(\tau)\Rightarrow & \left(R_0D(\tau)^{-1} G(1,\tau)\right)^{\prime}  \left[ H^{\iota}(\tau) \right]^{-1} \left(R_0D(\tau)^{-1} G(1,\tau)\right), \label{wald_limit}\\
        t^{\iota}(\tau) \Rightarrow & \frac{R_0D(\tau)^{-1} G(1,\tau)}{\sqrt{H^{\iota}(\tau)}}. \label{t_limit}
    \end{align}
\end{proposition}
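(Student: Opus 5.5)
The argument is a continuous-mapping argument. The key point is that the numerator of the test statistics and the HAR variance estimator must converge \emph{jointly} to functionals of the same Gaussian process $G(\cdot,\cdot)$, and the convergence must hold in $l^\infty(\mathcal{T})$.

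First I will write the numerator via the delta method uniformly in $\tau$. Under $H_0$, $r(\beta_0(\tau))=0$, and a mean-value expansion gives
\begin{equation*}
\sqrt{T}\,r(\widehat\beta(\tau)) = R(\bar\beta(\tau))\sqrt{T}(\widehat\beta(\tau)-\beta_0(\tau)).
\end{equation*}
Assumption \ref{assum_asymp_linear} gives $\sup_\tau\Vert\widehat\beta(\tau)-\beta_0(\tau)\Vert=O_p(T^{-1/2})$. Continuity of $R$ on a compact neighbourhood of $\{\beta_0(\tau):\tau\in\mathcal{T}\}$ then yields $\sup_\tau\Vert R(\bar\beta(\tau))-R_0\Vert=o_p(1)$, and the same holds for $R(\widehat\beta(\tau))$. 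Combining this with condition (2) and the uniform invertibility of $D(\tau)$, the numerator equals $R_0D(\tau)^{-1}S(T,\tau)+o_p(1)$ uniformly. The estimated Jacobians and $\widehat D(\tau)$ therefore enter only through uniformly consistent factors.

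The central step is joint convergence. By Theorem \ref{thm_1}(1), $\widehat S([rT],\tau)=S([rT],\tau)-rS(T,\tau)+o_p(1)$ uniformly.
\begin{itemize}
\item[] \emph{Kernel case.} The proof of Theorem \ref{thm_2} writes $\widehat\Lambda^{\rm KE}(\tau)$, after summation by parts, as a functional of $\widehat S(\cdot,\tau)$ alone. The three forms there are double integrals against $\mathcal{K}''$ and boundary terms, and each is continuous in the sup norm on $l^\infty([0,1]\times\mathcal{T})$.
\item[] \emph{OS case.} Summation by parts, using the Lipschitz condition on $\phi_k$ and $\int\phi_k=0$, expresses $\Phi_T^k(\tau)$ as a continuous linear functional of $\widehat S(\cdot,\tau)$, hence of $S(\cdot,\tau)$.
\end{itemize}
In both cases the pair (numerator, variance estimator) is a continuous map of the single process $S([r\cdot],\cdot)$ plus uniformly negligible terms. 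Assumption \ref{assum_gaussian_limit} and the continuous mapping theorem then give joint weak convergence in $l^\infty(\mathcal{T})$. The limit is $\bigl(R_0D^{-1}G(1,\tau),\,R_0D^{-1}P(\widetilde G,b)D^{-1}R_0'\bigr)$ in the kernel case, and the analogous object with $\int\phi_k\,dG$ in the OS case.

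For the OS case I also need to identify $G_k(1,\tau)=\int_0^1\phi_k(r)\,dG(r,\tau)$ as independent copies of $G(1,\tau)$. These are jointly Gaussian, and the covariance structure (\ref{cov}) factorises as $(r_1\wedge r_2)\Lambda(\tau_1,\tau_2)$. Hence $\mathrm{Cov}\bigl(\int\phi_j dG(\cdot,\tau_1),\int\phi_k dG(\cdot,\tau_2)\bigr)=\int\phi_j\phi_k\,\Lambda(\tau_1,\tau_2)$, which equals $\delta_{jk}\Lambda(\tau_1,\tau_2)$ by orthonormality. The same calculation with $\int\phi_k=0$ shows that these are independent of $G(1,\tau)$, since $\mathrm{Cov}(G(1,\tau_1),\int\phi_kdG(\cdot,\tau_2))=\int\phi_k\,\Lambda(\tau_1,\tau_2)=0$. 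The same factorisation shows in the kernel case that the tied-down $\widetilde G(\cdot,\tau)$ is independent of $G(1,\tau)$ as processes, because $\mathrm{Cov}(\widetilde G(r,\tau_1),G(1,\tau_2))=(r-r)\Lambda=0$. This independence is not needed for the weak convergence itself, but it is what makes the limit interpretable.

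Finally I will apply the continuous mapping theorem to $(a,H)\mapsto a'H^{-1}a$ and to $(a,H)\mapsto a/\sqrt H$. The main obstacle is that matrix inversion is continuous on $l^\infty(\mathcal{T})$ only on the set of matrix-valued functions whose smallest eigenvalue is bounded away from zero uniformly in $\tau$. Condition (3) is exactly what places the limit on that set. I will use the extended continuous mapping theorem, with maps defined on a neighbourhood and continuous at every point of the support of the limit. This handles the finite-$T$ estimator, whose eigenvalues may be small or which may even be singular; the set of $\tau$ where this happens has probability tending to zero uniformly in $\tau$, by the joint convergence in sup norm. Replacing the estimated $R(\widehat\beta)$ and $\widehat D$ by $R_0$ and $D$ then costs only $o_p(1)$ uniformly, by the first step together with Slutsky's lemma in $l^\infty(\mathcal{T})$. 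This delivers (\ref{wald_limit}) and (\ref{t_limit}).
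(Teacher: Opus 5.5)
Your proposal is correct and follows essentially the same route as the paper. Both use a uniform delta-method step for $\sqrt{T}\,r(\widehat\beta(\tau))$, the limits of $\widehat\Lambda^{\iota}(\tau)$ from Theorems \ref{thm_2} and \ref{thm:fixedK}, uniform consistency of $R(\widehat\beta(\tau))$ and $\widehat D(\tau)$, and condition (3) to make matrix inversion continuous before applying the continuous mapping theorem. The paper's proof only cites the two marginal limits and then applies the continuous mapping theorem, which leaves joint convergence implicit. Your argument supplies that step: you write the numerator and the variance estimator as functionals of the same partial-sum process $S$, and you check that the $G_k(1,\cdot)$ are independent of $G(1,\cdot)$ in the OS case. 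So your version is, if anything, more complete than the paper's.
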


As equations \eqref{wald_limit} and \eqref{t_limit} show, the limiting distributions of the test statistics are stochastic processes indexed by $\tau$ that depend on the long-run covariance kernel, $\Lambda(\tau_{1} , \tau_{2})$, of $G(r,\tau)$ through $G(1,\tau)$ and $H(\tau,b)$. Because of this dependence on $\Lambda(\tau_{1} , \tau_{2})$, these limiting processes are not asymptotically pivotal. The remainder of this paper is devoted to developing inference procedures that address this issue.

Readers familiar with fixed-smoothing literature would note that the limits given by \eqref{wald_limit} and \eqref{t_limit} contrast sharply with the conventional fixed-smoothing limits obtained in \citet{KieferVogelsang05,mueller-lrv,sun2013os}. In conditional mean models, nuisance parameters cancel out of the limiting distributions, yielding asymptotically pivotal test statistics. This cancellation is possible because the partial sum process for the conditional mean can be written as the product of a long-run variance and a standard Gaussian process, such as Brownian motion. Such a multiplicative decomposition does not extend directly, at least not in a straightforward way, to QR models across quantiles.

To illustrate this point, we note two special cases in which the limiting process admits substantial simplifications. First, under the MDS assumption considered by \citet{Qu.2008} and \citet{SuXiao08}, the long-run covariance kernel reduces to the short-run covariance kernel and the partial sum empirical process simplifies to a $p$-vector of independent Kiefer processes. Consequently, inference for the quantile regression process becomes nuisance-parameter-free. This simplification, however, relies critically on the absence of serial dependence in the errors and generally does not extend to weakly dependent data.

Second, under weak dependence, when attention is restricted to a single fixed quantile level $\tau$, the marginal limit process admits the representation $G(r,\tau)=\Gamma(\tau)W_p(r)$, where $\Gamma(\tau)\Gamma(\tau)^{\prime} = \Lambda(\tau,\tau)$. This representation underlies existing fixed-smoothing procedures for pointwise inference in QR.\footnote{See equation (16) in \citet{LeeLiaoSeoShin25} and Lemma 4 in Hoga and Schulz (2025). A closely related assumption is implicit in \cite{hwang2025har}, who assume a Gaussian limit for the series-weighted partial sum with a separable long-run variance matrix. Such an assumption is equivalent to $S([r T],\tau) \Rightarrow \Gamma(\tau) W_{p}(r)$ because this convergence is sufficient for a Gaussian limit with separable long-run variance. See \citet{sun2011robust,sun2013os}.} However, it is only a characterization of the marginal distribution at a given quantile level and does not extend to the dependence structure across quantiles because the covariance kernel generally does \textit{\textbf{not}} admit the factorization $\Lambda(\tau_1,\tau_2) = \Gamma(\tau_1)\Gamma(\tau_2)^{\prime}$.

The present paper considers process-level inference under weak dependence, where neither of the above simplifications is available. In general, the limiting Gaussian process cannot be represented as a $p$-vector of Kiefer processes, nor does it admit a separable representation of the form $G(r,\tau)=\Gamma(\tau)K(r,\tau)$. The reason is that the long-run covariance kernel $\Lambda(\tau_1,\tau_2)$ reflects the joint serial dependence of the quantile score process \textit{across} different quantile levels and is generally non-separable. Consequently, the nuisance parameters contained in $\Lambda(\tau_1,\tau_2)$ do not cancel from the fixed-smoothing limiting distributions. This is intuitively similar to the non-pivotal fixed-smoothing limits obtained by \cite{chen2024fixed} in panel regressions with two-way clustered components although the details are different.

\section{HAR Inference in QR}
\label{sec:inference}

This section develops inference procedures for two classes of hypotheses, a continuum of quantile levels and a finite number of quantile levels. To facilitate visualization, we also construct a simultaneous confidence band for slope coefficients across multiple quantile levels. In addition, we propose a test-optimal data-driven bandwidth selection rule for all tests considered in this section using the kernel‑HAR and OS‑HAR estimators.

The first class of hypotheses involves a continuum of quantile levels, $\tau \in \mathcal{T}$, whereas the second involves a finite collection of quantile levels, $\bar{\tau}_{m} = (\tau_1,\ldots,\tau_m)$. For the first class, we develop inference based on the quantile regression (QR) process and refer to the resulting procedures as \emph{uniform-in-$\tau$} tests. For the second class of hypotheses, we stack the moment conditions across the finite set of quantile levels and show that the resulting Wald statistics have asymptotically pivotal fixed-smoothing limits. We refer to these procedures as \emph{stack-Wald} tests. Specifically, we consider the following hypotheses:
\begin{enumerate}
    \item[(i)] For a continuum of quantile levels: 
        \begin{itemize}
            \item Significance. $H_0 : r\left(\beta(\tau)\right) = 0$ for all $\tau \in \mathcal{T}$, vs. $H_{A}: r\left(\beta(\tau)\right) \neq 0$ for some $\tau \in \mathcal{T}$.
            \item Homogeneity. For given $j=1,\ldots,p$, $H_{0}: \beta_{j}(\tau_{2}) = \beta_{j}(\tau_{1}) \; \text{for all} \; \tau_{1} \neq \tau_{2} \, \in \mathcal{T}$ against $H_{A}: \beta_{j}(\tau_{2}) \neq \beta_{j}(\tau_{1}) \; \text{for some} \; \tau_{1} \neq \tau_{2} \, \in \mathcal{T}$.
            \item Monotonicity. For given $j=1,\ldots,p$, $H_{0}: \beta_{j}(\tau_{2}) - \beta_{j}(\tau_{1}) \geq 0 \; \text{for all} \; \tau_{2} > \tau_{1} \, \in \mathcal{T}$ vs. $H_{A}: \beta_{j}(\tau_{2}) - \beta_{j}(\tau_{1}) < 0 \; \text{for some} \; \tau_{2} > \tau_{1} \, \in \mathcal{T}$.
        \end{itemize}
    \item[(ii)] For a finite number of quantile levels:
        \begin{itemize}
            \item $H_0: r\left(\beta(\bar\tau_m)\right) = 0$ and $H_A: r\left(\beta(\bar\tau_m)\right) \ne 0$, where $\beta(\bar \tau_m) = (\beta(\tau_{1})',\cdots,\beta(\tau_{m})')^{\prime}$.
        \end{itemize}
\end{enumerate}

The uniform-in-$\tau$ tests explicitly account for the long-run covariance kernel $\Lambda(\tau_1,\tau_2)$ by estimating it consistently and using the estimate to simulate sample paths of the limiting Gaussian process $G(r,\tau)$. These simulated paths are then used to approximate the distributions of functionals of $\mathcal{W}^{\iota}(\tau)$ and $t^{\iota}(\tau)$, thereby yielding critical values for the uniform test statistics. By contrast, the stack-Wald tests circumvent the nuisance-parameter problem by exploiting the asymptotically pivotal fixed-smoothing limit obtained after stacking the moment conditions across a finite set of quantile levels.

\subsection{Uniform-in-$\tau$ tests}
\label{sec:uniform.in.tau}

Consider a hypothesis $H_0 : r(\beta(\tau)) = 0$ for all $\tau \in \mathcal{T}$, vs. $H_{A}: r(\beta(\tau)) \neq 0$ for some $\tau \in \mathcal{T}$. Typical examples include $r(\beta(\tau)) = \beta_{1}(\tau)$ or $r(\beta(\tau)) = \beta_{1}(\tau) - \beta_2(\tau)$ for a t-test, and $r(\beta(\tau)) = \left(\beta_{1}(\tau) , \beta_2(\tau)\right)$ for a Wald test. Because the QR cannot be estimated for a continuum of values for $\tau$, we first describe a two-step algorithm that discretizes the computation of $\widehat{\beta}(\tau)$, $\mathcal{W}^{\iota}(\tau)$, and $t^{\iota}(\tau)$.

\textbf{Step 1:} Partition $\mathcal{T}$ into a grid of $n$ equally spaced quantile levels $\{\tau_{1},\ldots,\tau_{n}\}$. 
For each $j \in \{1,\ldots,n\}$, solve the minimization problem \eqref{eq-beta.hat} to obtain $\widehat{\beta}(\tau_{j})$. Using estimators in \eqref{eq-D.hat} and \eqref{eq-Lambda.hat}, compute $\widehat{D}(\tau_{j})$ and $\widehat{\Lambda}(\tau_{j})$, which in turn yield $\widehat{\Sigma}(\tau_{j})$. Construct the test statistics $\mathcal{W}^{\iota}(\tau_{j})$ and $t^{\iota}(\tau_{j})$ using \eqref{eq_wald} and \eqref{eq_t-stat}. 

\textbf{Step 2:} Apply linear interpolation to obtain
\begin{align*}
\widehat{\beta}(\tau) &= a(\tau) \widehat{\beta}(\tau_{j}) + (1-a(\tau)) \widehat{\beta}(\tau_{j+1}), \\
\mathcal{W}^{\iota}(\tau) &= a(\tau) \mathcal{W}^{\iota}(\tau_{j}) + (1-a(\tau)) \mathcal{W}^{\iota}(\tau_{j+1}),\\
t^{\iota}(\tau) &= a(\tau) t^{\iota}(\tau_{j}) + (1-a(\tau)) t^{\iota}(\tau_{j+1}),
\end{align*}%
where $a(\tau) = (\tau_{j+1} - \tau)/(\tau_{j+1} - \tau_{j})$ for $\tau \in [\tau_{j},\tau_{j+1}]$. 

The number of grid points, $n$, should be large enough to ensure accurate interpolation. In practice, larger values of $n$ could be used with larger sample sizes, $T$. \citet{NeocleousPortnoy08} show that if $n$ increases faster than $T^{1/4}$, linear interpolation does not result in any efficiency loss asymptotically. In this case, the interpolated estimator $\widehat{\beta}(\tau)$ achieves the same asymptotic efficiency as the estimator obtained by directly estimating $\beta(\tau)$.

We now turn to simulating the limiting stochastic processes of $\mathcal{W}^{\iota}(\tau)$ and $t^{\iota}(\tau)$ which depend on $G(r, \tau)$, a  Gaussian process that depends on the long-run covariance kernel $\Lambda(\tau_1,\tau_{2})$. Valid inference is based on generating sample paths from $G(r, \tau)$ by consistently estimating $\Lambda(\tau_1,\tau_{2})$ using several methods including HAC estimators or parametric approximations.

To illustrate the uniform-in-$\tau$ method, consider first the infeasible benchmark in which the long-run covariance kernel is known. When $G(r, \tau)$ is a square-integrable stochastic process with a continuous covariance function, $\Lambda(\tau_{1},\tau_{2})$ is a Mercer kernel, and can be represented as $ \Lambda(\tau_{1},\tau_{2}) = \sum_{l=1}^{\infty} \lambda_{l} e_{l}(\tau_1) e_{l}(\tau_2)'$ where $\lambda_{l} \geq 0$ are the eigenvalues and $e_{l}(\tau)$ are the associated orthonormal eigenfunctions. Under this representation, the Gaussian process $G(r, \tau)$ admits the Karhunen–Lo\`{e}ve expansion. Let $W_{l}(r)$ be independent Brownian motions, then
\begin{equation*}
G(r, \tau) = \sum_{l=1}^{\infty} \sqrt{\lambda_{l}}W_{l}(r) e_{l}(\tau).
\end{equation*}
This representation is valid because $G(r, \tau)$ is a mean-zero Gaussian process and the series representation reproduces its mean—it is a linear combination of independent, mean-zero normal variables—and covariance 
structure:
\begin{equation*}
\textnormal{Cov}\left(G(r_{1}, \tau_{1}), G(r_{2}, \tau_{2}) \right) = (r_{1} \wedge r_{2}) \sum_{l=1}^{\infty} \lambda_{l} e_{l}(\tau_1) e_{l}(\tau_2)' = (r_{1} \wedge r_{2}) \Lambda(\tau_{1},\tau_{2}).
\end{equation*}

The infinite series can be approximated using the leading $L$ eigenvalues and eigenfunctions: $ G(r, \tau) \approx \sum_{l=1}^{L} \sqrt{\lambda_{l}} W_{l}(r) e_{l}(\tau)$. Let $\widehat{\lambda}_{l}$ and $ \widehat{e}_{l}(\tau)$ denote some feasible approximations to the eigenvalues and eigenfunctions of the covariance kernel (see below). Then, the Gaussian limit process can be approximated by 
$G(r, \tau) \approx \sum_{l=1}^{L}  \sqrt{\widehat{\lambda}_{l}} W_{l}(r)\widehat{e}_{l}(\tau)  $. By generating independent sample paths of the Brownian motions $W_{l}(\cdot)$, $l = 1,\ldots,L$, one can simulate sample paths of $G(r, \tau)$. 

Note that the covariance kernel $\Lambda(\tau_1,\tau_2)$ and its eigenfunction ${e}_{l}(\tau)$ are defined on a continuum of quantile levels. To obtain the feasible approximations for ${\lambda}_{l}$ and ${e}_{l}(\tau)$, we approximate the covariance kernel by a sequence of growing finite-dimensional covariance matrix. In practice, let $\bar{\tau}_n = \{\tau_1,...,\tau_n\}$ denote a grid of quantile levels, and define the associated $np\times np$ covariance matrix:
\begin{align*}
   \Lambda(\bar{\tau}_{n},\bar{\tau}_{n}):= \left[
\begin{array}{ccc}
\Lambda(\tau_{1}, \tau_{1}) & \cdots & \Lambda(\tau_{1}, \tau_{n}) \\
\vdots & \vdots & \vdots \\
\Lambda(\tau_{n}, \tau_{1}) & \cdots & \Lambda(\tau_{n}, \tau_{n}) \\
\end{array}
\right],
\end{align*}%
where the $(i,j)$-th block, $\Lambda(\tau_{i},\tau_{j})$, is defined in (\ref{eq:cov_kernel}). To consistently estimate the covariance matrix, we can use either the kernel or OS variance estimator, and we illustrate using the kernel estimator. Let $\widehat{\Lambda}^{\rm KE}(\bar\tau_{n} , \bar\tau_{n})$ denote the HAC variance matrix estimator with the $(i,j)$-th block given by:
\begin{align}
    \widehat{\Lambda}^{\rm KE}(\tau_i,\tau_j) = \frac{1}{T}\sum_{t=1}^{T}\sum_{s=1}^{T}\mathcal{K}\left(\frac{t-s}{M}\right)\widehat{z}_{t}(\tau_i)\widehat{z}_{s}(\tau_j)^{\prime}.\label{lambda_est}
\end{align}

For estimation of this nuisance parameter, we adopt the HAC procedure of \cite{galvao2024hac}, employing a small-b bandwidth choice with $b  = o(1)$.\footnote{Consistency of the HAC estimator at fixed quantile levels under a small-b approximation is established in \cite{galvao2024hac}. While our theoretical development focuses on a fixed-b limits for tests that are uniform in $\tau$, we additionally use the small-b consistency of the HAC estimator to estimate nuisance parameters needed to simulate the non-pivotal fixed-b limits. A similar approach by \cite{chen2024fixed} lead to improved finite sample performance in two-way clustered panel settings.} 
Let $\widehat{\Lambda}^{\rm KE}(\bar\tau_{n} , \bar\tau_{n}) = \widehat{\mathcal{E}}_n \widehat\Lambda_n  \widehat{\mathcal{E}}^{\prime}_n$ denote the eigenvalue decomposition, where $\widehat{\mathcal{E}}_n$ contains eigenvectors, and $\widehat\Lambda_n$ is the diagonal matrix of eigenvalues. Let $(\widehat{e}_{l}(\tau_1)',\ldots,\widehat{e}_{l}(\tau_n)')^{\prime}$ denote the stacked eigenvectors associated with the $l$-th largest eigenvalues. For $\tau \in [\tau_{j},\tau_{j+1}]$, define its linear interpolation by 
$\widehat{e}_{l}(\tau) = a(\tau) \widehat{e}_{l}(\tau_{j}) + (1-a(\tau))\widehat{e}_{l}(\tau_{j+1})$. Sample paths of $G(r,\tau)$ can be simulated over a grid of $r$ values using the first $L$ eigenvectors associated with the $L$ largest eigenvalues. The truncation level $L$ serves as a tuning parameter and may be selected using data-driven criteria, as explained in Section~\ref{sec:sim}. The tied-down version of the process can be obtained as $\widetilde{G}(r, \tau) = G(r, \tau) - r G(1, \tau)$. Finally, combining the simulated sample paths with the consistent estimator $\widehat{D}(\tau)$, we can approximate the distributions of the test statistics in \eqref{wald_limit} and \eqref{t_limit}. A detailed implementation algorithm is given below. 

\vskip 0.1in

\begin{algorithm}[Uniform-in-$\tau$ Tests]
\label{alg:uniform_tests}
    \begin{enumerate}
	\item[(i)] Compute the HAC estimator $\widehat{\Lambda}^{\rm KE}(\tau_i,\tau_j)$ of ${\Lambda}(\tau_i,\tau_j)$, for all $i,j \in (1,\ldots,n)$, and construct the $np\times np$ matrix $\widehat{\Lambda}^{\rm KE}(\bar{\tau}_n , \bar{\tau}_n)$ as an estimator of $\Lambda(\bar{\tau}_n , \bar{\tau}_n)$. Obtain the leading eigenvalues $\widehat{\lambda}_{l}$ for $l=1,\ldots,L$ and the corresponding eigenfunctions $\widehat{e}_{l}(\tau)$, as described above. 

	\item[(ii)] Choose a fine grid $\{ r_0,r_1,...,r_{\bar{n}} \}$ such that $0=r_0<r_1< \cdots <r_{\bar{n}}=1$. Generate independent Brownian motions $W_{l}(r)$ over this grid and construct $\widehat{G}(r_{i}, \tau) = \sum_{l=1}^{L} \sqrt{\widehat{\lambda}_{l}} W_{l}(r_{i}) \widehat{e}_{l}(\tau)$. 

	\item[(iii)] Using the simulated sample paths $\widehat{G}(r,\tau)$ for all $r\in \{r_0,\ldots,r_{\bar{n}}\}$ and $\tau \in \mathcal{T}$, approximate the integrals in $P(\widetilde{G}(r,\tau),b)$ by Riemann sums. Evaluate the right-hand sides of \eqref{wald_limit} and \eqref{t_limit}, denoted as $\mathcal{W}^{\infty}(\tau)$ and $t^{\infty}(\tau)$, respectively, with $D(\tau)$ and $G(r,\tau)$ replaced by $\widehat{D}(\tau)$ and $\widehat{G}(r,\tau)$. Let $\mathcal{F}(\cdot)$ denote the function that defines the statistic of interest, e.g. $\mathcal{F}(f) = \sup_{\tau \in \mathcal{T}}|f(\tau)|$ or $\mathcal{F}(f) = \int_{\mathcal{T}}f(\tau) d\tau$. Compute $\mathcal{F}(\mathcal{W}^{\infty}(\tau))$ or $\mathcal{F}(t^{\infty}(\tau))$ and record their values.
    
     \item[(iv)] Repeat steps (ii) and (iii) for sufficiently many times and compute quantiles of $\mathcal{F}(\mathcal{W}^{\infty}(\tau))$ or $\mathcal{F}(t^{\infty}(\tau))$ to generate critical values. Compute $\mathcal{F}(\mathcal{W}^{\iota}(\tau))$ or $\mathcal{F}\left(t^{\iota}(\tau)\right)$ using the two-step algorithm and carry out the desired tests using their rejection rules. 
     
\end{enumerate}
\end{algorithm}

\subsection{Stack-Wald Tests}
\label{sec:stacking}

Suppose we wish to test hypotheses about $\beta(\tau)$ at a finite collection of quantile levels, $\bar \tau_m=\{\tau_1,\ldots, \tau_m\}$. Here, $m$ is fixed, and the quantile levels need not be equally spaced. We can then use the stacking method to avoid estimating the nuisance function $\Lambda(\cdot,\cdot)$. To understand how the stacking approach works, fix $\tau$ and view $G(r, \tau)$ as a stochastic process indexed by $r \in [0,1]$. For a given $\tau$, the process has covariance function $\textnormal{Cov}\left(G(r_1,\tau),G(r_2,\tau)\right) = (r_1 \wedge r_2) \, \Lambda(\tau , \tau)$. This covariance structure suggests that $G(r,\tau)$ admits the representation $G(r,\tau) = \Gamma(\tau) W_{p}(r)$ where $\Gamma(\tau)\Gamma(\tau)^{\prime} = \Lambda(\tau,\tau)$. This observation can be extended to joint hypotheses involving $\beta_{0}(\tau_{1}),\ldots,\beta_{0}(\tau_{m})$. In this case, one can rely on pointwise results for the estimators at these given quantile levels. This approach yields asymptotic pivotal test statistics. 

The test for the joint hypothesis relies on the stacked moment functions:
\[
s_{t}(\bar\tau_{m}) = 
\left(
\begin{array}{c}
x_{t}(\tau_1 - I(y_{t} \leq x_{t}^{\prime}\beta_{0}(\tau_1)))  \\
\vdots \\
 x_{t} (\tau_{m} - I(y_{t} \leq x_{t}^{\prime}\beta_{0}(\tau_{m})))
\end{array}
\right),
\]
which is a $mp \times 1$ vector. Using the moment functions, define the partial sum process:
\begin{equation*}
S([rT],\bar\tau_{m}) = \frac{1}{\sqrt{T}} \sum_{t=1}^{[rT]} s_{t}(\bar\tau_{m}). 
\end{equation*}

\noindent Note that the stacked score vector collects information across quantile levels. Its long-run variance matrix is $\Lambda(\bar\tau_m,\bar\tau_m)$, the finite-dimensional counterpart of the covariance kernel $\Lambda(\tau_1,\tau_2)$. Because we take $\bar\tau_m$ as given and fixed,  $S([rT],\bar\tau_{m})$ is the partial sum of a fixed-dimensional vector of random variables. Under suitable mixing and moment conditions (Theorem 16.4 of \cite{hansen2022econometrics}), this partial sum empirical process can be shown to converge to a mean-zero Gaussian process 
$\bar{\Lambda}^{1/2} W_{mp}(r)$ where $\bar{\Lambda}^{1/2}$ denotes $\Lambda^{1/2}(\bar\tau_m , \bar\tau_m) $ and $W_{mp}$ is $mp$-dimensional standard Wiener process, i.e., $S([rT],\bar\tau_{m}) \Rightarrow \bar{\Lambda}^{1/2} W_{mp}(r)$. Therefore, cross-quantile dependence is transformed into a standard multivariate long-run variance problem.

Denote the $mp\times mp$ block diagonal matrix, $D(\bar\tau_m)$, as
\begin{align*}
   D(\bar{\tau}_m):= \left[
\begin{array}{ccc}
D(\tau_1) & \cdots & 0 \\
\vdots & \vdots & \vdots \\
0 & \cdots & D(\tau_m) \\
\end{array}
\right].
\end{align*}

\noindent
By standard fixed-b and fixed-K arguments, we obtain a set of results corresponding to Theorems \ref{thm_1} and \ref{thm_2} for the given set of quantile levels $\bar\tau_m$:
\begin{align} 
     \sqrt{T}\left(\widehat{\beta}(\bar{\tau}_m)-\beta(\bar\tau_m)\right) & \Rightarrow D(\bar\tau_{m})^{-1} \bar{\Lambda}^{1/2} W_{mp}(1), \label{eq_thm_stack1} \\
     \widehat{S}_{[rT]}(\bar\tau_{m}) & \Rightarrow \bar{\Lambda}^{1/2} \left(W_{mp}(r) - r W_{mp}(1)\right) = \bar{\Lambda}^{1/2} \widetilde{W}_{mp}(r), \label{eq_thm_stack2} \\
    \widehat \Lambda^{\rm KE}(\bar\tau_m , \bar\tau_m) & \Rightarrow  \bar{\Lambda}^{1/2} P(\widetilde{W}_{mp}(r),b)(\bar{\Lambda}^{1/2})^{\prime}, \label{eq_thm_stack3} \\
    \widehat \Lambda^{\rm OS}(\bar\tau_m , \bar\tau_m) & \Rightarrow  \bar{\Lambda}^{1/2} \frac{1}{K} \sum_{k=1}^K W_{mp}^{(k)}(1)W_{mp}^{(k)}(1)' (\bar{\Lambda}^{1/2})^{\prime}, \label{eq_thm_stack4}
\end{align} 
where each $W_{mp}^{(k)}(1)$ is an independent copy of $W_{mp}(1)$.

Define the Wald statistic for the set of quantile levels $\bar\tau_m$ as: for $\iota ={\rm KE, OS}$,
\begin{align}
	\mathcal{W}^{\iota}(\bar\tau_m) = T \ r(\widehat{\beta}(\bar\tau_m))^{\prime} \left[R(\widehat{\beta}(\bar\tau_m)) \, \widehat{D}^{-1}(\bar\tau_m) \widehat{\Lambda}^{\iota}(\bar\tau_m , \bar\tau_m)\widehat{D}^{-1}(\bar\tau_m) \, R(\widehat{\beta}(\bar\tau_m))^{\prime}\right]^{-1} r(\widehat{\beta}(\bar\tau_m)) \label{eq_wald_m}.
\end{align}
In the case of a single restriction, $q=1$, the corresponding $t$-statistic is 
\begin{align}
	t(\bar\tau_m)^{\iota} = \frac{\sqrt{T} \ r(\widehat{\beta}(\bar\tau_m))}{\sqrt{R(\widehat{\beta}(\bar\tau_m)) \, \widehat{D}^{-1}(\bar\tau_m) \widehat{\Lambda}^{\iota}(\bar\tau_m , \bar\tau_m)\widehat{D}^{-1}(\bar\tau_m) \, R(\widehat{\beta}(\bar\tau_m))^{\prime}}}.  \label{eq_t-stat_m}
\end{align}%
Here, $\widehat D(\bar{\tau}_{m})$ is a consistent estimator of $D(\bar\tau_m)$ with its $i$-th diagonal block defined as 
\begin{equation*}
\widehat{D}(\tau_i) =\frac{1}{hT}\sum_{t=1}^{T} k_{D}\left(\frac{\widehat{e}_{t}(\tau_i)}{h}\right)x_{t}x_{t}^{\prime}.
\end{equation*}
Following \citet{Kato12}, $ \widehat D(\bar{\tau}_m) \overset{p}{\to} D(\bar\tau_m)$. $\widehat{\Lambda}(\bar\tau_m , \bar\tau_m)$ is the HAR variance matrix estimator with the $(i,j)$-th block given by \eqref{lambda_est}.

Define a $q \times q$ variance matrix $\Upsilon = \bar{R}_{0} D(\bar\tau_{m})^{-1} \bar{\Lambda}^{1/2} (\bar{\Lambda}^{1/2})^{\prime} D(\bar\tau_{m})^{-1} \bar{R}_{0}^{\prime}$, where $\bar{R}_0:=\partial r(\beta(\bar\tau_m))/\partial \beta(\bar\tau_m)^{\prime}$. Note that $\bar{R}_0 D(\bar\tau_{m})^{-1} \bar{\Lambda}^{1/2}W_{mp}(1)$ is a $q \times 1$ vector of Brownian motions with variance $\Upsilon$. Because a finite linear combination of Gaussian processes is also Gaussian, $\bar{R}_0 D(\bar\tau_{m})^{-1} \bar{\Lambda}^{1/2}W_{mp}(1)$ can be written as $\Upsilon^{1/2} W_{q}(1)$ where $W_{q}(.)$ is a standard $q$-dimensional Wiener process and $\Upsilon^{1/2}(\Upsilon^{1/2})^{\prime} = \Upsilon$. Therefore, combining \eqref{eq_thm_stack1} -- \eqref{eq_thm_stack4} gives
\begin{align}
       \mathcal{W}^{\rm KE}(\bar\tau_m) & \Rightarrow \left(\Upsilon^{1/2} W_{q}(1)\right)^{\prime}  \times \left[\Upsilon^{1/2} P( \widetilde{W}_{q}(r),b) (\Upsilon^{1/2})^{\prime} \right]^{-1} \times \left(\Upsilon^{1/2} W_{q}(1)\right)  \nonumber \\
       &= W_{q}(1)^{\prime}\left[P(\widetilde{W}_{q}(r),b)\right]^{-1}W_{q}(1), \label{wald_limit_pw_ke} \\
    \mathcal{W}^{\rm OS}(\bar\tau_m) & \Rightarrow \left(\Upsilon^{1/2} W_{q}(1)\right)^{\prime}  \times \left[\Upsilon^{1/2} \frac{1}{K}\sum_{k=1}^K W_q^{(k)}(1)W_q^{(k)}(1)' (\Upsilon^{1/2})^{\prime} \right]^{-1} \times \left(\Upsilon^{1/2} W_{q}(1)\right)  \nonumber \\
       &= W_{q}(1)^{\prime}\left[\frac{1}{K}\sum_{k=1}^K W_q^{(k)}(1)W_q^{(k)}(1)'\right]^{-1}W_{q}(1), \label{wald_limit_pw_os} 
\end{align}%
which are asymptotically pivotal. Likewise,
\begin{align}
       t^{\rm KE}(\bar\tau_m) \Rightarrow & \frac{W_{q}(1)}{\sqrt{P(\widetilde{W}_{q}(r),b)}},\label{t_limit_pw_ke} \\
        t^{\rm OS}(\bar\tau_m) \Rightarrow & \frac{W_{q}(1)}{\sqrt{\frac{1}{K}\sum_{k=1}^KW_q^{(k)}(1)W_q^{(k)}(1)'}},\label{t_limit_pw_os}
\end{align}%
which are again asymptotically pivotal. The limits given by \eqref{wald_limit_pw_ke} and \eqref{t_limit_pw_ke} are equivalent to the fixed-b limits obtained by \cite{KieferVogelsang05}. The inference based on these pivotal fixed-b limits can be conducted using the tabulated fixed-b critical values from, for example, Table B of \cite{Vogelsang2012_wp}. The limits given by \eqref{wald_limit_pw_os} are the same as Hotelling's T-squared distribution, which implies $\frac{K-q+1}{qK} \mathcal{W}^{\rm OS}(\bar\tau_{m}) \Rightarrow F_{q,K-q+1}$. Similarly, the limit given by \eqref{t_limit_pw_os} implies ${t}^{\rm OS}(\bar\tau_{m}) \Rightarrow t_{K}$. Therefore, inference for a finite set of quantile effects is based on standard fixed-smoothing limiting distributions under both kernel-HAR and OS-HAR methods. Because the inference procedures in \cite{hwang2025har} are developed for a single, given quantile ($m=1$), their fixed-smoothing results can be viewed as a special case of the stacking method.

\subsection{Uniform \& Simultaneous Confidence Bands}
\label{sec:band}

This section develops uniform and simultaneous confidence bands for quantile regression coefficients. 
Uniform confidence bands provide a natural visualization of estimation uncertainty across quantiles and offer a useful exploratory assessment of effects throughout the outcome distribution. We begin by considering a simultaneous confidence band for a single slope coefficient evaluated at a finite collection of quantile levels, $\beta_{j}(\bar{\tau}_{m}):=(\beta_{j}(\tau_{1}),\cdots,\beta_{j}(\tau_{m}) )^{\prime}$, where $1 \leq j \leq p$. For a given $\alpha \in (0,1)$, define the rectangular confidence region
\begin{equation*}
I_{\alpha}(\bar{\tau}_{m}) = \prod_{i=1}^m [\widehat{\beta}_j(\tau_i)-\widehat{\sigma}_j(\tau_i)c_{\alpha}, \widehat{\beta}_j(\tau_i)+\widehat{\sigma}_{j}(\tau_i)c_{\alpha}] ,
\end{equation*}
where $\widehat{\sigma}_j(\tau) := \left[\widehat D(\tau)^{-1} \widehat{\Lambda}(\tau)\widehat D(\tau)^{-1}/T\right]_{jj}^{1/2}$ and $c_\alpha > 0$ is a critical value to be specified. If $c_\alpha$ is chosen such that $\text{Pr}\left(\beta_{j}(\bar{\tau}_{m}) \in I_{\alpha}(\bar{\tau}_{m})\right) \geq 1-\alpha$  asymptotically, then $I_{\alpha}(\bar{\tau}_{m})$ is a $(1-\alpha)$ simultaneous confidence band for $\beta_j(\bar{\tau}_{m})$. The following proposition shows how to choose $c_{\alpha}$. Let $t_{j}^{\infty}(\tau)$ denote a realization from the limiting distribution of the t-statistic defined in Step (iii) of Algorithm~\ref{alg:uniform_tests}. 

\begin{proposition}
    \label{prop:supt_direct}
    Under Assumptions \ref{assum_gaussian_limit}, \ref{assum_asymp_linear}, and \ref{assum_reg}, 
\begin{align}
	\textnormal{Pr}\left(\beta_{j}(\bar{\tau}_{m}) \in I_{\alpha}(\bar{\tau}_{m})\right) \to \textnormal{Pr}\left( \max_{1\leq i \leq m} \left| t_j^\infty(\tau_i)\right| \leq c_{\alpha}\right) .\label{eq_prop:supt_direct} 
\end{align}
\end{proposition}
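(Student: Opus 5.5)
The plan is to rewrite the coverage event as an event about the $t$-statistic process and then apply the continuous mapping theorem to Proposition~\ref{prop:fixedb}. For each $i$, the condition $\beta_j(\tau_i)\in[\widehat\beta_j(\tau_i)\pm\widehat\sigma_j(\tau_i)c_\alpha]$ is equivalent to $|t_j(\tau_i)|\le c_\alpha$. Here $t_j(\tau)=\sqrt{T}(\widehat\beta_j(\tau)-\beta_j(\tau))/[\widehat D(\tau)^{-1}\widehat\Lambda(\tau)\widehat D(\tau)^{-1}]_{jj}^{1/2}$ is the $t$-statistic in \eqref{eq_t-stat} with the linear restriction $r(\beta(\tau))=e_j'\beta(\tau)-\beta_{j}(\tau)$, so that $R_0=e_j'$. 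Consequently
\[
\textnormal{Pr}\left(\beta_j(\bar\tau_m)\in I_\alpha(\bar\tau_m)\right)=\textnormal{Pr}\left(\max_{1\le i\le m}|t_j(\tau_i)|\le c_\alpha\right).
\]
The null in Proposition~\ref{prop:fixedb} holds automatically at the true value. Condition (1) is trivial because the restriction is linear. Condition (2) follows from uniform consistency of $\widehat D$ (as cited via \citet{Kato12}). Condition (3) follows because $e_j'D(\tau)^{-1}P(\widetilde G,b)D(\tau)^{-1}e_j>0$ almost surely, and I would take it as maintained in the same way as in Proposition~\ref{prop:fixedb}.

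Next, I would invoke Proposition~\ref{prop:fixedb}, equation \eqref{t_limit}, which gives $t_j^{\iota}(\cdot)\Rightarrow t_j^\infty(\cdot)$ in $\ell^\infty(\mathcal{T})$. Here $t_j^\infty(\tau)=e_j'D(\tau)^{-1}G(1,\tau)/\sqrt{H^\iota(\tau)}$ is the same process that Algorithm~\ref{alg:uniform_tests} approximates. The functional $f\mapsto\max_{1\le i\le m}|f(\tau_i)|$ is Lipschitz on $\ell^\infty(\mathcal{T})$, since it is bounded by the sup norm. The continuous mapping theorem therefore yields $\max_i|t_j(\tau_i)|\xrightarrow{d}\max_i|t_j^\infty(\tau_i)|$. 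Alternatively, for a finite set of quantiles, one can use the stacked results \eqref{eq_thm_stack1}--\eqref{eq_thm_stack4} directly: the vector $(t_j(\tau_1),\dots,t_j(\tau_m))$ is a continuous function of $\sqrt T(\widehat\beta(\bar\tau_m)-\beta(\bar\tau_m))$, $\widehat D(\bar\tau_m)$ and the diagonal blocks of $\widehat\Lambda(\bar\tau_m,\bar\tau_m)$, which converge jointly. This route avoids needing the full process limit. I would mention that the two limits coincide, because the finite-dimensional distributions of $(G(1,\tau_i),\widetilde G(\cdot,\tau_i))_{i\le m}$ are exactly those of $\bar\Lambda^{1/2}(W_{mp}(1),\widetilde W_{mp}(\cdot))$.

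The final step passes from convergence in distribution to convergence of the probability at the fixed point $c_\alpha$, and this is where I expect the main (mild) obstacle. By the Portmanteau theorem, this requires that $c_\alpha$ be a continuity point of the distribution function of $\max_i|t_j^\infty(\tau_i)|$, i.e.\ $\textnormal{Pr}(\max_i|t^\infty_j(\tau_i)|=c_\alpha)=0$. I would argue this by conditioning on the denominators $H^\iota(\tau_i)$. Conditional on $\widetilde G$ (KE case) or on the independent copies $G_k$ (OS case), the numerator vector $(e_j'D(\tau_i)^{-1}G(1,\tau_i))_i$ is Gaussian and independent of the denominators. This holds because $G(1,\cdot)$ is independent of the bridge $\widetilde G(\cdot,\cdot)$, as is seen from the covariance structure \eqref{cov}. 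Each coordinate has positive variance $e_j'D^{-1}\Lambda(\tau_i,\tau_i)D^{-1}e_j>0$. Hence every $|t_j^\infty(\tau_i)|$ has a continuous conditional distribution, so each event $\{|t^\infty_j(\tau_i)|=c_\alpha\}$ is null, and a finite union of null events is null. With the continuity point secured, the Portmanteau theorem delivers \eqref{eq_prop:supt_direct}.
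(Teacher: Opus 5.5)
Your proposal is correct and follows the paper's route: you invoke the $t$-limit \eqref{t_limit} of Proposition~\ref{prop:fixedb} and conclude via the Portmanteau lemma, with $c_\alpha$ a continuity point because the limit is mixed Gaussian. Your version is tighter than the paper's write-up: the paper shows convergence of each coordinate's coverage probability separately and then simply asserts the result for the maximum, whereas you get the required joint convergence by applying the continuous mapping theorem to the Lipschitz functional $f\mapsto\max_{i}|f(\tau_i)|$, and you justify the continuity point of the maximum using the independence of $G(1,\cdot)$ from $\widetilde G$ and a finite union of null events.
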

\noindent
Choosing $c_\alpha$ as the $(1-\alpha)$ quantile of $\max_{1\leq i \leq m}\left|t_j^\infty(\tau_i)\right|$ yields $\textnormal{Pr}\left(\beta_{j}(\bar{\tau}_{m}) \in I_{\alpha}(\bar{\tau}_{m})\right) =1-\alpha$ asymptotically. This construction is known as the Sup-t approach; see \cite{montiel2019simultaneous}. 

We now extend this method to a uniform confidence band over the entire quantile index $\mathcal{T}$. A $(1-\alpha)$ uniform confidence band for $\beta_{j}(\tau)$, $\tau \in \mathcal{T}$, consists of functions $l_{\alpha}(\tau)$ and $u_{\alpha}(\tau)$ satisfying $\text{Pr}\left(\beta_{j}(\tau) \in [l_{\alpha}(\tau) , u_{\alpha}(\tau)] \; \text{for all} \; \tau \in \mathcal{T}\right) \geq 1-\alpha$ asymptotically. Proceeding analogously to Proposition~\ref{prop:supt_direct}, we take $c_{\alpha}$  as the $(1-\alpha)$ quantile of $\sup_{\tau \in \mathcal{T} } \left| t_j^\infty(\tau)\right|$ and define
\begin{equation*}
I_{\alpha}(\mathcal{T}) = [\widehat{\beta}_j(\tau)-\widehat{\sigma}_j(\tau)c_{\alpha} \, , \, \widehat{\beta}_j(\tau)+\widehat{\sigma}_{j}(\tau)c_{\alpha}] \, , \quad \tau \in \mathcal{T}.
\end{equation*}
By construction, $I_{\alpha}(\mathcal{T})$ is a $(1-\alpha)$ uniform confidence band for $\beta_{j}(\tau)$. The algorithm for constructing simultaneous and uniform confidence bands is summarized below.

\vskip 0.1in

\begin{algorithm}[Simultaneous and uniform confidence bands]
\label{alg:confidence_bands}
Implement steps (i)--(iii) of Algorithm~\ref{alg:uniform_tests}. Repeat steps (ii) and (iii) for sufficiently many times. Let $c_\alpha$ denote the $(1-\alpha)$ sample quantile of $\mathcal{F}(t^{\infty}(\tau))=\sup_{\tau \in \mathcal{T} } \left| t_j^\infty(\tau)\right|$. Construct $I_{\alpha}(\bar{\tau}_{m})$ and $I_{\alpha}(\mathcal{T})$ as described above. 
\end{algorithm}

\begin{remark}
Uniform confidence bands for quantile processes have been proposed by \cite{2002.Koenker.Xiao}, \cite{2015.Qu.Yoon}, and \cite{Belloni.Chernozhukov.2019} for the case of i.i.d. data. The results presented here extend those constructions to quantile regressions using time-series data.
\end{remark}

\begin{remark}
A simultaneous confidence band can also be constructed using Wald statistics as discussed by \cite{montiel2019simultaneous}. We found that the Wald projection approach in our quantile setting tends to produce overly conservative simultaneous bands, confirming a similar finding reported in \cite{montiel2019simultaneous}. We therefore do not pursue this method further.
\end{remark}

The simultaneous confidence bands developed here can be viewed as a quantile regression analogue of the Sup-t method recommended by \cite{montiel2019simultaneous}. Conceptually, Algorithm~\ref{alg:confidence_bands} closely follows their Sup-t procedure. The same idea can also be extended to other inference problems, including tests of  homogeneity and monotonicity; see Section \ref{sec:test_shape} for details. Because these procedures are all based on the supremum of the $t$-statistic over the quantile index, we refer to them collectively as \emph{Sup-t methods}.

The simultaneous confidence bands developed here are constructed using the uniform-in-$\tau$ approach, even when interest is restricted to a finite collection of quantile levels, $\beta(\bar\tau_m)$. An alternative would be to construct simultaneous confidence bands based on the stack-Wald approach. However, unlike the Wald statistics considered in Section~\ref{sec:stacking}, Sup-t statistics do not enjoy nuisance-parameter cancellation under the stacking approach. As a result, both the uniform-in-$\tau$ and stack-Wald approaches require estimation of the long-run covariance structure, eliminating the principal advantage of the stacking method.

\subsection{Testing Shape Restrictions}
\label{sec:test_shape}

This section demonstrates how the uniform-in-$\tau$ method can be used to test economically meaningful shape restrictions on the quantile effects. We first consider the significance hypothesis, $H_0 : \beta_{j}(\tau) = 0$ for all $\tau \in \mathcal{T}$ against $H_A : \beta_{j}(\tau) \neq 0$ for some $\tau \in \mathcal{T}$. Under the null hypothesis, the covariate $x_{j}$ has no effect at any quantile in $\mathcal{T}$, whereas the alternative allows the effect to be nonzero at some unknown part of the outcome distribution. 

Next, consider the homogeneity hypothesis, $H_{0}: \beta_{j}(\tau_{2}) = \beta_{j}(\tau_{1}) \; \text{for all} \; \tau_{1} \neq \tau_{2} \, \in \mathcal{T}$ against $H_{A}: \beta_{j}(\tau_{2}) \neq \beta_{j}(\tau_{1}) \; \text{for some} \; \tau_{1} \neq \tau_{2} \, \in \mathcal{T}$. The null states that the quantile effect is constant across quantile indices, while the alternative allows the effect to vary across the outcome distribution.

Finally, consider the monotonicity hypothesis, $H_{0}: \beta_{j}(\tau_{2}) - \beta_{j}(\tau_{1}) \geq 0 \; \text{for all} \; \tau_{2} > \tau_{1} \, \in \mathcal{T}$ vs. $H_{A}: \beta_{j}(\tau_{2}) - \beta_{j}(\tau_{1}) < 0 \; \text{for some} \; \tau_{2} > \tau_{1} \, \in \mathcal{T}$. Under the null, the quantile effect is weakly increasing in the quantile index, whereas the alternative allows for violations of this monotonicity. 

To implement the tests, consider a fine grid $\{\tau_{1},\ldots,\tau_{n}\} \subset \mathcal{T}$. Let $\widehat{\Sigma}_{j}(\tau_{i},\tau_{l})$ denote the estimated covariance between $\sqrt{T}\widehat{\beta}_{j}(\tau_{i})$ and $\sqrt{T}\widehat{\beta}_{j}(\tau_{l})$. Define the quantities
\begin{equation*}
 \widehat{\sigma}_{j}(\tau_{i}) = \sqrt{\widehat{\Sigma}_{j}(\tau_{i},\tau_{i})}, \hspace{0.5cm} \widehat{\sigma}_{j}(\tau_{i},\tau_{l}) = \sqrt{\widehat{\Sigma}_{j}(\tau_{i},\tau_{i}) + \widehat{\Sigma}_{j}(\tau_{l},\tau_{l}) - 2 \widehat{\Sigma}_{j}(\tau_{i},\tau_{l})}.   
\end{equation*}
The hypotheses of effect significance, monotonicity, and homogeneity can then be tested using the following test statistics: 

\begin{enumerate}
\item[(i)] Significance: $t^{S}(\mathcal{T}) = \max_{1 \leq i \leq n} \sqrt{T} \left| \widehat{\beta}_{j}(\tau_{i})/\widehat{\sigma}_{j}(\tau_{i}) \right|$,
\item[(ii)] Homogeneity: $t^{H}(\mathcal{T}) = \max_{2 \leq i \leq n}  \sqrt{T} \left| \widehat{\beta}_{j}(\tau_{i}) - \widehat{\beta}_{j}(\tau_{i-1}) \right| /\widehat{\sigma}_{j}(\tau_{i},\tau_{i-1})$,
\item[(iii)] Monotonicity: $t^{M}(\mathcal{T}) = \min_{i > l} \sqrt{T} \left\{\widehat{\beta}_{j}(\tau_{i}) - \widehat{\beta}_{j}(\tau_{l})\right\}/\widehat{\sigma}_{j}(\tau_{i},\tau_{l})$.
\end{enumerate}

The limiting distributions of the test statistics are stated below. Let $e_{i}^{(n)}$ be the $n \times 1$ unit vector whose $i$-th element is $1$ and whose remaining elements are $0$. Similarly, let $e_{i}^{(p)}$ be the $p \times 1$ unit vector. Define $a_{i,j} = e_i^{(n)}\otimes e_j^{(p)}$ and $a_{i,l;j} = a_{i,j} - a_{l,j}$. Thus, $a_{i,j}$ selects the $j$-th coefficient associated with quantile ${\tau_i}$ from an $np$-dimensional vector.  Recall that $H(\tau,b) = D(\tau)^{-1} P\left(\widetilde{G}(r,\tau),b\right)D(\tau)^{-1}$.

\begin{proposition}
\label{prop-shape-restriction}
Under conditions in Theorem~\ref{thm_2}, as $T \rightarrow \infty$,
\begin{enumerate}
\item[(1)] If $\beta_{j}(\tau) = 0 \;$ for all $\tau \in \mathcal{T}$, $t^{S}(\mathcal{T}) \Rightarrow \max_{1 \leq i \leq n}  \left| a_{i,j}^{\prime} D(\bar{\tau}_{n})^{-1} G(1,\bar{\tau}_{n}) / \sqrt{a_{i,j}^{\prime} H(\bar{\tau}_{n},b) a_{i,j} } \right|$.
\item[(2)] If $\beta_{j}(\tau) = c \;$ for all $\tau \in \mathcal{T}$ and some constant $c$, 
\begin{equation*}
t^{H}(\mathcal{T}) \Rightarrow \max_{2 \leq i \leq n} \frac{\left|a_{i,i-1;j}^{\prime} D(\bar{\tau}_{n})^{-1}G(1,\bar{\tau}_{n})\right|}{\sqrt{a_{i,i-1;j}^{\prime} H(\bar{\tau}_{n},b) a_{i,i-1;j}}}.
\end{equation*}
\item[(3)] If $\beta_{j}(\tau) = c \;$ for all $\tau \in \mathcal{T}$  and some constant $c$ (this is the least favorable configuration), 
\begin{equation*}
t^{M}(\mathcal{T}) \Rightarrow \min_{i > l} \frac{\left\{a_{i,l;j}^{\prime} D(\bar{\tau}_{n})^{-1}G(1,\bar{\tau}_{n})\right\}}{\sqrt{a_{i,l;j}^{\prime} H(\bar{\tau}_{n},b) a_{i,l;j}}}.
\end{equation*}
\end{enumerate}
\end{proposition}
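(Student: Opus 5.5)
The plan is to treat all three statistics as continuous functionals of one joint limit on the fixed grid $\bar\tau_n=\{\tau_1,\dots,\tau_n\}$. Because $n$ is fixed, everything reduces to finite-dimensional weak convergence of the stacked objects, which follows from the uniform-in-$\tau$ results (Theorems~\ref{thm_1} and \ref{thm_2}) by restricting to the grid.

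\textbf{Step 1: joint convergence of the estimator and the variance estimator.} By Assumption~\ref{assum_asymp_linear} and Assumption~\ref{assum_gaussian_limit}, $\sqrt{T}(\widehat\beta(\bar\tau_n)-\beta_0(\bar\tau_n)) = D(\bar\tau_n)^{-1}S(T,\bar\tau_n)+o_p(1)$. The stacked variance estimator $\widehat\Lambda^{\rm KE}(\bar\tau_n,\bar\tau_n)$ is a continuous functional of the same partial-sum process $\widehat S([rT],\bar\tau_n)$. For the Bartlett kernel this comes from the summation-by-parts representation used in Theorem~\ref{thm_2}, now applied to the cross-quantile blocks $(\tau_i,\tau_l)$ rather than only the diagonal ones.

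Theorem~\ref{thm_1}(1) states $\widehat S = S - rS(T,\cdot)+o_p(1)$ uniformly. Hence the pair $\bigl(S(T,\bar\tau_n),\widehat S([rT],\bar\tau_n)\bigr)$ is a continuous map of the single process $S([rT],\cdot)$. By Assumption~\ref{assum_gaussian_limit} and the continuous mapping theorem, the pair converges jointly to $\bigl(G(1,\bar\tau_n),\widetilde G(r,\bar\tau_n)\bigr)$. Consequently $\widehat\Sigma(\bar\tau_n)=\widehat D^{-1}\widehat\Lambda\widehat D^{-1}$ converges jointly with the estimator to $H(\bar\tau_n,b)=D(\bar\tau_n)^{-1}P(\widetilde G(r,\bar\tau_n),b)D(\bar\tau_n)^{-1}$, using $\widehat D\to_p D$ (Kato). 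Here $H(\bar\tau_n,b)$ is read as the full $np\times np$ matrix including the off-diagonal blocks. The quantities $T\widehat\Sigma_j(\tau_i,\tau_l)$ in the statistics are exactly $a_{i,j}'\widehat\Sigma a_{l,j}$. The cross blocks are needed here: the derivation of Theorem~\ref{thm_2} is bilinear in the two arguments, so it carries over verbatim with $\widetilde G(r,\tau_i)\widetilde G(s,\tau_l)'$.

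\textbf{Step 2: recentering under each null.} Under (1), $\sqrt{T}\widehat\beta_j(\tau_i)=a_{i,j}'\sqrt{T}(\widehat\beta(\bar\tau_n)-\beta_0(\bar\tau_n))$. Under (2) and (3), with $\beta_j\equiv c$, the differences $\sqrt{T}(\widehat\beta_j(\tau_i)-\widehat\beta_j(\tau_l))$ equal $a_{i,l;j}'\sqrt{T}(\widehat\beta-\beta_0)$, because the constant $c$ cancels. In every case the numerator is the linear functional $a'\sqrt{T}(\widehat\beta-\beta_0)$ and the squared denominator is $a'\widehat\Sigma a$.

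\textbf{Step 3: continuous mapping.} Each statistic is a max, a max of absolute values, or a min over finitely many ratios $x\mapsto a'x/\sqrt{a'Va}$. This map is continuous at any $V$ with $a'Va>0$. Applying the continuous mapping theorem to the joint limit of Step 1 gives all three displayed limits. For (3), the least-favorable claim is a remark: under a general monotone null the drifts $\sqrt{T}(\beta_j(\tau_i)-\beta_j(\tau_l))\ge0$ only increase the min, so the constant case gives the stochastically smallest statistic. I would note this, but the proposition only requires the limit at $\beta_j\equiv c$.

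\textbf{Main obstacle.} The key difficulty is the continuity condition $a'H(\bar\tau_n,b)a>0$ almost surely. This must hold so that the ratio map is continuous on a set of limit probability one. I would argue that $P(\widetilde G,b)$ restricted to the direction $a$ is $D^{-1}$-transformed $P$ applied to the scalar Gaussian process $a'D^{-1}\widetilde G(r,\bar\tau_n)$. This process is a nondegenerate Brownian bridge multiple provided $a'D^{-1}\Lambda(\bar\tau_n,\bar\tau_n)D^{-1}a>0$. The fixed-b functional of a nondegenerate Brownian bridge is a.s. positive for the kernels in Theorem~\ref{thm_2} (positive definite kernels, including Bartlett), as in \citet{KieferVogelsang05}. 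If positivity of $\Lambda(\bar\tau_n,\bar\tau_n)$ in these directions is not available, I would add it as a condition, analogous to condition (3) of Proposition~\ref{prop:fixedb}.
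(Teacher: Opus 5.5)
The paper never writes out a proof of this proposition. The appendix proves only Propositions \ref{prop:fixedb} and \ref{prop:supt_direct}, and the argument it implicitly relies on here is the same template: restrict the uniform results of Theorems \ref{thm_1}--\ref{thm_2} to the grid, then apply the continuous mapping theorem to the ratios.

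Your proposal follows that route and is correct. It is also more complete than what the paper indicates, in two respects.

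\textbf{Cross-quantile blocks.} The statistics $t^{H}$ and $t^{M}$ use $\widehat\Sigma_j(\tau_i,\tau_l)$ with $i\neq l$, and hence the off-diagonal blocks $\widehat\Lambda^{\rm KE}(\tau_i,\tau_l)$. Theorem \ref{thm_2} does not literally cover these. Your remark fills the gap: the summation-by-parts and extended continuous mapping argument is bilinear, so it goes through with $\widehat S(\cdot,\tau_i)\widehat S(\cdot,\tau_l)'$, using joint convergence of $\widehat S$ in $\ell^\infty([0,1]\times\mathcal T)$.

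\textbf{Joint convergence.} A max or min over grid points needs all numerators and all variance entries to converge jointly. The proof of Proposition \ref{prop:supt_direct} only displays marginal convergence at each $\tau_i$. Generating everything from the single process $S([rT],\cdot)$ supplies the joint statement.

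\textbf{A minor normalization slip.} The paper's convention is that $\widehat\Sigma_j(\tau_i,\tau_l)$ estimates the covariance of $\sqrt T\widehat\beta_j$. Under that convention $\widehat\Sigma_j(\tau_i,\tau_l)=a_{i,j}'\widehat\Sigma\, a_{l,j}$, with no factor $T$.

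Your handling of the denominator is also right, and it exposes a condition the statement leaves implicit. The phrase ``under conditions in Theorem \ref{thm_2}'' contains no analogue of condition (3) of Proposition \ref{prop:fixedb}.

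\begin{itemize}
\item Set $c=D(\bar\tau_n)^{-1}a$. The pair $\bigl(a'D(\bar\tau_n)^{-1}G(1,\bar\tau_n),\,a'H(\bar\tau_n,b)a\bigr)$ has the law of $\bigl(\sigma W(1),\sigma^2 P(\widetilde W,b)\bigr)$, where $\sigma^2=c'\Lambda(\bar\tau_n,\bar\tau_n)c$.
\item Continuity of the ratio map therefore requires both $\sigma^2>0$ and $P(\widetilde W,b)>0$ almost surely.
\item For $a=a_{i,j}$, the first condition follows once $\Lambda(\tau_i,\tau_i)$ is positive definite. For $a=a_{i,l;j}$, it is a genuine nondegeneracy restriction on the cross-quantile long-run covariance.
\item The second condition holds for positive definite kernels such as Bartlett. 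It is not guaranteed for non-positive-definite kernels, and the kernel classes in Theorem \ref{thm_2} do not exclude them.
\end{itemize}

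So your fallback of imposing positivity as an explicit condition is genuinely needed at the level of generality the proposition claims.
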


\begin{algorithm}[Testing Shape Restrictions]
\label{alg:shape_tests}
    \begin{enumerate}
\item[(i)] On a fine grid $\{\tau_{1},\ldots,\tau_{n}\}$, compute the test statistics $t^{S}(\mathcal{T})$, $t^{H}(\mathcal{T})$, and $t^{M}(\mathcal{T})$ corresponding to the hypotheses of effect significance, homogeneity, and monotonicity, respectively.. 
\item[(ii)] Draw a sample path $\widehat{G}(1,\tau)$ as described in Section~\ref{sec:uniform.in.tau}. Using this draw, evaluate the limiting distributions defined in Proposition~\ref{prop-shape-restriction} and record their values as $\mathcal{G}^{S}$, $\mathcal{G}^{H}$ and $\mathcal{G}^{M}$.
\item[(iii)] Repeat step (ii) many times. Let $q_{\alpha}^{S}$ and $q_{\alpha}^{H}$ denote the $(1-\alpha)$-th quantiles of $\mathcal{G}^{S}$ and $\mathcal{G}^{H}$, respectively. Let $q_{1-\alpha}^{M}$ denote the $\alpha$-th quantiles of $\mathcal{G}^{M}$. Reject the null hypothesis if $t^{S}(\mathcal{T}) > q_{\alpha}^{S}$ (significance), $t^{H}(\mathcal{T}) > q_{\alpha}^{H}$ (homogeneity) or $t^{M}(\mathcal{T}) < q_{1-\alpha}^{M}$ (monotonicity).
\end{enumerate}
\end{algorithm}

\subsection{Test-Based Data-Driven Bandwidth Rules}
\label{sec:bandwidth}

Test-based bandwidth selection rules were introduced in \cite{SPJ2008} for the fixed-smoothing kernel-HAR inference method in a simple Gaussian location model, and \cite{S2014b} extends it to a GMM framework. For the OS-HAR inference method, the test-optimal bandwidth rule has been established in \cite{sun2011robust} for a generic joint test in linear regressions, and \cite{hwang2025har} extends this approach to quantile regressions. The existing results of test-optimal rules are based on higher-order expansions of the pivotal limiting distributions. In our setting, however, the fixed-smoothing limiting distribution is not pivotal. We nevertheless use the same logic as a practical guideline for choosing the smoothing parameter. To emphasize this distinction, we refer to the resulting choice as a test-based data-driven bandwidth rule rather than an optimal bandwidth rule.

Fix a testing problem indexed by $a \in \{W,S,H,M\}$, corresponding respectively to the Wald, Sup-t, homogeneity, and monotonicity procedures, and we denote the corresponding restriction matrix as $R_a$ and $d_a = \operatorname{rank}(R_a)$. For example, to test a generic null hypothesis $H_0: r(\beta(\bar \tau_m)) =0$ using the Wald statistic, we set $R_W = \bar R_0$ as defined in Section \ref{sec:stacking}. To test the null hypothesis $H_0: \beta_j( \tau_1)=\cdots =\beta_j( \tau_m)=0$ using the Wald statistic or the Sup-t test, we set $R_W = R_S $ as a $m\times mp$ matrix with $(i, (i-1)p+j)$-th entry being 1 and all else 0, for $i=1,\ldots, m$. For the homogeneity test, we set $R_H$ as a $(m-1)\times mp$ matrix with $(i, (i-1)p+j)$-th entry being $-1$, $(i, ip+j )$ being $1$ and all else 0, for $i=1,\ldots, m$. For the monotonicity test, set $R_M$ the same way as $R_H$.

For the kernel-HAR estimator, we adapt the test-based bandwidth rule of \cite{S2014b} for a GMM framework to a quantile process\footnote{The test-based bandwidth rule of \cite{S2014b} is derived for an asymptotic F test using a bias-corrected Wald statistic with F distribution as the reference distribution. \cite{S2014b} also shows that both the asymptotic F test and a test using the Wald statistic with fixed-b reference distribution are second-order correct under a small-b approximation. In a sense, they are asymptotically equivalent tests, and we therefore adapt the bandwidth rule for the asymptotic F test to our Wald test.}. Define the VAR(1)-implied bias estimator of the Bartlett-kernel-HAR estimator and the long-run variance estimator\footnote{To simplify the exposition, we focus on the Bartlett kernel, which is also our main focus in the simulation. For the Parzen and QS kernels, the test-bandwidth rules can also be adapted from the corresponding literature mentioned above.}:
\begin{align*}
    \widehat{B}(\bar\tau_m)=& - \left(\widehat{A}(I_{mp} - \widehat{A})^{-2}\widehat \Gamma(0) +\widehat \Gamma(0)\widehat{A}'(I_{mp} - \widehat{A}')^{-2}\right),\\
 \widehat{\Omega}(\bar\tau_m) =& (I_{mp} - \widehat{A})^{-1} \widehat{\mathcal{E}}(I_{mp} -\widehat{A}')^{-1},
\end{align*}
where $\widehat A$ is the estimated coefficient matrix from the VAR(1) model of the stacked QR score process, i.e. $\hat s_t(\bar\tau_m) = A \hat s_{t-1}(\bar\tau_m) + \epsilon_t$ for $A\in \mathbb{R}^{mp\times mp}$ and $\textnormal{E}[\epsilon_{t}\epsilon_{t}'] = \mathcal{E}$, where $\hat s_t(\bar \tau_m)$ corresponds to $s_t(\bar \tau_m)$ defined in Section \ref{sec:stacking}; $\widehat \Gamma(0)$ is the solution to $\widehat \Gamma(0) = \widehat A\widehat \Gamma(0)\widehat A' +\widehat{\mathcal{E}}  $ and $\widehat{\mathcal{E}} =\frac{1}{T} \sum_{t=1}^T \hat{\epsilon}_t  \hat{\epsilon}_t'$. Furthermore, define the bias term of the auxiliary score:
\begin{align*}
    \bar{B}_{a}(\bar\tau_m) = \frac{1}{d_a} \operatorname{tr} \left( \left[R_a \widehat D(\bar \tau_m) ^{-1}\widehat{B}(\bar\tau_m) \widehat D(\bar \tau_m)^{-1}R_a'\right] \left[ R_a \widehat D(\bar \tau_m)^{-1}\widehat{\Omega}(\bar\tau_m) \widehat D(\bar \tau_m)^{-1}R_a' \right]^{-1} \right).
\end{align*}
Let $\kappa>1$ denote the tolerated size inflation. Following \cite{hwang2025har}, $\kappa$ can be set as $1.05$ for a larger sample size $T$, and $1.1$ otherwise.    Then, the implied bandwidth choice is
\[
\widetilde M_{a}=
\begin{cases}
\left[
\dfrac{3 g_{d_a,\delta^2}\left(Q_{\chi^2_{d_a}}^{(1-\alpha)}\right)\bar{B}_a(\bar\tau_m) }
{\delta^2 g_{d_a+2,\delta^2}\left(Q_{\chi^2_{d_a}}^{(1-\alpha)}\right) }
\right]^{1/2} T^{1/2},
& \bar{B}_a(\bar\tau_m)>0, \\[1.0em]
\dfrac{g_{d_a}\left(Q_{\chi^2_{d_a}}^{(1-\alpha)}\right)Q_{\chi^2_{d_a}}^{(1-\alpha)}|\bar{B}_a(\bar\tau_m)| }
{(\kappa-1)\alpha},
& \bar{B}_a(\bar\tau_m)\le 0,
\end{cases}
\]
where $g_{d,\delta^2}(\cdot)$ denotes the density of a non-central $\chi^{2}$ random variable with $d$ degrees of freedom and non-centrality parameter $\delta^{2}$.\footnote{Following the aforementioned literature, when no economically meaningful local alternative is available, we choose $\delta^{2}$ so that $P\!\left(\chi^{2}_{d_a}(\delta^{2})>Q_{\chi^2_{d_a}}^{(1-\alpha)}\right)=0.75$,
where $\chi^2_{d}(\delta^2)$ denotes a $\chi^2$-distributed random variable with $d$-degree of freedom the non-centrality parameter $\delta^2$ and $Q_{\chi^2_d}^{(1-\alpha)}$ denotes $(1-\alpha)$-th quantile of the $\chi^2$ distribution with $d$-degree of freedom.} Then, we set $\widehat M_{a}=\min\{T-1,\max\{1,\lfloor \widetilde M_{a}\rfloor\}\}$.

For the OS-HAR estimator, we adapt the test-based rule for a single quantile from \cite{hwang2025har} to multiple quantiles. The VAR(1)-implied long-run variance of the QR score remains the same, but the bias term of the OS-HAR estimator is given by
\begin{align*}
    \bar{\bar{B}}_{a}(\bar\tau_m) = &\frac{1}{d_a} \operatorname{tr} \left( \left[R_a \widehat D(\bar \tau_m)^{-1} \widetilde{B}(\bar\tau_m) \widehat D(\bar \tau_m)^{-1} R_a'\right] \left[ R_a \widehat D(\bar \tau_m)^{-1}\widehat{\Omega}(\bar\tau_m) \widehat D(\bar \tau_m)^{-1}R_a' \right]^{-1} \right), \\
\widetilde{B}(\bar\tau_m) =& -\frac{\pi^2}{6} (I_{mp} - \widehat{A})^{-3} \Bigl( \widehat{A} \widehat{\mathcal{E}}+ \widehat{A}^2 \widehat{\mathcal{E}}\widehat{A}' + \widehat{A}^2 \widehat{\mathcal{E}}- 6\widehat{A} \widehat{\mathcal{E}}\widehat{A}'+ \widehat{\mathcal{E}}(\widehat{A}')^2 + \widehat{A} \widehat{\mathcal{E}}(\widehat{A}')^2 + \widehat{\mathcal{E}}\widehat{A}' \Bigr) (I_{mp} - \widehat{A}')^{-3}.
\end{align*}
Then, we choose
\[
\widetilde K_{a}=
\begin{cases}
\left[
\dfrac{\delta^{2} g_{d_a+2,\delta^{2}}\!\left(Q_{\chi^2_{d_a}}^{(1-\alpha)}\right)}
{4\, g_{d_a,\delta^{2}}\!\left(Q_{\chi^2_{d_a}}^{(1-\alpha)}\right)\, |\bar{\bar{B}}_a(\bar\tau_m)|}
\right]^{1/3} T^{2/3},
& \bar{\bar{B}}_a(\bar\tau_m)>0, \\[1.0em]
\left[
\dfrac{(\kappa-1)\alpha}
{g_{d_a}\!\left(Q_{\chi^2_{d_a}}^{(1-\alpha)}\right)Q_{\chi^2_{d_a}}^{(1-\alpha)}\, |\bar{\bar{B}}_a(\bar\tau_m)|}
\right]^{1/2} T,
& \bar{\bar{B}}_a(\bar\tau_m)< 0,
\end{cases}
\]
and choose $\widehat K_{a}=\min\{\max\{d_a+4,\lceil \widetilde K_{a}\rceil\}, T\}$ for tests that requires variance matrix inversion, and $\widehat K_{a}=\min\{\max\{1,\lceil \widetilde K_{a}\rceil\}, T\}$, otherwise.

To implement the formulas above, it remains to obtain the long-run variance that appears in the bandwidth formulas above. We can obtain $\widehat{A}$ as the least squares estimator for the VAR(1) model directly. Alternatively, to avoid substantial finite-sample bias in the estimation of $A_s$ due to the non-smooth quantile score and the computational burden when $m$ is large, we adapt the idea of \citet{galvao2024hac} to multiple quantiles. Let $\widehat{A}_x$ be the estimated VAR(1) coefficient matrix for $x_t$. 
Define $r_t(\bar\tau_m) = (\tau_1-I(e_t(\tau_1)\leq 0 ),\ldots, \tau_m-I(e_t(\tau_m)\leq 0))'$. To approximate the VAR(1) coefficients $A_r$ for $r_t(\bar\tau_m)$, we assume each pair $(e_t(\tau_i), e_{t-1}(\tau_j))$ are jointly normal for $i,j=1,\ldots,m$. We define the coefficient matrix as a linear projection $A_r =  \textnormal{E}[r_{t}r_{t-1} '] \textnormal{E}[r_{t-1} r_{t-1}']^{-1}$ where we suppress its dependence on $\bar\tau_m$ for convenience. Note that $\textnormal{E}[(\tau_i-I(e_t(\tau_i)\leq 0 )) (\tau_j-I(e_t(\tau_j) \leq 0 ))] =\tau_i \wedge \tau_j -\tau_i\tau_j$. If each $e_t(\tau_i)$ has variance one and by pairwise joint normality, we can approximate the $(i,j)$-th entry of $\textnormal{E}[r_{t}r_{t-1} ']$ by $ P(e_{t}(\tau_i)\le 0, e_{t-1}(\tau_j)\le 0 ) - \tau_i\tau_j = \Phi_2\left(\Phi^{-1}(\tau_i),\Phi^{-1}(\tau_j); \operatorname{cov}(e_{t}(\tau_i),e_{t-1}(\tau_j))\right)- \tau_i\tau_j$,
where $\Phi_2(.;\rho)$ is a bivariate normal CDF with mean 0,  variance 1, and covariance $\rho$.
Therefore, we proceed as follows:
\begin{enumerate}
    \item Standardize the residuals $\hat{e}_t(\tau_j)$ for each $j=1,\ldots, m$ to obtain $\tilde{e}_t(\tau_j)$. Then, obtain the sample covariance $\widehat{\operatorname{cov}}(\tilde{e}_{t}(\tau_i),\tilde{e}_{t-1}(\tau_j))$ for each $(i,j)$ pair.

    \item Let $\widehat{\textnormal{E}}[r_{t-1}r_{t-1} ']$ be such that each $(i,j)$-th entry is $\tau_i\wedge \tau_j -\tau_i\tau_j$. Let $\widehat{\textnormal{E}}[r_{t}r_{t-1} ']$ be such that each $(i,j)$-th entry is estimated by $\Phi_2\left(\Phi^{-1}(\tau_i),\Phi^{-1}(\tau_j); \widehat{\operatorname{cov}}(\tilde{e}_{t}(\tau_i),\tilde{e}_{t-1}(\tau_j))\right)- \tau_i\tau_j$.     
    Then, set $\widehat{A}_r =\widehat{\textnormal{E}}[r_{t}r_{t-1} ']\widehat{\textnormal{E}}[r_{t-1}r_{t-1} ']^{-1}$.
     
\end{enumerate}
With $\widehat{A}_{x}$ and $\widehat{A}_{r}$, we obtain the alternative estimator for $A$ as $\widehat{A} = \widehat{A}_{r}  \otimes \widehat{A}_{x}$.

\section{Simulation study}
\label{sec:sim}

We carried out a simulation study that focuses on four issues: (i) the performance of the proposed fixed-smoothing approaches relative to the existing HAC (small-b or large-K) approaches, (ii) the properties of the Wald tests for testing restrictions across a given set of quantiles and the properties of the uniform-in-$\tau$ tests for the significance, homogeneity, and monotonicity hypotheses, (iii) the performance of the data-dependent bandwidth selection procedures, and (iv) the performance of the kernel HAR and OS HAR estimators. Results are given for the following location-shift model:

\vskip -0.3in

\begin{align*}
y_{t} &= \beta_0 + \beta_1 x_t + e_{t} , \\
e_{t} &= \rho e_{t-1} + v_{t}, \quad v_{t} \sim \text{N}\left(0,1-\rho^2 \right), \\
x_t &= \rho x_{t-1} + \epsilon_{t}, \quad \epsilon_{t} \sim \text{N}\left(0,1-\rho^2 \right).
\end{align*}

\vskip -0.1in
\noindent
The regression coefficients are set to $\beta_0 = 0$, $\beta_1 = 1$, implying that $\beta_{1}(\tau) = 1$ for all $\tau$. The autoregressive coefficients for $x_t$ and $e_t$ take common values $\rho \in \{0.0, 0.5, 0.8\}$. The sample sizes considered are $T \in \{200, 500, 800\}$. For HAR variance matrix estimation, we use the Bartlett kernel with bandwidth $M$, where $M$ is selected using the bandwidth selection procedure described in Section~\ref{sec:bandwidth}. For OS variance matrix estimation, we employ the orthonormal DCT-II cosine basis: $\phi_k(r) = \sqrt{2}\cos\left(\pi k\left(r-\frac{1}{2T}\right)\right), \; k=1,2,...,T-1$, where $K$ is determined by the procedure in Section~\ref{sec:bandwidth}. 
To ensure invertibility of the OS variance estimator, $K$ must be at least as large as the number of restrictions under consideration.\footnote{For the Wald test, this number is equal to $m$. For the uniform-in-$\tau$ test, invertibility is not required.} 
All simulation results are based on $5,000$ Monte Carlo replications.

\begin{table}[!htb]
\centering
\caption{Inference for the single quantile}
\label{tab:sim_single_slope}
\footnotesize
\setlength{\tabcolsep}{3.2pt}
\begin{tabular}{@{}rrrrrrrrrrrr@{}}
\toprule
\multicolumn{2}{c}{Setting}
  & \multicolumn{3}{c}{fixed-$b$}
  & \multicolumn{3}{c}{fixed-$K$}
  & \multicolumn{2}{c}{small-$b$}
  & \multicolumn{2}{c}{large-$K$} \\
\cmidrule(lr){1-2}\cmidrule(lr){3-5}\cmidrule(lr){6-8}
\cmidrule(lr){9-10}\cmidrule(lr){11-12}
  & & \multicolumn{3}{c}{kernel-HAR}
  & \multicolumn{3}{c}{OS-HAR}
  & \multicolumn{2}{c}{kernel-HAC}
  & \multicolumn{2}{c}{OS-HAR} \\
\cmidrule(lr){3-5}\cmidrule(lr){6-8}\cmidrule(lr){9-10}\cmidrule(lr){11-12}
$T$ & $\rho$ & $M$ & 10\% & 5\% & $K$ & 10\% & 5\%
  & 10\% & 5\% & 10\% & 5\% \\
\midrule
\multicolumn{12}{@{}l}{\(\tau = 0.5\)} \\
\cmidrule(lr){1-12}
200 & 0.0 & 1 & 0.098 & 0.049 & 199 & 0.098 & 0.049 & 0.101 & 0.052 & 0.100 & 0.051 \\
200 & 0.5 & 7 & 0.113 & 0.064 & 49 & 0.116 & 0.061 & 0.131 & 0.074 & 0.122 & 0.068 \\
200 & 0.8 & 25 & 0.132 & 0.077 & 20 & 0.137 & 0.078 & 0.185 & 0.119 & 0.152 & 0.096 \\
\cmidrule(lr){1-12}
500 & 0.0 & 1 & 0.097 & 0.048 & 499 & 0.098 & 0.049 & 0.099 & 0.049 & 0.099 & 0.049 \\
500 & 0.5 & 7 & 0.111 & 0.059 & 119 & 0.114 & 0.060 & 0.121 & 0.064 & 0.117 & 0.061 \\
500 & 0.8 & 26 & 0.117 & 0.063 & 46 & 0.120 & 0.062 & 0.139 & 0.081 & 0.126 & 0.070 \\
\cmidrule(lr){1-12}
800 & 0.0 & 1 & 0.096 & 0.046 & 799 & 0.099 & 0.046 & 0.099 & 0.048 & 0.099 & 0.047 \\
800 & 0.5 & 15 & 0.097 & 0.051 & 134 & 0.103 & 0.050 & 0.107 & 0.059 & 0.105 & 0.052 \\
800 & 0.8 & 54 & 0.109 & 0.063 & 51 & 0.115 & 0.066 & 0.138 & 0.082 & 0.121 & 0.071 \\
\midrule
\multicolumn{12}{@{}l}{\(\tau = 0.75\)} \\
\cmidrule(lr){1-12}
200 & 0.0 & 1 & 0.112 & 0.062 & 199 & 0.112 & 0.061 & 0.114 & 0.063 & 0.113 & 0.063 \\
200 & 0.5 & 7 & 0.123 & 0.070 & 51 & 0.126 & 0.070 & 0.139 & 0.080 & 0.133 & 0.074 \\
200 & 0.8 & 24 & 0.147 & 0.089 & 20 & 0.150 & 0.090 & 0.203 & 0.130 & 0.168 & 0.105 \\
\cmidrule(lr){1-12}
500 & 0.0 & 1 & 0.101 & 0.050 & 499 & 0.102 & 0.051 & 0.103 & 0.052 & 0.103 & 0.052 \\
500 & 0.5 & 7 & 0.115 & 0.063 & 125 & 0.120 & 0.063 & 0.124 & 0.069 & 0.122 & 0.064 \\
500 & 0.8 & 25 & 0.131 & 0.075 & 48 & 0.137 & 0.074 & 0.154 & 0.093 & 0.143 & 0.085 \\
\cmidrule(lr){1-12}
800 & 0.0 & 1 & 0.100 & 0.049 & 799 & 0.100 & 0.050 & 0.100 & 0.051 & 0.100 & 0.050 \\
800 & 0.5 & 14 & 0.101 & 0.055 & 141 & 0.108 & 0.057 & 0.112 & 0.062 & 0.109 & 0.058 \\
800 & 0.8 & 51 & 0.117 & 0.067 & 54 & 0.121 & 0.068 & 0.143 & 0.086 & 0.126 & 0.074 \\
\midrule
\multicolumn{12}{@{}l}{\(\tau = 0.9\)} \\
\cmidrule(lr){1-12}
200 & 0.0 & 1 & 0.147 & 0.090 & 199 & 0.147 & 0.090 & 0.150 & 0.093 & 0.148 & 0.092 \\
200 & 0.5 & 5 & 0.159 & 0.102 & 57 & 0.163 & 0.103 & 0.175 & 0.114 & 0.168 & 0.110 \\
200 & 0.8 & 21 & 0.211 & 0.141 & 23 & 0.215 & 0.146 & 0.256 & 0.187 & 0.231 & 0.165 \\
\cmidrule(lr){1-12}
500 & 0.0 & 1 & 0.111 & 0.059 & 499 & 0.112 & 0.061 & 0.112 & 0.062 & 0.112 & 0.061 \\
500 & 0.5 & 5 & 0.132 & 0.075 & 143 & 0.136 & 0.077 & 0.141 & 0.080 & 0.139 & 0.079 \\
500 & 0.8 & 22 & 0.161 & 0.108 & 55 & 0.166 & 0.110 & 0.180 & 0.122 & 0.171 & 0.116 \\
\cmidrule(lr){1-12}
800 & 0.0 & 1 & 0.108 & 0.056 & 799 & 0.109 & 0.056 & 0.110 & 0.057 & 0.110 & 0.057 \\
800 & 0.5 & 11 & 0.117 & 0.064 & 161 & 0.121 & 0.064 & 0.126 & 0.070 & 0.124 & 0.065 \\
800 & 0.8 & 44 & 0.149 & 0.089 & 61 & 0.148 & 0.092 & 0.170 & 0.108 & 0.156 & 0.099 \\
\bottomrule
\end{tabular}

\par\medskip
{\footnotesize
 \parbox{0.70\linewidth}{%
      Note: The table reports rejection rates for the test of a single slope coefficient at $\tau \in \{0.5, 0.75, 0.9\}$ based on 5,000 simulation repetitions. The ``Small-b Kernel HAC'' method follows \cite{galvao2024hac}. The test-based $M$ and $K$ are computed in each replication, and the medians are reported. 
 }%
}
\end{table}

\vskip 0.1in

\textbf{Test for a single quantile: } As a benchmark, we first examine the performance of the tests for a single quantile level. As shown by \cite{hwang2025har}, the kernel-based approaches reduce to the conventional fixed-b method, with the Wald and t-tests simplifying to (\ref{wald_limit_pw_ke}) and (\ref{t_limit_pw_ke}) with $q=1$. The resulting limit distributions coincide with those in Kiefer and Vogelsang (2005), so the critical values reported in their Table 1 can be used directly. Tests based on the OS variance estimator similarly reduce to their usual $t$ and $F$ limits. Table~\ref{tab:sim_single_slope} reports results for $\tau \in \{0.5,0.75, 0.9\}$, corresponding to the center, shoulder, and right tail of the outcome distribution, respectively. The results for the left tail are nearly symmetric and are omitted to save space. 

In Table~\ref{tab:sim_single_slope}, we report the medians of the data-driven bandwidth, $M$, and number of series terms, $K$, across replications. When $\rho=0.0$, the data-driven methods tend to choose a small $M$ and a large $K$; whereas in the highly dependent case with $\rho=0.8$, larger $M$ and smaller $K$ are chosen. Overall, the bandwidths reflect the degree of serial dependence well. 

Table~\ref{tab:sim_single_slope} then reports the empirical null rejection rates at nominal levels of 10\% and 5\%. The label ``fixed-b kernel-HAR'' refers to the kernel-based method developed in Section~\ref{sec:stacking}. As a benchmark, we also report the rejection rates for the HAC (small-b) approach of \citet{galvao2024hac}, labeled ``small-b kernel-HAC''. Similarly, ``fixed-K OS-HAR'' denotes the OS variance approach developed in Section~\ref{sec:stacking} and is compared with the ``large-K OS-HAR'' method. We do not include the smooth block bootstrap method of \citet{GregoryLahiriNordman18} in the simulation comparison, as existing evidence suggests that its finite-sample size control is generally close to that of the HAC method of \citet{galvao2024hac}. See \citet{hwang2025har} and \citet{CaiLong2026} for the simulation evidence.

Relative to \cite{galvao2024hac}, the fixed-b methods provide substantial improvements in size control. For example, consider the highly dependent case with $\rho=0.8$, a nominal level of  5\%, and evaluation at $\tau=0.5$. The fixed-b kernel method reduces the empirical rejection rate from $0.119$ to $0.077$ when $T=200$, from $0.081$ to $0.063$ when $T=500$, and from $0.082$ to $0.063$ when $T=800$. Similar improvements are observed across all quantiles considered. The same pattern arises when moving from the large-K to the fixed-K OS-HAR method, although the improvement is somewhat smaller than for the kernel-based method. Overall, these results provide strong evidence that fixed-smoothing inference can deliver practically important gains in size control relative to conventional small-b or large-K inference.

In comparing the fixed-b and small-b kernel methods and the fixed-K and large-K OS methods, we use the same bandwidth values $M$ and $K$, respectively, in both procedures. This ensures that any differences in rejection rates are attributable to the inference method itself rather than differences in bandwidth choice. We also experimented with the MSE-optimal bandwidth proposed in \citet{galvao2024hac} and obtained qualitatively similar results.

\begin{table}[!htb]
\centering
\caption{Inference for the slope coefficient at two quantile levels}
\label{tab:sim_two_taus}
\footnotesize
\setlength{\tabcolsep}{3.5pt}
\begin{tabular}{@{}rrrrrrrrrrrrr@{}}
\toprule
\multicolumn{3}{c}{Setting}
  & \multicolumn{3}{c}{fixed-$b$}
  & \multicolumn{3}{c}{fixed-$K$}
  & \multicolumn{2}{c}{small-$b$}
  & \multicolumn{2}{c}{large-$K$} \\
\cmidrule(lr){1-3}\cmidrule(lr){4-6}\cmidrule(lr){7-9}
\cmidrule(lr){10-11}\cmidrule(lr){12-13}
  & & & \multicolumn{3}{c}{kernel-HAR}
  & \multicolumn{3}{c}{OS-HAR}
  & \multicolumn{2}{c}{kernel-HAC}
  & \multicolumn{2}{c}{OS-HAR} \\
\cmidrule(lr){4-6}\cmidrule(lr){7-9}\cmidrule(lr){10-11}\cmidrule(lr){12-13}
$(\tau_1,\tau_2)$ & $T$ & $\rho$ & $M$ & 10\% & 5\%
  & $K$ & 10\% & 5\% & 10\% & 5\% & 10\% & 5\% \\
\midrule
$(0.5,0.75)$ & 200 & 0.0 & 1 & 0.112 & 0.059 & 199 & 0.111 & 0.059 & 0.116 & 0.064 & 0.115 & 0.063 \\
 & 200 & 0.5 & 7 & 0.114 & 0.066 & 48 & 0.114 & 0.064 & 0.141 & 0.084 & 0.128 & 0.078 \\
 & 200 & 0.8 & 27 & 0.149 & 0.090 & 18 & 0.148 & 0.088 & 0.268 & 0.192 & 0.197 & 0.135 \\
\cmidrule(lr){2-13}
 & 500 & 0.0 & 1 & 0.106 & 0.052 & 499 & 0.106 & 0.056 & 0.107 & 0.058 & 0.107 & 0.058 \\
 & 500 & 0.5 & 7 & 0.110 & 0.057 & 121 & 0.114 & 0.058 & 0.124 & 0.067 & 0.121 & 0.062 \\
 & 500 & 0.8 & 28 & 0.124 & 0.073 & 42 & 0.128 & 0.074 & 0.170 & 0.108 & 0.149 & 0.089 \\
\cmidrule(lr){2-13}
 & 800 & 0.0 & 1 & 0.103 & 0.050 & 799 & 0.104 & 0.053 & 0.106 & 0.054 & 0.105 & 0.054 \\
 & 800 & 0.5 & 14 & 0.105 & 0.053 & 137 & 0.107 & 0.052 & 0.121 & 0.064 & 0.112 & 0.056 \\
 & 800 & 0.8 & 57 & 0.122 & 0.067 & 47 & 0.129 & 0.066 & 0.187 & 0.117 & 0.146 & 0.083 \\
\midrule
$(0.5,0.9)$ & 200 & 0.0 & 1 & 0.130 & 0.077 & 199 & 0.129 & 0.078 & 0.135 & 0.082 & 0.133 & 0.081 \\
 & 200 & 0.5 & 7 & 0.142 & 0.092 & 47 & 0.144 & 0.092 & 0.174 & 0.114 & 0.161 & 0.107 \\
 & 200 & 0.8 & 29 & 0.197 & 0.126 & 18 & 0.196 & 0.126 & 0.322 & 0.247 & 0.250 & 0.181 \\
\cmidrule(lr){2-13}
 & 500 & 0.0 & 1 & 0.112 & 0.055 & 499 & 0.113 & 0.058 & 0.114 & 0.059 & 0.114 & 0.060 \\
 & 500 & 0.5 & 7 & 0.126 & 0.068 & 117 & 0.126 & 0.069 & 0.142 & 0.082 & 0.136 & 0.076 \\
 & 500 & 0.8 & 30 & 0.150 & 0.091 & 42 & 0.153 & 0.092 & 0.207 & 0.138 & 0.173 & 0.112 \\
\cmidrule(lr){2-13}
 & 800 & 0.0 & 1 & 0.102 & 0.053 & 799 & 0.103 & 0.055 & 0.104 & 0.057 & 0.104 & 0.057 \\
 & 800 & 0.5 & 15 & 0.117 & 0.057 & 133 & 0.117 & 0.057 & 0.134 & 0.071 & 0.124 & 0.062 \\
 & 800 & 0.8 & 61 & 0.133 & 0.077 & 48 & 0.140 & 0.083 & 0.200 & 0.131 & 0.156 & 0.101 \\
\midrule
$(0.1,0.9)$ & 200 & 0.0 & 1 & 0.141 & 0.091 & 199 & 0.140 & 0.091 & 0.145 & 0.096 & 0.144 & 0.095 \\
 & 200 & 0.5 & 7 & 0.161 & 0.106 & 50 & 0.162 & 0.106 & 0.194 & 0.130 & 0.179 & 0.121 \\
 & 200 & 0.8 & 27 & 0.222 & 0.146 & 20 & 0.231 & 0.155 & 0.344 & 0.265 & 0.278 & 0.207 \\
\cmidrule(lr){2-13}
 & 500 & 0.0 & 1 & 0.119 & 0.065 & 499 & 0.119 & 0.069 & 0.120 & 0.070 & 0.120 & 0.070 \\
 & 500 & 0.5 & 7 & 0.142 & 0.080 & 124 & 0.145 & 0.081 & 0.155 & 0.094 & 0.150 & 0.088 \\
 & 500 & 0.8 & 28 & 0.176 & 0.111 & 47 & 0.183 & 0.117 & 0.233 & 0.161 & 0.203 & 0.140 \\
\cmidrule(lr){2-13}
 & 800 & 0.0 & 1 & 0.108 & 0.056 & 799 & 0.108 & 0.058 & 0.109 & 0.059 & 0.109 & 0.059 \\
 & 800 & 0.5 & 14 & 0.117 & 0.060 & 141 & 0.117 & 0.064 & 0.132 & 0.074 & 0.122 & 0.069 \\
 & 800 & 0.8 & 58 & 0.149 & 0.090 & 53 & 0.157 & 0.099 & 0.216 & 0.146 & 0.171 & 0.115 \\
\bottomrule
\end{tabular}

\par\medskip
{\footnotesize
 \parbox{0.7\linewidth}{
    Note: Empirical null rejection rates of the stack-Wald test for the null $H_0:\beta(\tau_1)=\beta(\tau_2)=1$ are reported for the values of $(\tau_1,\tau_2)$ shown in the table. Rejection rates are shown at nominal significance levels of 10\% and 5\%. $M$ and $K$ denote the median bandwidth values selected by the test-based bandwidth procedures in Section \ref{sec:bandwidth}.
 }%
}
\end{table}

\vskip 0.1in
\textbf{Stack-Wald tests for joint hypotheses:}  The Wald test in \eqref{eq_wald_m} is implemented using the two smoothing-based inference procedures, kernel-HAR and OS-HAR. We begin with a simple setting involving comparisons across two quantiles. The null hypothesis is $H_{0}: \beta_{1}(\tau_1) = 1$, $\beta_{1}(\tau_2) = 1$ so that the number of restrictions is $m=2$. We consider three pairs of $(\tau_1 , \tau_2)$: $(0.5, 0.75)$, $(0.5, 0.9)$, and $(0.1, 0.9)$, corresponding to joint equality tests for the center and shoulder of the distribution, the center and the tail, and the left and right tails, respectively. The Wald test has the usual fixed-b distribution and the critical values are provided by \citet{KieferVogelsang05}.

Table~\ref{tab:sim_two_taus} reports the empirical null rejection rates with the selected test-based bandwidth values. Overall, the fixed-b kernel and fixed-K OS approaches perform well, particularly when the sample size is relatively large. To illustrate, consider the case with $T=500$, $\rho=0.8$, and a nominal level of 5\%. For $(\tau_1 , \tau_2) = (0.5,0.75)$, the empirical rejection rates are $0.073$, and $0.074$ for the fixed-b kernel and fixed-K OS methods, respectively.  These rates compare favorably with the corresponding rejection rates of $0.108$ and $0.089$ for the small-b kernel and large-K OS methods.
For $(\tau_1 , \tau_2) = (0.5,0.9)$, the rejection rates of the fixed-smoothing methods increase slightly to $0.091$ and $0.092$, respectively. This suggests that joint inference becomes more challenging when one of the quantiles is located farther in the tail of the distribution. Comparing the two fixed-smoothing approaches, the kernel and OS methods exhibit similar performance, with neither consistently dominating the other.

One natural question is how the Wald test performs as the number of restrictions, or equivalently the number of quantile levels, increases. 
To investigate this issue, we test the null hypothesis $H_0 : \beta_1(\tau_1)= \cdots =\beta_1(\tau_m) = 1$ over a grid of quantile levels $\{\tau_1, \ldots, \tau_{m}\} \in [0.1,0.9]$, where the number of restrictions is $m \in \{5,9,17\}$. 
The results are discussed in detail in Section~\ref{sec:appendix_sim} of the appendix. The results in Table~\ref{tab:sim-stacked} show that the empirical null rejections of the Wald test increases as the number of restrictions $m$ grows. For example, consider the case with $T=500$, $\rho=0.8$, and a nominal level of 5\%. The rejection rates are $0.112$, $0.138$, and $0.179$ for $m = 5$, $9$, and $17$, respectively, for the kernel-HAR method. 

\vskip 0.1in

\textbf{Sup-t test and uniform confidence bands: } For the Sup-t test, the null hypothesis is $H_0: \beta_{1}(\tau) = 1$ for all $\tau \in \mathcal{T}$, where the quantile range of interest is $\mathcal{T} = [0.1,0.9]$. The simulation design follows that for the Wald test, with three sample sizes and three values of the autocorrelation coefficient. Results for the Sup-t test are given for a range of bandwidth values. In practice, implementation of the Sup-t test requires a finite grid of quantile levels together with interpolation over the grid. We therefore consider grids $\{\tau_1 , \ldots, \tau_n\} \in \mathcal{T}$ with $n \in \{5,9,17\}$. Unlike the Wald test, however, the number of grid points $n$ does not correspond to the number of restrictions under the null. As shown below, the results are largely insensitive to the choice of $n$. 

The critical values are obtained by simulation as described in Section~\ref{sec:uniform.in.tau}. Specifically, to approximate the Gaussian process via the Karhunen–Lo\`{e}ve expansion, one must choose $L$, the number of leading eigenvalues in the eigenvalue decomposition of the covariance matrix, as described in Algorithm~\ref{alg:uniform_tests}. We set $L$ to be the smallest integer that explains $99\%$ of the sum of all positive eigenvalues. Recall our simulations use 5,000 replications. Within each replication, the critical value is obtained from 5,000 simulated statistics, with the Brownian motions approximated on a grid of 1,000 points. The nuisance parameter required in (11) is estimated with the MSE-optimal bandwidth of \cite{galvao2024hac}.

Figure \ref{fig:supt_combined} reports the results for the kernel-HAR and OS-HAR cases in subfigures (a) and (b), respectively. The plots display the null empirical rejection rates of the Sup-t test at the nominal 5\% level. Equivalently, these results can be interpreted in terms of the coverage rates of the associated 95\% uniform confidence bands discussed in Section~\ref{sec:band}. For example, a rejection rate of $0.06$ corresponds to a coverage probability of $0.94$.

Overall, the Sup-t test exhibits excellent size control across bandwidth choices when using the kernel-HAR estimator. Rejection rates are slightly above the nominal level when $b$ is very small, but quickly approach the nominal level as $b$ increases to moderate values, even under strong serial dependence with $\rho=0.8$. A similar pattern is observed for the OS-HAR estimator under fixed-K inference: rejection rates remain close to the nominal level except when $K$ is large. The Sup-t test is largely insensitive to the number of grid points $n$ used to discretize $\mathcal{T}$. Some small variations arise for the OS-HAR approach when $K$ is large, but these are not relevant for the cases this paper focuses on, where $b$ is large and $K$ is small.


\begin{figure}[h]
    \centering

    \begin{subfigure}{\linewidth}
        \centering
        \includegraphics[width=\linewidth]{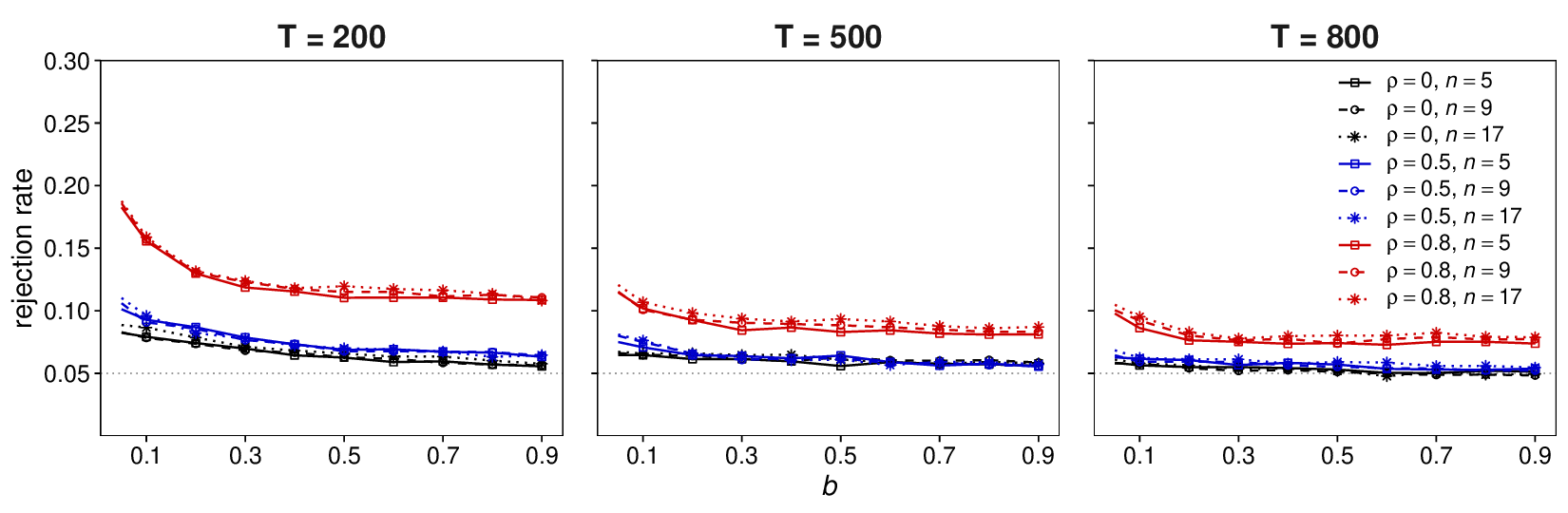}
        \caption{Rejection rates of the Sup-t test; kernel HAR estimator}
        \label{fig:4_supt_kernel}
    \end{subfigure}

    \vspace{0.8em}

    \begin{subfigure}{\linewidth}
        \centering
        \includegraphics[width=\linewidth]{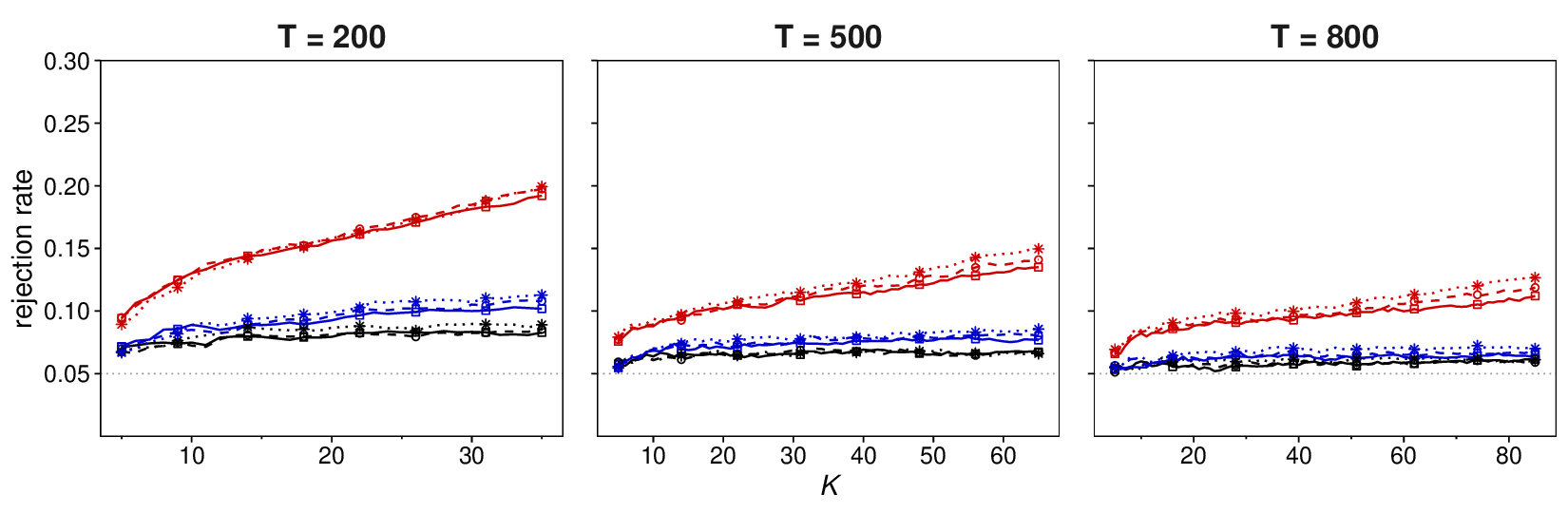}
        \caption{Rejection rates of the Sup-t test; OS HAR estimator}
        \label{fig:5_supt_OS}
    \end{subfigure}
    \caption{Rejection rates of the Sup-t test across alternative specifications.}
    \label{fig:supt_combined}
\end{figure}

\vskip 0.1in

\textbf{Tests for shape restrictions: } Next consider performance of the proposed tests for shape restrictions. 
In practice, the homogeneity test is implemented using the procedure described in Section \ref{sec:test_shape}. For the monotonicity test, we consider adjacent quantile comparisons and test $H_0: \beta_1(\tau_i) -\beta_1(\tau_{i-1}) \geq 0$ for $i=2,\ldots,m$ against $H_1: \beta_1(\tau_i) -\beta_1(\tau_{i-1}) < 0 $ for at least one $i$. Restricting attention to adjacent pairs reduces the computational burden. The monotonicity test can be viewed as a one-sided version of the homogeneity test.

The simulation design is the same as that used for the Sup-t test. Figures~\ref{fig:homo_combined} and \ref{fig:mono_combined} in the appendix report the results for the homogeneity and monotonicity tests, respectively. The two tests exhibit similar overall patterns. In particular, the kernel-HAR approach exhibits excellent finite sample performance, although it becomes slightly conservative as the bandwidth ratio $b$ approaches one. The OS-HAR approach also provides excellent size control, except when $K$ is large.

\vskip 0.1in

\textbf{Empirical power of the tests: } This section examines the finite sample power of the Wald and Sup-t tests for testing the null $\beta(\tau)=1$ for either a grid of $\tau$ (Wald) or all $\tau$ (Sup-t). The simulation design remains the same as in the previous sections, except that we set $\beta(\tau) = 1+\delta$, where $\delta\in (0,1]$ indexes deviations from the null. A $0.01$ grid is used for $\delta$. Power is size-adjusted so that observed power differences do not reflect null rejection distortions. 

Figure~\ref{fig_power_diffsmoothing} illustrates the effect of the smoothing parameters on size-adjusted power for $T=200$ and $\rho=0.5$. We set $m=5$ for the Wald test and $n=5$ for the Sup-t test.\footnote{Power simulation results for other configurations are available upon request.} Overall, the proposed fixed-smoothing methods have good power properties. As expected, power increases as $b$ decreases or $K$ increases. The Sup-t test is less sensitive to the choice of smoothing parameter and, more importantly, has higher power than the Wald test. The higher power of Sup-t relative to Wald in our quantile context is similar to the finding by \cite{montiel2019simultaneous} of more narrow confidence intervals (higher power) of Sup-t relative to other methods for inference on impulse response functions.

Figures~\ref{fig_power_diffm_kernel} and \ref{fig_power_diffm_os} in the appendix report size-adjusted power for different choices of $m$ and $n$, focusing on the case with $T=200$, $\rho=0.5$, and smoothing parameters $b \in \{0.1,0.3\}$, and $K \in \{13,21\}$. For the Wald test, $m$ corresponds to the number of restrictions under the null, and the power decreases substantially as $m$ increases. In contrast, for the Sup-t test, $n$ determines how finely the grid discretizes $\mathcal{T}$, and while increasing $n$ tends to increase power, increases beyond $n=5$ yields negligible changes in power. Additional simulation results and discussion are provided in Section C.1.3.

\begin{figure}[!htb]
\centering
\includegraphics[width=0.48\linewidth]{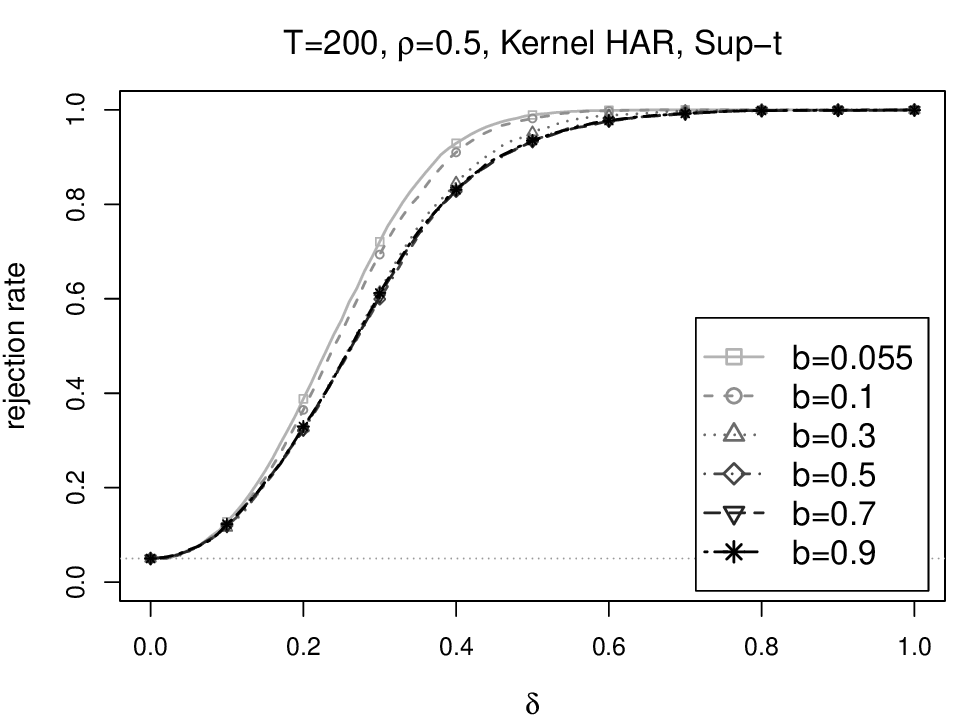}\hfill\includegraphics[width=0.48\linewidth]{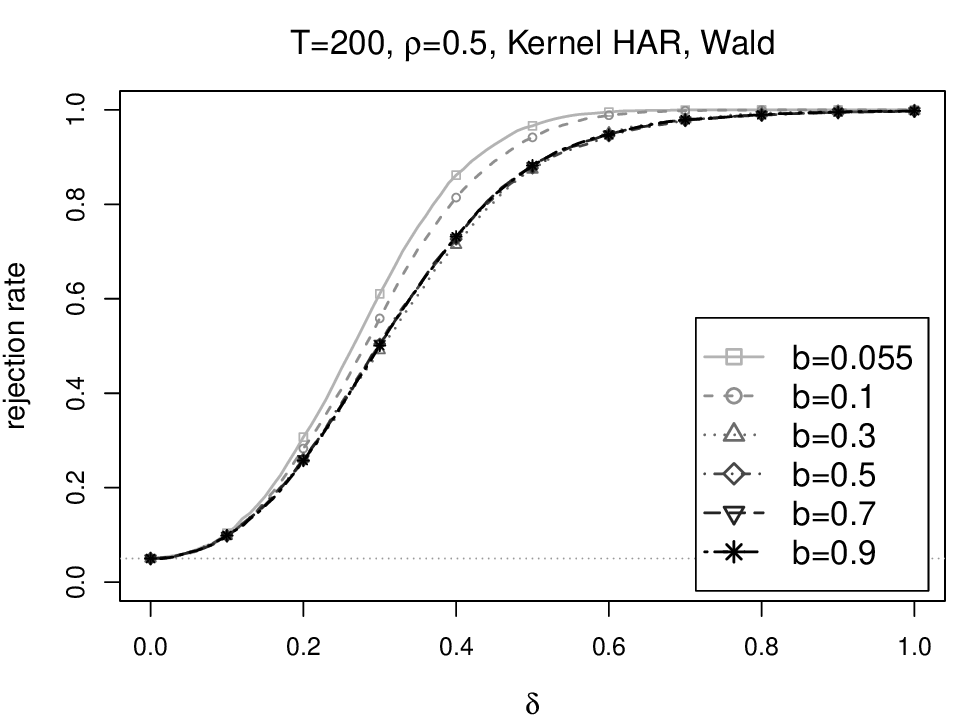}\\
\includegraphics[width=0.48\linewidth]{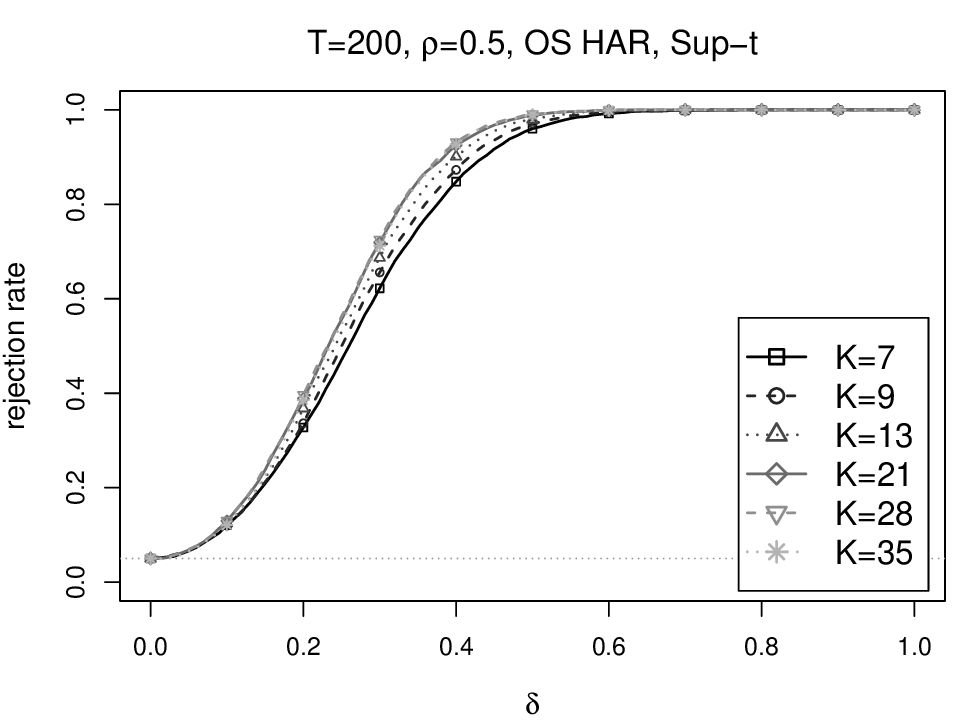}\hfill\includegraphics[width=0.48\linewidth]{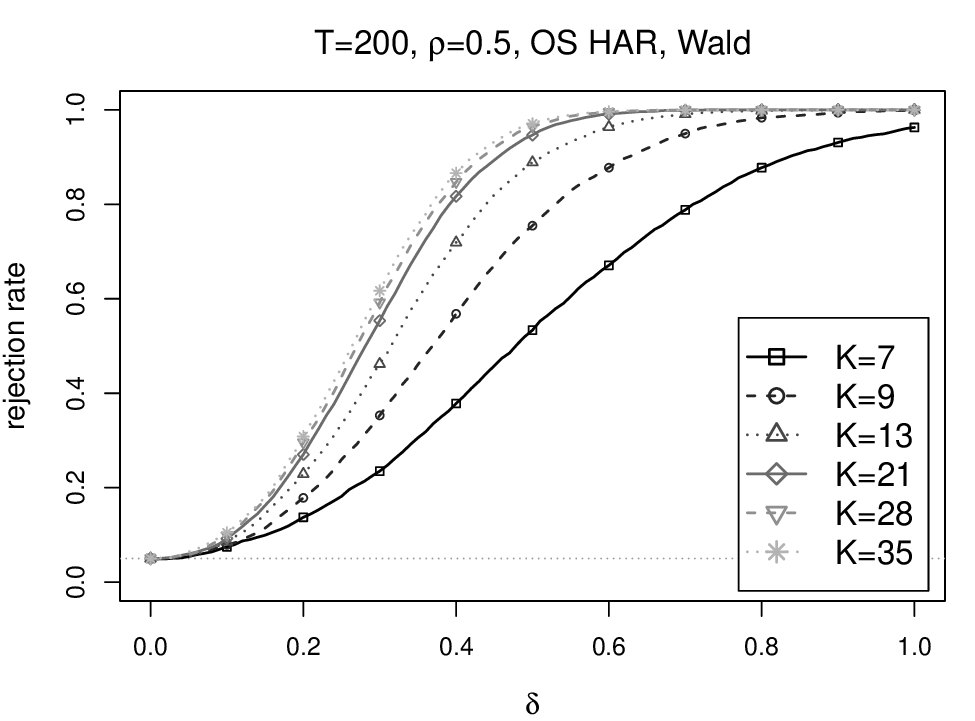}\\
\caption{Size-adjusted power across different smoothing parameters}
\label{fig_power_diffsmoothing}
\end{figure}

\section{Application: Predictive Quantile Regression for Stock Returns}
\label{sec:app}

In this section, we apply our uniform fixed-b inference methods to the problem of predicting stock returns across quantiles. The predictability of stock returns has long been an important issue in the empirical study of financial markets. A substantial body of work has focused on forecasting the mean of returns; see \citet{Welch.Goyal.2008}, and \citet{lettau2010measuring} for comprehensive reviews. More recently, an emerging literature has examined predictability beyond the mean using quantile regression, asking whether different parts of the return distribution are forecastable. This line of research finds that several economic state variables exhibit limited predictive power for the center of the distribution, but exert statistically and economically significant effects in the tails. See \citet{Cenesizoglu.Timmermann.2008}, \citet{Han.Linton.Oka.Whang.2016}, \citet{Gungor.Luger.2021}, \citet{MaynardShimotsuKuriyama24}, and \citet{hoga2025selfnormalized}.

Our empirical analysis uses an updated version of the monthly data from \citet{Welch.Goyal.2008}, covering the period from January 1926 to December 2024 with sample size 1,188. Stock returns are measured using the S\&P 500 index, including dividends, and are converted to excess returns by subtracting the short-term T-bill rate. To keep our results directly comparable with the existing literature, the lagged value of each predictor is included separately in a univariate regression: $y_{t} = \alpha(\tau) + \beta_{1}(\tau) x_{t-1} + e_{t}(\tau)$.

Several commonly used predictors, such as the dividend price ratio and the earnings price ratio, exhibit strong persistence. To ensure that we work with stationary variables, we instead focus on stock variance, inflation rate, long-term returns, and net equity expansion as predictor variables. 
In all cases, the augmented Dickey-Fuller test provides strong evidence against the null hypothesis of a unit root. \citet{Welch.Goyal.2008} provide detailed variable definitions.

To illustrate the proposed inference methods across quantiles, we revisit the predictive quantile regression results in Figure~\ref{fig-app1} and construct uniform confidence bands using the fixed-b Sup-t approach. In Figure~\ref{fig-app2}, the black solid line represents the estimated stock variance quantile effect on equity returns, the red dotted lines represent the 95\% uniform confidence band proposed in this paper, and the purple dashed lines denote the corresponding pointwise confidence bands, included for comparison. Both bands are constructed using the same bandwidth $M = bT$ with $b=0.2$. As expected, the uniform band is wider than the pointwise band, but remains sufficiently tight to be informative.

We next consider the questions posed in the introduction for two quantile indices, $\tau_1$ and $\tau_2$. Specifically, we examine (i) the significance hypothesis $H_0: \beta_{1}(\tau_1) = \beta_{1}(\tau_2) = 0$, (ii) the equality hypothesis $H_0 : \beta_{1}(\tau_1) = \beta_{1}(\tau_2)$, and (iii) the monotonicity hypothesis $H_0: \beta_{1}(\tau_1) \leq \beta_{1}(\tau_2)$. To compare the center and the right tail of the return distribution, we set $\tau_1 = 0.5$ and $\tau_2 = 0.9$. The corresponding test statistics are $3.784$, $5.236$, and $5.236$, while the associated 95\% critical values are $3.068$, $2.492$, and $-1.845$, respectively. These results indicate rejection of the first two null hypotheses and failure to reject the third hypothesis at the 5\% level. In particular, the stock variance quantile effect is statistically significant at least one quantile, differs between the center and the tail, and is larger at the tail than at the center. To assess heterogeneity within the right tail, we next compare two nearby quantiles, $\tau_1 = 0.91$ and $\tau_2 = 0.93$. The corresponding test statistics are $3.072$, $0.362$, and $0.362$, while the associated 95\% critical values are $2.797$, $2.769$, and $-2.105$, respectively. The results indicate that the stock variance quantile effect is statistically significant, but not significantly different across the two quantiles. These findings are consistent with the evidence from the uniform confidence band.

\begin{figure}[h]
\begin{center}
\caption{Quantile Effects of Stock Variance on S\&P 500 Returns}
\label{fig-app2}\vspace{-0.5cm}
\includegraphics[width=0.7\textwidth, height=0.4\textheight]{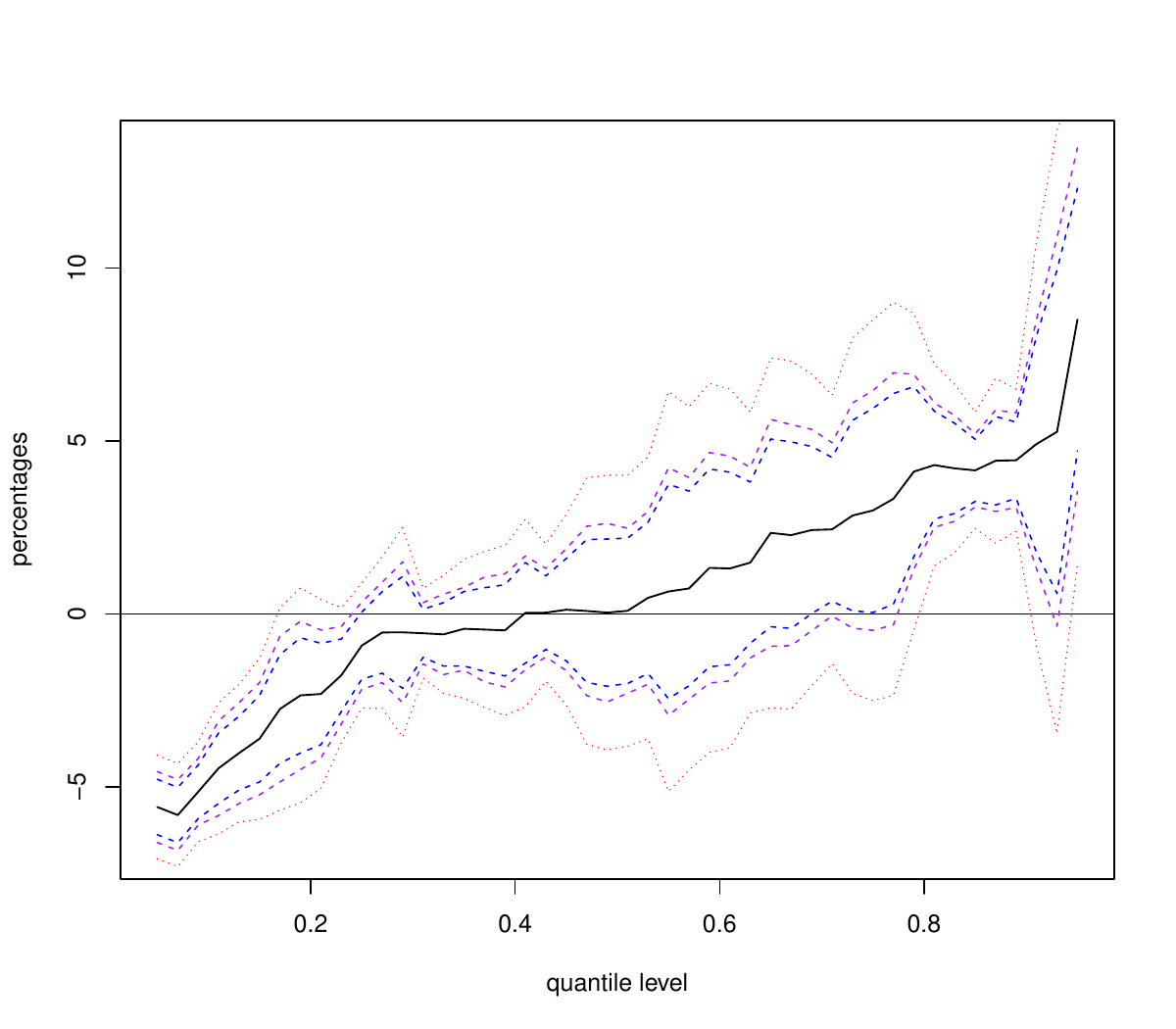}
\end{center}
\par
\begin{center}
\begin{minipage}[l]{16.0cm}
{\footnotesize Estimated quantile regression coefficients of monthly S\&P 500 excess returns on lagged stock variance. The solid black line shows the coefficient estimates. The dashed blue lines and dashed purple lines represent 95\% pointwise confidence bands based on the HAC method of \citet{galvao2024hac} and the fixed-b HAR method, respectively, while the dotted red lines denote 95\% uniform confidence bands based on the fixed-b HAR method. The Bartlett kernel is used with bandwidth $M = 0.2 T = 237$.}
\end{minipage}
\end{center}
\end{figure}

We then turn to hypotheses defined over the entire quantile process. Let $\mathcal{T} = [0.05, 0.95]$. First, consider the (global) significance hypothesis $H_0 : \beta_{1}(\tau) = 0$ for all $\tau \in \mathcal{T}$. The test statistic $t^{S}(\mathcal{T}) = 9.059$ and the 95\% critical value is $3.719$, indicating that the effect is indeed nonzero for some quantiles. Next, we test the (global) homogeneity hypothesis $H_0 : \beta_{1}(\tau_1) = \beta_{1}(\tau_2)$ for all $\tau_1 \neq \tau_2 \in \mathcal{T}$. The test statistic $t^{H}(\mathcal{T}) = 6.476$, which exceeds its 95\% critical value of $4.902$, indicating that the stock variance quantile effect is not constant and varies across quantiles. Finally, we examine the monotonicity hypothesis $H_0 : \beta_{1}(\tau_1) \leq \beta_{1}(\tau_2)$ for any $\tau_1 < \tau_2 \in \mathcal{T}$. The test statistic is $t^{M}(\mathcal{T}) = -1.038$, compared to a 95\% critical value of $-4.245$. Since the null is not rejected, the results do not provide sufficient evidence against the hypothesis that the stock variance quantile effect is monotonically increasing over $\mathcal{T}$. 

Panel (a) of Figure~\ref{fig-app3} shows the inflation quantile effect on stock returns. The red dotted lines represent the 95\% uniform confidence band, while the purple dashed lines denote the corresponding pointwise confidence bands. The two bands suggest somewhat different interpretations. From the pointwise perspective, the estimated inflation effect is generally close to zero, but appears positive and statistically significant in the left tail, which may be interpreted as evidence that higher inflation is associated with reduced downside risk in stock returns. However, the uniform band indicates that the effect is largely indistinguishable from zero across most quantiles. The only visible exception occurs at $\tau = 0.13$, where the estimated inflation effect is $0.773$ with a 95\% uniform confidence interval of $(0.113, 1.434)$.

Turning to formal inference, the test statistic for the global significance hypothesis is $3.800$, with corresponding critical values of $3.241$ and $4.338$ at the 5\% and 1\% levels, respectively. Thus, the null hypothesis is rejected at the 5\% level but not at the 1\% level, indicating at most weak evidence that inflation has a nonzero effect at some quantiles.

\begin{figure}
     \centering
        \caption{Quantile Effects on S\&P 500 Returns Across Predictors}
        \label{fig-app3}     
     \begin{subfigure}[b]{0.5\textwidth}
         \centering
         \includegraphics[width=1.2\linewidth, height=6cm]{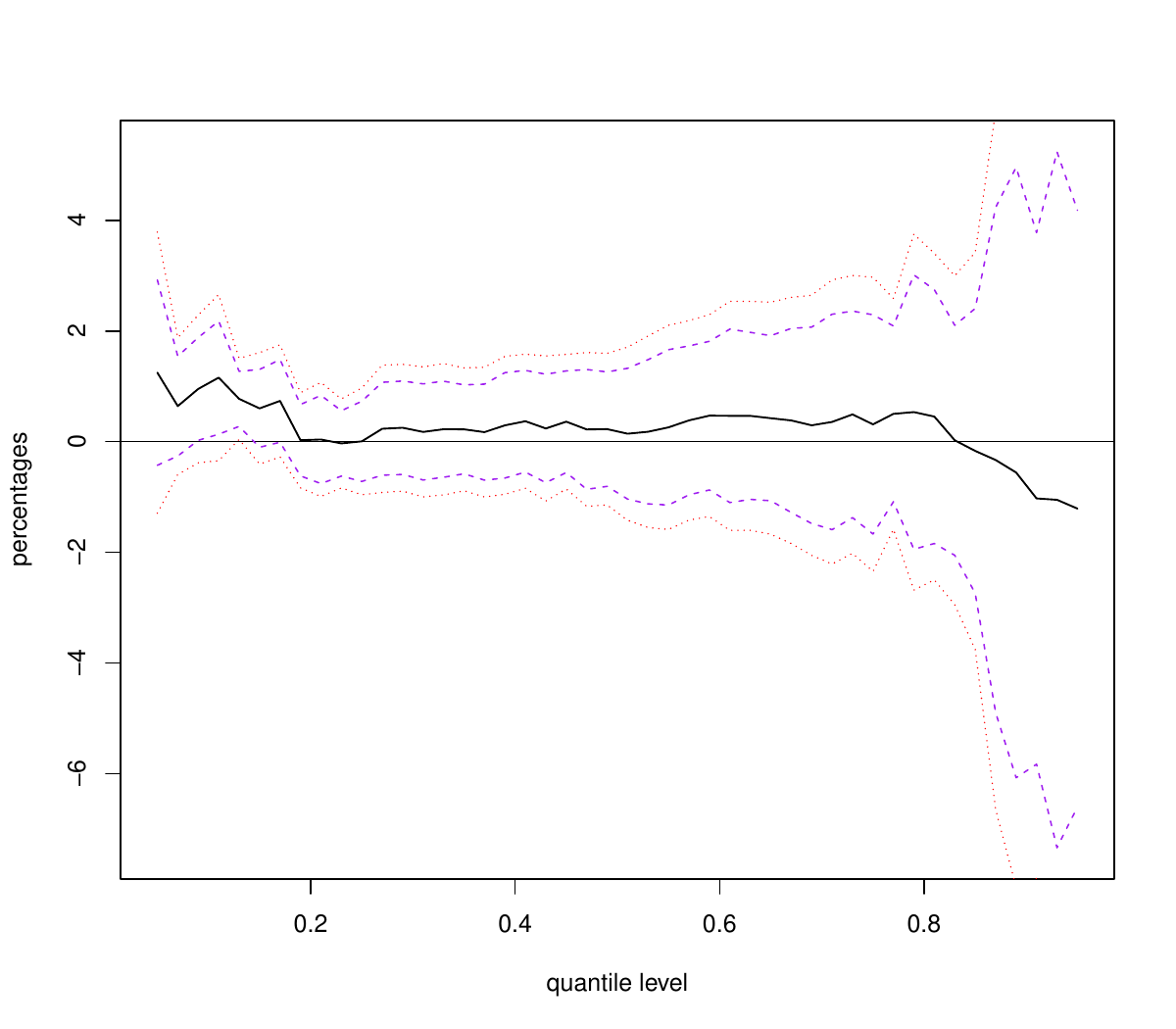}
         \caption{Inflation Rate}
         \label{fig:y equals x}
     \end{subfigure}
     \hfill
     
     \vskip -0.2in
     \begin{subfigure}[b]{0.5\textwidth}
         \centering
         \includegraphics[width=1.2\linewidth, height=6cm]{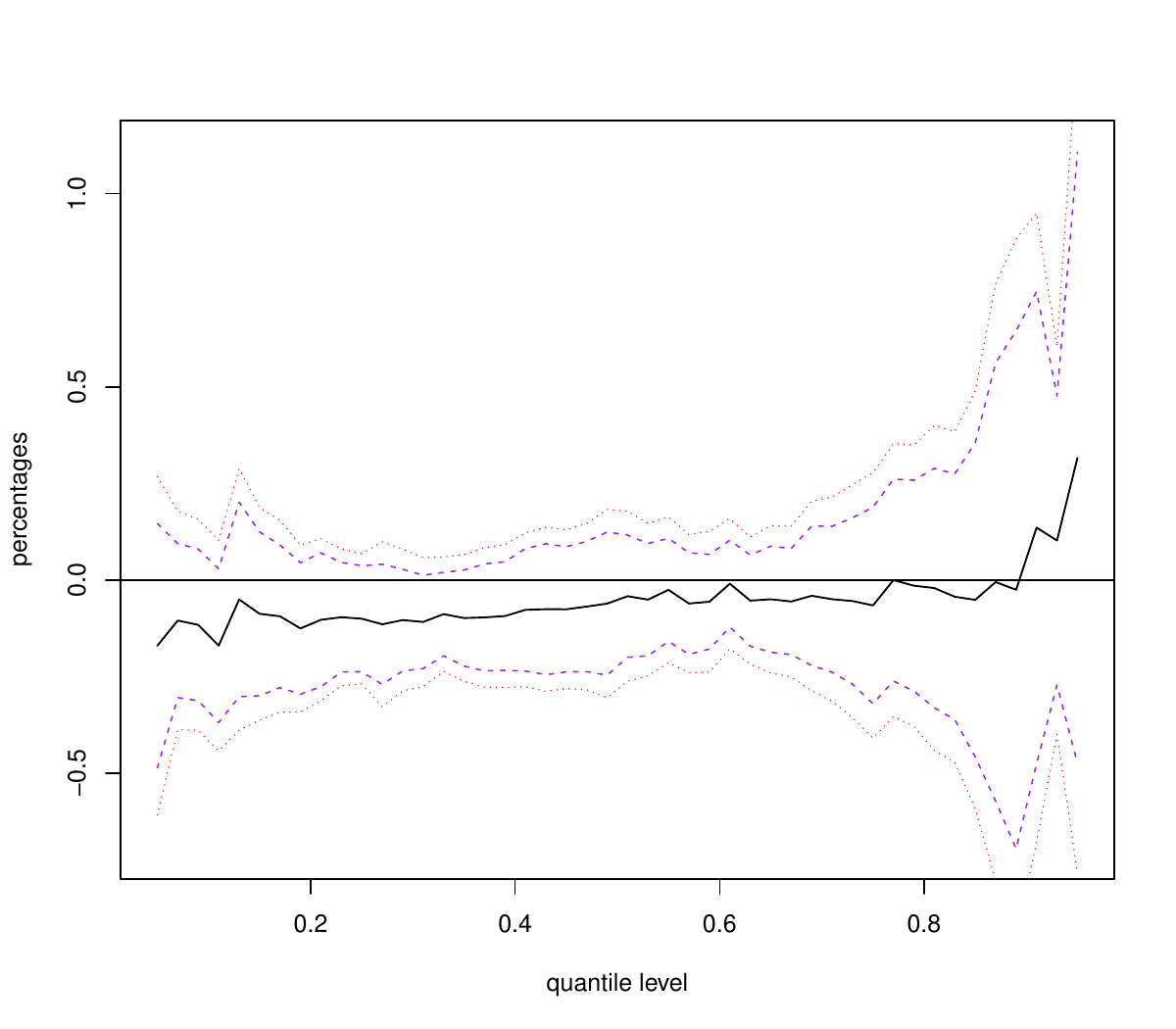}
         \caption{Long-Term Return}
         \label{fig:three sin x}
     \end{subfigure}
     \hfill
     \vskip -0.2in

     \begin{subfigure}[b]{0.5\textwidth}
         \centering
         \includegraphics[width=1.2\linewidth, height=6cm]{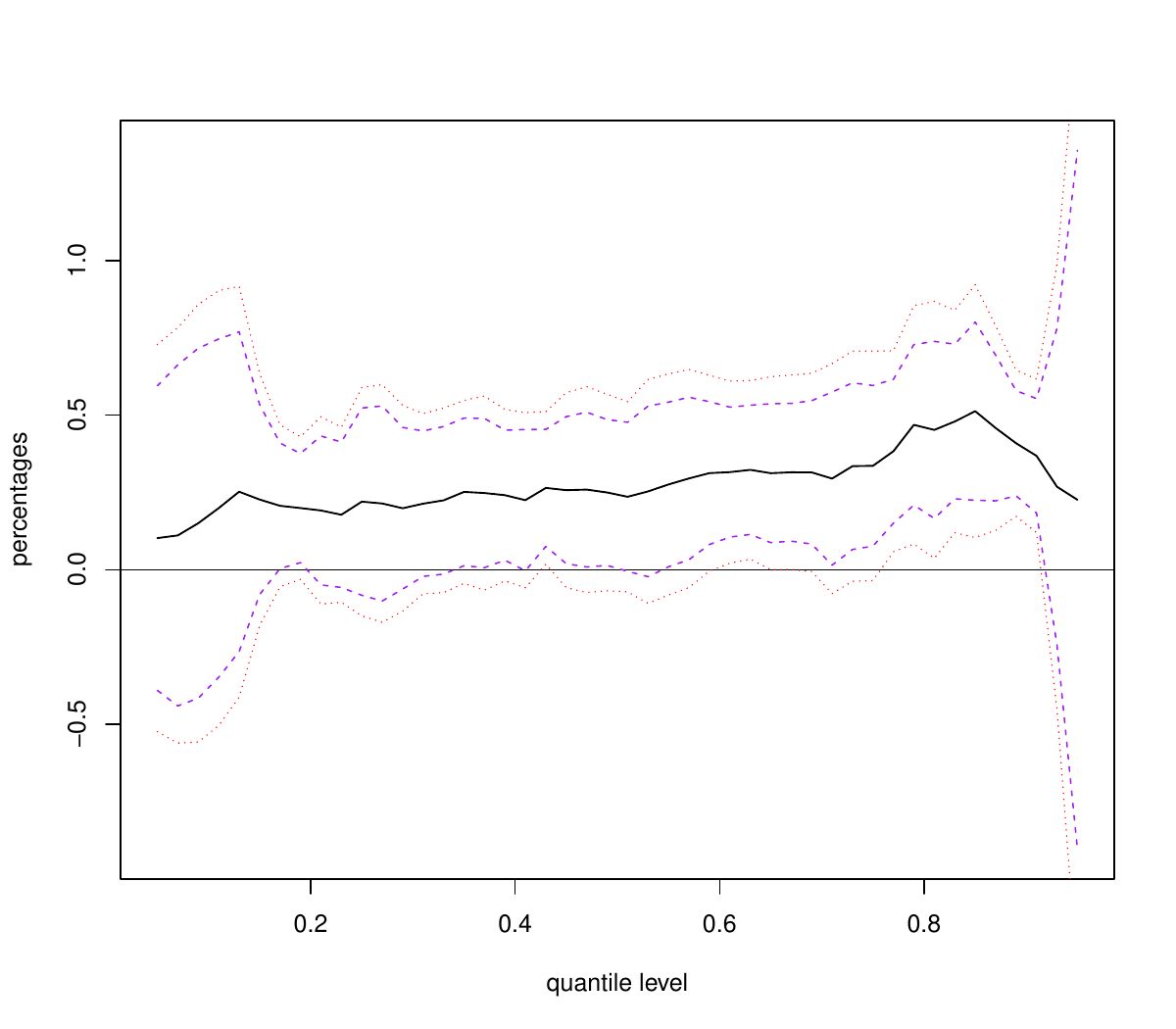}
         \caption{Net Equity Expansion}
         \label{fig:five over x}
     \end{subfigure}
\par
\begin{center}
\small{
\begin{minipage}[l]{16.0cm}
Each panel reports estimated quantile regression coefficients of monthly S\&P 500 excess returns on a lagged predictor. The solid line shows the estimates, while the red and purple dotted lines denote the 95\% uniform and pointwise confidence bands, respectively.
\end{minipage}
}
\end{center}        
\end{figure}

The uniform inference methods are then used to examine the effects of long-term returns and net equity expansion. The long-term return is interpreted as a proxy for equity risk, while net equity expansion is a measure of corporate issuing activity.\footnote{Net equity expansion is measured as the ratio of $12$-month net equity issuance by NYSE-listed firms to their end-of-year market capitalization.}
Panels (b) and (c) of Figure~\ref{fig-app3} report the estimated quantile effects of these predictors along with their 95\% confidence bands. The two variables display sharply contrasting patterns.

Panel (b) of Figure~\ref{fig-app3} reveals little evidence that long-term returns have any meaningful predictive content across the distribution. The estimated quantile effects remain close to zero at all quantiles, and the uniform confidence band consistently covers zero. This lack of effect is confirmed by formal testing: the test statistic for the global significance hypothesis is $1.007$, well below the 5\% critical value of $3.423$. Taken together, both graphical and statistical evidence suggest that long-term returns contribute little to explaining variation in stock returns at any part of the distribution.

In contrast, panel (c) of Figure~\ref{fig-app3} presents a very different picture for net equity expansion. The estimated effects are uniformly positive and exhibit a degree of stability across quantiles, with the confidence band indicating statistical significance over a broad range of the distribution. This visual impression is supported by the formal test: the test statistic for the global significance hypothesis, $5.989$, exceeds the 95\% critical value of $3.461$. These results provide strong evidence that net equity expansion is positively associated with future stock returns across quantiles. 

The same inferential methods can be applied to multi-step quantile prediction: $y_{t} = \alpha(\tau) + \beta_{h}(\tau) x_{t-h} + e_{t}(\tau)$. Figure~\ref{fig-app4} repeats the return-prediction exercise using stock variance while varying the forecast horizon, $h$. Panels (a)-(d) present multi-step-ahead quantile regression results for $h=1$, 3, 12, and 24, corresponding to forecasts one month, three months, one year, and two years ahead. The results indicate that the effect of market-wide stock variance is strongest at short horizons, especially in the tails, but becomes smaller and less precisely estimated as the horizon increases. These findings illustrate how quantile predictive regressions, combined with uniform inference procedures, help uncover heterogeneity in predictive effects across both quantiles and forecast horizons.

\begin{figure}[!htb]
\begin{center}
\caption{Quantile Effects of Stock Variance on S\&P 500 Returns over Multiple Forecast Horizons}
\label{fig-app4}
\includegraphics[width=0.85\textwidth, height=0.55\textheight]{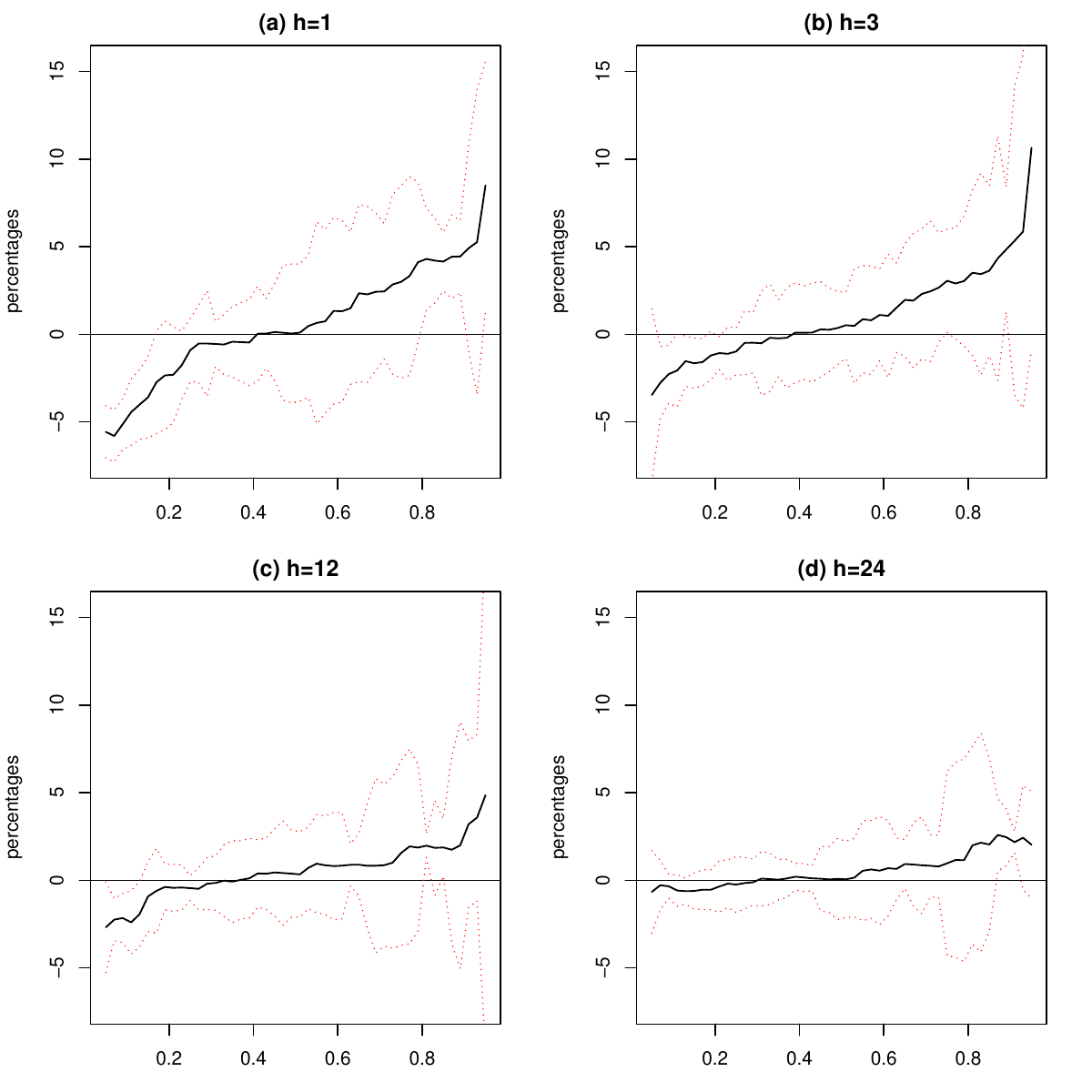}
\end{center}
\par
\begin{center}
\small{
\begin{minipage}[l]{16.0cm}
Each panel presents multi-step-ahead quantile regression results for $h = 1$, $3$, $12$, and $24$, corresponding to forecasts one month, three months, one year, and two years ahead.
\end{minipage}
}
\end{center}        

\end{figure}

\section{Conclusion}\label{sec:conclusion}

This paper develops fixed-smoothing inference methods for time-series quantile regression that are robust to heteroskedasticity and autocorrelation. Our approach enables simultaneous and uniform inference for quantile effects by explicitly accounting for dependence across quantiles. The proposed framework provides a unified set of tools for studying heterogeneity in quantile effects. Theoretical results establish the validity of the proposed procedures under general weak dependence.

Finite sample simulation evidence indicates that the fixed-smoothing methods substantially improve size control relative to existing HAC-based approaches while maintaining good power. The proposed test-based data-driven bandwidth rules are found to work well in practice. Our empirical application further illustrates the value of uniform inference for uncovering heterogeneous predictive effects across quantiles and forecast horizons. 

A natural extension of this work would be inference for quantile impulse response functions based on quantile vector autoregressions or local projection methods \citep{White.Kim.Manganelli.2015, Han.Jung.Lee.2024}. Uniform inference procedures that jointly account for quantile levels and impulse-response horizons would provide useful tools for assessing the heterogeneous effects of macroeconomic shocks across the outcome distribution and over time. We leave this extension for future research.

\newpage
\singlespacing

\bibliographystyle{apalike2}
\bibliography{reference_fixedbQR.bib}

\bigskip \bigskip

\newpage

\centerline{ \Large \bf{Online Supplemental Appendix for }} \vspace{5mm} 

\centerline{ \Large \bf{``Fixed-smoothing Uniform Inference for Quantile Regression''}} \vspace{5mm}

\centerline{ \large \bf{By Kaicheng Chen, Antonio Galvao, Seunghwa Rho, Tim Vogelsang, and Jungmo Yoon}} \vspace{1cm} 
\centerline{ \Large \bf{(For online publication only)}} \vspace{1.0cm} 

\setcounter{section}{0} \setcounter{equation}{0}%
\setcounter{definition}{0}\setcounter{assumption}{0} \setcounter{lemma}{0} \setcounter{table}{0} \setcounter{figure}{0}%
\setcounter{page}{1} \renewcommand{\thepage}{A-\arabic{page}} %
\renewcommand{\theequation}{A.\arabic{equation}}\renewcommand{\thelemma}{A.%
\arabic{lemma}}\setcounter{corollary}{0}\renewcommand{\thecorollary}{A.%
\arabic{corollary}} \renewcommand{\theassumption}{A.\arabic{assumption}}%
\renewcommand{\theremark}{A.\arabic{remark}}
\renewcommand{\thesection}{A}
\renewcommand{\thesubsection}{A.\arabic{section}.\arabic{subsection}}
\setcounter{subsection}{0}
\renewcommand{%
\thedefinition}{S.\arabic{definition}}
\renewcommand{\thetable}{S.\arabic{table}}
\renewcommand{\thefigure}{S.\arabic{figure}}
\baselineskip=18.0pt

\doublespacing
\openup -0.2em

\setcounter{footnote}{0}


\section{Proof of Main Results}
\label{sec_technical_details}

In this section, to simplify the notation, we will drop the dependence on the quantile $\tau$ in quantities such as $z_{t}(\tau)$ whenever it does not cause confusion. 

Let $\Delta$ be an arbitrary compact subset of $R^{p}$. For $\xi \in \Delta$, define the sequential empirical processes:
\begin{align*}
S(r,\tau,\xi) &= T^{-1/2} \sum_{t=1}^{[rT]} x_{t}\left(\tau - I\left(y_{t} \leq x_{t}^{\prime}\beta_{0}(\tau) + x_{t}^{\prime}\xi/\sqrt{T}\right)\right),\\
S(r,\tau,0) &= T^{-1/2} \sum_{t=1}^{[rT]} x_{t}\left(\tau - I(y_{t} \leq x_{t}^{\prime}\beta_{0}(\tau))\right).
\end{align*}

Using this notation, $S([rT],\tau)$ and $\widehat{S}([rT],\tau)$ in the main text can be written as $S(r,\tau,0)$ and $S(r,\tau, \sqrt{T}(\widehat{\beta}(\tau)-\beta_{0}(\tau)))$, respectively.

\vskip 0.2in

\subsection{Proof of Theorem \ref{thm_1}}

By adding and subtracting terms, $S(r,\tau,\xi) - S(r,\tau,0)$ is equal to
\begin{align}
&- T^{-1/2} \sum_{t=1}^{[rT]} x_{t} \left\{I\left(y_{t} \leq x_{t}^{\prime}\beta_{0}(\tau) + x_{t}^{\prime}\xi/\sqrt{T}\right) - I(y_{t} \leq x_{t}^{\prime}\beta_{0}(\tau)) \right\} \nonumber \\
 & \textnormal{ } \; = - T^{-1/2} \sum_{t=1}^{[rT]} x_{t} \left\{I\left(y_{t} \leq x_{t}^{\prime}\beta_{0}(\tau) + x_{t}^{\prime}\xi/\sqrt{T}\right) - I(y_{t} \leq x_{t}^{\prime}\beta_{0}(\tau)) \right. \nonumber \\
 & \textnormal{ } \qquad \textnormal{ } \qquad \textnormal{ } \qquad - \left. F\left(x_{t}^{\prime}\beta_{0}(\tau) + x_{t}^{\prime}\xi/\sqrt{T} \Big| x_{t}\right)+F(x_{t}^{\prime}\beta_{0}(\tau)|x_{t}) \right\} \nonumber  \\
& \textnormal{ } \quad - T^{-1/2} \sum_{t=1}^{[rT]} x_{t} \left(F\left(x_{t}^{\prime}\beta_{0}(\tau) + x_{t}^{\prime}\xi/\sqrt{T} \Big| x_{t} \right) - F(x_{t}^{\prime}\beta_{0}(\tau) | x_{t}) \right) \nonumber  \\
 & \textnormal{ } \; = - R\left(r, \tau, \xi \right) - U\left(r, \tau, \xi \right). \label{eq-R-U}
\end{align}

For the two terms on the right hand side, we aim to establish the following results: uniformly in $r \in [0,1]$, $\tau \in \mathcal{T}$, and $\xi \in \Delta$, 
\begin{equation*}
\|R\left(r, \tau, \xi \right)\| = O_{p}(T^{-1/4}\log(T)) \quad \textnormal{and} \quad U\left(r, \tau, \xi \right)  = r D(\tau) \xi + O_{p}(T^{-1/2}).
\end{equation*}
We present a series of results to establish this claim. 

First, note that, without loss of generality, $x_{t}$ can be assumed to be nonnegative. This is because, as in \citet{bai1994weak}, $x_{tl}$, the $l$-th component of $x_{t}$, can always be written as $x_{tl}^{+} - x_{tl}^{-} = \max(x_{tl},0) - \max(-x_{tl},0)$ where both $x_{tl}^{+}$ and $x_{tl}^{-}$ are non-negative and satisfy Assumption 3. In this way, the process $S(r,\tau,\xi)$ can be written as a linear combination of at most $2p$ processes with each process having nonnegative weights. 
It is thus enough to regard $x_t$ as a nonnegative scalar variable. An application of the triangle inequality would complete the argument for the original process.  

Second, divide $\mathcal{T} = [\epsilon , 1-\epsilon]$ into an equally spaced partition: $\tau_0 = \epsilon < \tau_1 < \cdots < \tau_{N(\tau)} = 1-\epsilon $. Let $\delta = \tau_{j} - \tau_{j-1}$, then the number of intervals in the partition is $N(\tau) = [(1-2\epsilon)/\delta] + 1$. Let $\delta = T^{-1/2-d}$ for some $d \in (0,1/2)$. 

Third, note that both $I\left(y_{t} \leq x_{t}^{\prime}\beta_{0}(\tau) + x_{t}^{\prime}\xi/\sqrt{T}\right)$ and $F\left(x_{t}^{\prime}\beta_{0}(\tau) + x_{t}^{\prime}\xi/\sqrt{T} \Big|x_{t}\right)$ are non-decreasing functions of $\tau$. For the ease of notation, $F(\cdot|x_{t})$ is shorten to $F_{t}(\cdot)$ and $f(\cdot|x_{t})$ to $f_{t}(\cdot)$. Then for any $\tau \in  [\tau_{j-1} , \tau_{j}]$, 
\begin{align*}
& R\left(r, \tau, \xi \right) \\
& \leq T^{-1/2} \sum_{t=1}^{[rT]} x_{t} \left\{I\left(y_{t} \leq x_{t}^{\prime}\beta_{0}(\tau_{j}) + x_{t}^{\prime}\xi/\sqrt{T}\right) - I(y_{t} \leq x_{t}^{\prime}\beta_{0}(\tau_{j-1})) \right. \\
& \textnormal{ } \qquad - \left. F_{t}\left(x_{t}^{\prime}\beta_{0}(\tau_{j-1}) + x_{t}^{\prime}\xi/\sqrt{T}\right)+F_{t}(x_{t}^{\prime}\beta_{0}(\tau_{j})) \right\} \\
& = R\left(r, \tau_{j}, \xi \right) + T^{-1/2} \sum_{t=1}^{[rT]} \{I(y_{t} \leq x_{t}^{\prime}\beta_{0}(\tau_{j})) - I(y_{t} \leq x_{t}^{\prime}\beta_{0}(\tau_{j-1})) - (\tau_{j}-\tau_{j-1})\} \\
& \textnormal{ } + T^{-1/2} \sum_{t=1}^{[rT]} \left\{ F_{t}\left(x_{t}^{\prime}\beta_{0}(\tau_{j}) + x_{t}^{\prime}\xi/\sqrt{T}\right) - F_{t}\left(x_{t}^{\prime}\beta_{0}(\tau_{j-1}) + x_{t}^{\prime}\xi/\sqrt{T}\right) - F_{t}\left(x_{t}^{\prime}\beta_{0}(\tau_{j}) \right) + F_{t}\left(x_{t}^{\prime}\beta_{0}(\tau_{j-1}) \right) \right\} \\
&= R\left(r, \tau_{j}, \xi \right) + B_{1}(r,\tau_{j}) + B_{2}(r,\tau_{j}).
\end{align*}

Three quantities on the right hand side, $R\left(r, \tau_{j}, \xi \right)$, $B_{1}(r,\tau_{j})$, and $B_{2}(r,\tau_{j})$, depend only on the end points of the interval, $\tau_{j}$ and $\tau_{j-1}$, not on $\tau$. Also, a reverse inequality holds:

\begin{align*}
& R\left(r, \tau, \xi \right) \\
& \geq T^{-1/2} \sum_{t=1}^{[rT]} x_{t} \left\{I\left(y_{t} \leq x_{t}^{\prime}\beta_{0}(\tau_{j-1}) + x_{t}^{\prime}\xi/\sqrt{T}\right) - I(y_{t} \leq x_{t}^{\prime}\beta_{0}(\tau_{j})) \right. \\
& \textnormal{ } \qquad - \left. F_{t}\left(x_{t}^{\prime}\beta_{0}(\tau_{j}) + x_{t}^{\prime}\xi/\sqrt{T}\right)+F_{t}(x_{t}^{\prime}\beta_{0}(\tau_{j-1})) \right\} \\
& = R\left(r, \tau_{j-1}, \xi \right) - T^{-1/2} \sum_{t=1}^{[rT]} \{I(y_{t} \leq x_{t}^{\prime}\beta_{0}(\tau_{j})) - I(y_{t} \leq x_{t}^{\prime}\beta_{0}(\tau_{j-1})) - (\tau_{j}-\tau_{j-1})\} \\
& \textnormal{ } - T^{-1/2} \sum_{t=1}^{[rT]} \left\{ F_{t}\left(x_{t}^{\prime}\beta_{0}(\tau_{j}) + x_{t}^{\prime}\xi/\sqrt{T}\right) - F_{t}\left(x_{t}^{\prime}\beta_{0}(\tau_{j-1}) + x_{t}^{\prime}\xi/\sqrt{T}\right) - F_{t}\left(x_{t}^{\prime}\beta_{0}(\tau_{j}) \right) + F_{t}\left(x_{t}^{\prime}\beta_{0}(\tau_{j-1}) \right) \right\} \\
& = R\left(r, \tau_{j-1}, \xi \right) - B_{1} (r,\tau_{j})- B_{2}(r,\tau_{j}).
\end{align*}

Therefore, by the triangle inequality,
\begin{align}
& \sup_{r \in [0,1]} \sup_{\tau \in \mathcal{T}} \sup_{\xi \in \Delta} \|R\left(r, \tau, \xi \right)\| \leq \label{eq-the-goal} \\
& \textnormal{ } \quad \sup_{r \in [0,1]} \max_{1 \leq j \leq N(\tau)} \sup_{\xi \in \Delta} 2 \|R\left(r, \tau_{j}, \xi \right)\| + \sup_{r \in [0,1]} \max_{1 \leq j \leq N(\tau)} \|B_{1}(r,\tau_{j})\| + \sup_{r \in [0,1]} \max_{1 \leq j \leq N(\tau)} \sup_{\xi \in \Delta} \|B_2(r,\tau_{j})\|. \nonumber
\end{align}

Consider the first term on the right hand side of (\ref{eq-the-goal}). For the supremum over the partial sum index $r \in [0,1]$, the parameter $r$ may be restricted to the points $k/T$ with $k$ ranging over $1,2,\ldots,T$. This is because the partial sum process changes values only at those points. For the purpose of exposition, let $N(r) = T$. Then it suffices to study:
\begin{equation}
\label{eq-R1}
\quad \max_{1 \leq k \leq N(r)} \max_{1 \leq j \leq N(\tau)} \sup_{\xi \in \Delta} \|R\left(k/T, \tau_{j}, \xi \right)\|.
\end{equation}

The following lemma establishes the stochastic order of (\ref{eq-R1}). 

\begin{lemma}
\label{lemma-R}
Assume Assumption 3, then for $T$ large enough,
\begin{equation*}
\quad \max_{1 \leq k \leq N(r)} \max_{1 \leq j \leq N(\tau)} \sup_{\xi \in \Delta} \|R\left(k/T, \tau_{j}, \xi \right)\| = O_{p}(T^{-1/4}\log(T)).
\end{equation*}
\end{lemma}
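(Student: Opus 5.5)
Fix $k$ and $j$ and treat $x_t$ as a nonnegative scalar, as reduced above. The approach is a chaining/bracketing argument in $\xi$ combined with an exponential inequality for $\beta$-mixing sums, followed by a union bound over the polynomially many indices $(k,j)$.

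\textbf{Step 1: discretize $\xi$.} Cover the compact set $\Delta$ by a grid of mesh $\eta_T = T^{-1/4}$, which gives $O(T^{p/4})$ points $\xi_i$. Because $x_t\ge 0$, both the indicator and $F_t$ are monotone in the coordinates of $\xi$ along the direction $x_t$. For $\xi$ in the cell of $\xi_i$, the same sandwich used above for $\tau$ bounds $R(k/T,\tau_j,\xi)$ above and below by $R$ evaluated at the cell corners plus a bracketing remainder.

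The bracketing remainder has two parts. The first is the centred partial sum of $x_t\{I(y_t\le x_t'\beta_0+x_t'\xi^{+}/\sqrt T)-I(y_t\le x_t'\beta_0+x_t'\xi^{-}/\sqrt T)\}$. The second is the corresponding $F_t$ difference. By Assumption 3(iii), the $F_t$ difference is at most $T^{-1/2}\sum_t \bar f x_t^2\eta_T/\sqrt T = O_p(\eta_T)=O_p(T^{-1/4})$, uniformly in $k$, using Assumption 3(vii). The centred part is a term of the same type as $R$ but with a bracket of width $\eta_T/\sqrt T$, so it is handled jointly with Step 2.

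\textbf{Step 2: pointwise exponential bound.} For fixed $(k,j,\xi_i)$, $R$ is $T^{-1/2}$ times a partial sum of centred variables $w_t = x_t\{I(\cdot+\text{shift})-I(\cdot)\}-E[\cdot\mid x_t]$. Their variance is of order $E[x_t^3]\,\|\xi\|/\sqrt T = O(T^{-1/2})$, using Assumption 3(iii),(vii). Their covariances across $t,s$ are controlled by the bivariate density bound in Assumption 3(iv) together with the geometric $\beta$-mixing rate of Assumption 3(ii); summing these gives a long-run variance of the partial sum of order $T\cdot T^{-1/2}\log T$.

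To obtain a tail bound, split the sample into Bernstein blocks of length $q_T \asymp \log T$ and couple them to independent blocks via Berbee's lemma. The coupling error is $T\beta(q_T)\le T B a^{q_T}$, which is negligible once $q_T = C\log T$ with $C$ large. Truncate $x_t$ at level $T^{1/4}$; the fourth-moment condition in Assumption 3(i) makes the truncation error negligible. Bernstein's inequality applied to the independent blocks then gives
\[
P\bigl(|R(k/T,\tau_j,\xi_i)|>c\,T^{-1/4}\log T\bigr)\le \exp(-c'\log^2 T)
\]
for suitable constants $c,c'>0$.

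\textbf{Step 3: union bound.} Take the union over $k\le T$, $j\le N(\tau)=O(T^{1/2+d})$, the $O(T^{p/4})$ grid points, and the bracket endpoints. This is a polynomial number of events, each of probability $\exp(-c'\log^2T)$, so the total probability tends to zero. Combined with the Step 1 bracketing, this yields the claimed $O_p(T^{-1/4}\log T)$ bound uniformly over $k$, $j$ and $\xi$.

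\textbf{Main obstacle.} The hardest part is Step 2: obtaining an exponential inequality for dependent, unbounded summands whose variance is only $O(T^{-1/2})$, sharp enough that the polynomial union bound still produces the rate $T^{-1/4}\log T$. Here the blocking length, the truncation level, and the bivariate density bound of Assumption 3(iv) must be balanced carefully. If Bernstein's inequality with truncation proves too crude, I would instead invoke a maximal inequality for $\beta$-mixing empirical processes, such as that of \citet{arcones1994central} or Doukhan–Massart–Rio, applied to the VC class indexed by $\xi$, with the variance envelope of order $T^{-1/2}$.
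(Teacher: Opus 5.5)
Your overall architecture matches the paper's (Lemmas~\ref{lemma-beta-pt}--\ref{lemma-beta-uf}). Both use a grid in $\xi$ of mesh $T^{-1/4}$, a pointwise variance of order $T^{-1/2}$ from Assumption 3(iii)--(iv) and mixing, an exponential inequality, and a union bound over the $O(T^{3/2+d+p/4})$ triples $(k,j,\xi_i)$, plus a bracketing remainder. You actually handle the $F_t$ part of that remainder more explicitly than the paper does.

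The gap is in Step 2. After truncating $x_t$ at $T^{1/4}$ and blocking with $q_T=C\log T$, a block sum is bounded only by $\bar M\asymp q_T T^{1/4}=C\,T^{1/4}\log T$. Take the threshold $x=c\,T^{1/4}\log T$ for $\sum_t w_t$ (i.e.\ $cT^{-1/4}\log T$ for $R$). Whatever the variance term $v$ is, Bernstein's exponent satisfies
\[
\frac{x^2/2}{v+\bar M x/3}\;\le\;\frac{x^2/2}{\bar M x/3}\;=\;\frac{3\,c\,T^{1/4}\log T}{2\,C\,T^{1/4}\log T}\;=\;\frac{3c}{2C}.
\]
The range term $\bar M x$ is of the same order as $x^2$ and swamps $v\asymp T^{1/2}$. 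So each event gets only a constant bound $2e^{-3c/(2C)}$, not $\exp(-c'\log^2T)$, and the union over polynomially many events in Step 3 diverges.

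Your long-run variance $T^{1/2}\log T$ also carries a superfluous log. The paper gets $O(T^{1/2})$ from $|\textnormal{Cov}(\eta_1x_1,\eta_jx_j)|\le CT^{-1/2}\beta(j)^{1/2}$ and $\sum_j\beta(j)^{1/2}<\infty$. That extra log alone would only weaken the bound to $\exp(-c^2\log T)$, which would still suffice; the range term is the real problem.

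What makes the paper's argument close is a polynomial slack in the bound on the summands. It uses $\max_t\|x_t\|\le BT^{1/4-\kappa}$ with $\kappa>0$ in a Bernstein inequality for geometrically mixing sequences (Peligrad--Rio, Hang--Steinwart). The range term is then $O(T^{1/2-\kappa}\log^3T)=o(T^{1/2})$, the variance term of order $T^{1/2}$ dominates, and the exponent is of order $\log^2T$. The same slack drives the paper's bound on the oscillation term $Q$.

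To repair Step 2 you must truncate at $T^{1/4-\kappa}$, or at the very least at $o(T^{1/4}/\log T)$, and show the truncation event is negligible. A bare fourth moment gives only $\max_t\|x_t\|=o_p(T^{1/4})$, not a polynomial $\kappa$; the paper passes over this point. The clean route is therefore $\textnormal{E}\|x_t\|^{4+\delta}<\infty$.

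Alternatively, keep truncation at $T^{1/4}$ and raise the threshold to $cT^{1/4}\log^2T$. The exponent becomes of order $c\log T$ and the union bound closes for large $c$. However, you then get only $O_p(T^{-1/4}\log^2T)$. That is enough for Theorem~\ref{thm_1}(i), but not for the stated rate or for Theorem~\ref{thm_1}(ii).

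Two smaller points. With mixed-sign regressors the cell corners do not bracket $x_t'\xi$, so bracket with $x_t'\xi_i\pm T^{-1/4}\sum_l|x_{tl}|$ as the paper does. And your Doukhan--Massart--Rio fallback is only named, not carried out.
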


\textbf{Proof:} The proof, presented in Section B, proceeds in two steps. Lemma~\ref{lemma-beta-pt} establishes pointwise convergence, and Lemma~\ref{lemma-beta-uf} extends this result to uniform convergence. The extension relies on a chaining argument and a Bernstein inequality under mixing conditions. See Section B for details. \qedsymbol{}

\vskip 0.2in

The next two lemmas establish convergence rates for the remaining terms on the right hand side of (\ref{eq-the-goal}). Recall that $\delta = \tau_{j} - \tau_{j-1} = T^{-1/2-d}$ for some $d \in (0,1/2)$. 

\begin{lemma}
\label{lemma-B-1}
Assume Assumption 3. Then, for $T$ large enough, 

$\sup_{r \in [0,1]} \max_{1 \leq j \leq N(\tau)} \|B_{1}(r,\tau_{j})\| = O_{p}\left(T^{-1/4-d/2} \log(T) \right)$.
\end{lemma}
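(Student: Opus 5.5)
Write $\eta_t(j) := I(y_t \le x_t'\beta_0(\tau_j)) - I(y_t \le x_t'\beta_0(\tau_{j-1})) - \delta$, so that $B_1(r,\tau_j) = T^{-1/2}\sum_{t=1}^{[rT]} \eta_t(j)$. We have $E[\eta_t(j)\mid x_t]=0$, because $F_t(x_t'\beta_0(\tau))=\tau$. Each $\eta_t(j)$ is bounded by $2$, and
\[
\operatorname{Var}(\eta_t(j)) \le E\big[I(x_t'\beta_0(\tau_{j-1}) < y_t \le x_t'\beta_0(\tau_j))\big] = \delta .
\]
The plan mirrors the treatment of $R$ in Lemma~\ref{lemma-R}. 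First reduce the supremum over $r$ to the grid $k/T$, $k=1,\ldots,T$, since the partial sum only jumps there. Then bound the probability of a large deviation for each fixed $(k,j)$ with a Bernstein inequality for $\beta$-mixing sequences. Finally, take a union bound over the polynomially many pairs $(k,j)$; there are at most $T \cdot N(\tau) = O(T^{3/2+d})$ of them.

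\textbf{Variance of the partial sums.} For fixed $(k,j)$, bound $\operatorname{Var}\big(\sum_{t\le k}\eta_t(j)\big)$. Use a covariance inequality for $\beta$-mixing (hence $\alpha$-mixing) bounded variables, $|\operatorname{Cov}(\eta_t,\eta_{t+l})| \le 4\beta(l)$, for lags $l > L$. For lags $l \le L$, use Cauchy--Schwarz, $|\operatorname{Cov}| \le \delta$. This gives
\[
\operatorname{Var} \lesssim k\big(L\delta + \textstyle\sum_{l>L} a^{l}\big).
\]
Choosing $L \asymp \log T$ makes the tail $O(T^{-1})$, so $\operatorname{Var} \lesssim k\,\delta\log T \le T\delta\log T$.

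\textbf{Exponential bound.} Next apply a Bernstein inequality for geometrically $\beta$-mixing bounded sequences, using the blocking or coupling argument of Berbee as in the appendix proof of Lemma~\ref{lemma-R}. With block length $q \asymp \log T$ it gives
\[
P\Big(\Big|\sum_{t\le k}\eta_t(j)\Big| > x\Big) \le C\exp\Big(-\frac{c\,x^2}{T\delta\log T + x\log T}\Big) + C\,\tfrac{T}{q}\,\beta(q).
\]
Take $x = M\,T^{1/2}\cdot T^{-1/4-d/2}\log T = M\,T^{1/4-d/2}\log T$, so that $x^2 = M^2 T^{1/2-d}\log^2 T$. Since $T\delta = T^{1/2-d}$, the first denominator term is $T^{1/2-d}\log T$. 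The linear term is $x\log T = M T^{1/4-d/2}\log^2 T$, which is of smaller order because $d<1/2$.

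The exponent is therefore of order $M^2\log T$ (possibly up to a constant). For large $M$ this beats the polynomial union-bound factor $T^{3/2+d}$. The mixing remainder $T\beta(q)$ is negligible once $q = C'\log T$ with $C'$ large. Dividing by $T^{1/2}$ gives the claimed rate $O_p(T^{-1/4-d/2}\log T)$.

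\textbf{Main obstacle.} The delicate step is the Bernstein bound. The variance proxy must scale with $\delta$ rather than $1$, because that factor produces the extra $T^{-d/2}$ gain relative to Lemma~\ref{lemma-R}. At the same time, the $\log T$ losses from the mixing block length must be kept to a single $\log T$ in the final rate. I would first try to reuse the exact form of the mixing Bernstein inequality invoked for Lemma~\ref{lemma-beta-uf}, checking that its variance term accommodates the $\delta$ scaling above. If it only yields $\log^{3/2}T$, I would sharpen the lag truncation by using $\sum_l \min(\delta, a^l) \lesssim \delta\log(1/\delta)$. Since $\log(1/\delta)\asymp\log T$, the stated order should still follow.
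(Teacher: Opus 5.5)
Your argument is correct for $B_1$ exactly as it is written in the display in the proof of Theorem~\ref{thm_1}, i.e.\ the unweighted sum $T^{-1/2}\sum_{t\le[rT]}\{I(y_t\le x_t'\beta_0(\tau_j))-I(y_t\le x_t'\beta_0(\tau_{j-1}))-\delta\}$. It also follows the paper's route: a variance bound proportional to $\delta$, a Bernstein inequality under geometric mixing, and a union bound over the $O(T^{3/2+d})$ pairs $(k,j)$. The paper only asserts these steps by reference to Lemmas~\ref{lemma-beta-pt} and~\ref{lemma-beta-uf}, so your explicit blocking computation adds real detail.

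Your lag-truncated variance bound $k\delta\log T$ is cruder than the $Cr\delta$ the paper claims. For your unweighted sum the sharper bound does hold: Assumption 3(iv) together with $x_t'(\beta_0(\tau_j)-\beta_0(\tau_{j-1}))\le\delta/c$ gives $|\operatorname{Cov}(\eta_t(j),\eta_{t+l}(j))|\lesssim\delta^2$ at every lag. But your bound is already enough. Beating the polynomial union bound requires $x^2\gtrsim V\log T$, and $V\asymp T\delta\log T$ gives exactly $x\asymp T^{1/4-d/2}\log T$. So the $\log^{3/2}T$ concern in your last paragraph does not arise.

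One caveat concerns which object is being bounded. In Appendix~B the paper's proof works with the weighted sum $B_1(r,\tau_j)=T^{-1/2}\sum_{t\le[rT]}\zeta_t x_t$, and that is what the bracketing of $R(r,\tau,\xi)$ actually produces. The factor $x_t$ is missing from the display in the proof of Theorem~\ref{thm_1}, but that display is only an identity if $x_t$ multiplies every summand. For the weighted sum, your bound $|\eta_t(j)|\le 2$ and the bounded-summand blocking argument no longer apply. Moreover, no truncation of $x_t$ can rescue the stated rate under the fourth moments of Assumption 3(i) alone.

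To see why, write $\zeta_t^{(j)}$ for your $\eta_t(j)$. Then $\sup_r\max_j\|B_1(r,\tau_j)\|\ge\tfrac12T^{-1/2}|\zeta_t^{(j)}|\,\|x_t\|$ for every $(t,j)$, and whenever $y_t$ lies in the central quantile range some $j$ has $|\zeta_t^{(j)}|=1-\delta$. Now take i.i.d.\ data from a location-shift model with $e_t\sim N(0,1)$ independent of $x_{t2}$ and $P(|x_{t2}|>u)\asymp u^{-4}(\log u)^{-2}$; this satisfies Assumption 3. The supremum is then at least of order $T^{-1/4}/\log T$ with probability tending to one, far above $T^{-1/4-d/2}\log T$.

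So your unweighted reading is the version of the lemma that is actually true as stated. The weighted term needs one of the following:
\begin{itemize}
\item bounded or lighter-tailed regressors (moments of order above $4/(1-2d)$ make your argument go through after truncation); or
\item a weaker rate, which still suffices for the $u_T=o_p(1)$ conclusion of Theorem~\ref{thm_1}.
\end{itemize}
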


\textbf{Proof:} See Section B. \qedsymbol{}

\begin{lemma}
\label{lemma-B-1-uniform}
Assume Assumption 3. Then, for $T$ large enough, 

$\sup_{r \in [0,1]} \max_{1 \leq j \leq N(\tau)} \sup_{\xi \in \Delta} \|B_2(r,\tau_{j})\| = O_{p}\left(T^{-1/2-d} \right)$.
\end{lemma}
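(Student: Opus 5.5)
The plan is to bound $B_2(r,\tau_j)$ deterministically, conditional on the regressors, using only the smoothness of the conditional distribution function. No empirical-process argument is needed here, because $B_2$ contains no indicator functions. As in the reduction above, I take $x_t$ to be a nonnegative scalar. Write $a_{t,j} = x_t'\beta_0(\tau_j)$ and $c_t = x_t'\xi/\sqrt{T}$. The summand of $B_2$ is then the double difference
\begin{equation*}
\left[F_t(a_{t,j}+c_t) - F_t(a_{t,j})\right] - \left[F_t(a_{t,j-1}+c_t) - F_t(a_{t,j-1})\right] = \int_{a_{t,j-1}}^{a_{t,j}} \left\{ f_t(u+c_t) - f_t(u) \right\} du .
\end{equation*}
By Assumption 3(v), the integrand is at most $\sup|f_t'|\,|c_t| \le C\,\|x_t\|\,\sup_{\xi\in\Delta}\|\xi\|/\sqrt{T}$.

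Next I bound the length of the integration interval. Since $a_{t,j} = Q(\tau_j|x_t)$, we have $F_t(a_{t,j}) = \tau_j$. Because $f_t > c > 0$ on $[Q(\epsilon|x),Q(1-\epsilon|x)]$ by Assumption 3(iii), the inverse function theorem gives
\begin{equation*}
|a_{t,j}-a_{t,j-1}| \le \frac{\tau_j-\tau_{j-1}}{c} = \frac{\delta}{c}.
\end{equation*}
Combining the two bounds,
\begin{equation*}
\sup_{r}\max_j\sup_{\xi\in\Delta}\|B_2(r,\tau_j)\| \le \frac{C\delta}{c}\sup_{\xi\in\Delta}\|\xi\| \cdot \frac{1}{T}\sum_{t=1}^{T}\|x_t\|^2 .
\end{equation*}
Here I used that the summands are nonnegative after the $x^{+}/x^{-}$ reduction, so the supremum over $r$ is attained at $r=1$. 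By Assumption 3(vii) (the case $\textnormal{plim}\, T^{-1}\sum x_tx_t' = Q$), the average is $O_p(1)$. Hence the whole term is $O_p(\delta) = O_p(T^{-1/2-d})$, as claimed.

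The main obstacle is the domain of validity of the bound on $f'$. Assumption 3(v) controls $f'$ only on the quantile range $[Q(\epsilon|x),Q(1-\epsilon|x)]$, whereas $u + c_t$ may leave this range. This can happen either because $|c_t|$ is not uniformly small when $x_t$ is unbounded, or near the endpoints of $\mathcal{T}$. I would handle it as follows.
\begin{enumerate}
\item Read Assumption 3(v) as holding on a fixed neighborhood $[Q(\epsilon-\eta|x),Q(1-\epsilon+\eta|x)]$, which is the standard convention.
\item Split the sum according to whether $\|x_t\| \le T^{1/4}$. On this event, $|c_t| = O(T^{-1/4}) \to 0$ uniformly in $\xi\in\Delta$, so the argument above applies.
\item On the complementary event, use the cruder bound $|f_t(u+c_t)-f_t(u)| \le 2\bar f$ from Assumption 3(iii). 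The contribution is then at most $\sqrt{T}\,(\delta/c)\,2\bar f\; T^{-1}\sum_t \|x_t\|\, I(\|x_t\|>T^{1/4})$.
\item By the fourth-moment condition in Assumption 3(i), $\textnormal{E}[\|x_t\|\,I(\|x_t\|>T^{1/4})] \le \textnormal{E}\|x_t\|^4/T^{3/4}$. This makes the tail contribution $O_p(\delta\, T^{-1/4})$, which is negligible relative to $\delta$.
\end{enumerate}
Together the two pieces give the stated rate $O_p(T^{-1/2-d})$ uniformly in $r$, $j$ and $\xi\in\Delta$.
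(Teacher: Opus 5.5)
Your proof is correct, and it organizes the argument differently from the paper and more tightly. The paper Taylor-expands each $F_t$ in the perturbation $x_t'\xi/\sqrt{T}$ and keeps the first-order term $\{f_t(x_t'\beta_0(\tau_j))-f_t(x_t'\beta_0(\tau_{j-1}))\}x_t'\xi/\sqrt{T}$. It then bounds the density difference by the mean value theorem together with the $O(\delta)$ spacing of conditional quantiles. The second-order remainder is only recorded as $O_p(T^{-1})$ per observation. After the $T^{-1/2}\sum_{t\le[rT]}$ aggregation that is $O_p(T^{-1/2})$, not $O_p(T^{-1/2-d})$. This is harmless for Theorem~\ref{thm_1}, but it falls short of the stated rate unless one adds smoothness such as a Lipschitz $f'$.

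Your exact representation $\int_{a_{t,j-1}}^{a_{t,j}}\{f_t(u+c_t)-f_t(u)\}\,du$ bounds the entire double difference by the product of the two small increments, $(\delta/c)\sup|f_t'|\,|c_t|$. So no remainder arises, and the rate follows from the same ingredients.

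You also confront explicitly a point the paper passes over, both here and in the expansion of $U(r,\tau,\xi)$. Assumption 3(v) bounds $f'$ only on $[Q(\epsilon|x),Q(1-\epsilon|x)]$, while $u+c_t$ can leave that range, e.g.\ for the cell ending at $\tau_j=1-\epsilon$ once $c_t>\delta/c$. Reading 3(v) on an enlarged range is a genuine but standard strengthening. Something of this kind is needed, since there the crude bound $2\bar f\delta/c$ alone gives only $O_p(\sqrt{T}\delta)$. Your truncation at $\|x_t\|\le T^{1/4}$ with the fourth-moment tail bound then works. The required margin is supplied by $Q(\epsilon|x)-Q(\epsilon-\eta|x)\ge\eta/\bar f$, uniformly in $x$.

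One minor correction: the summands are not nonnegative after the $x^{+}/x^{-}$ reduction, since $\upsilon_t$ can change sign. Nothing depends on this, because you are already bounding $\sum_t|\upsilon_t|\,\|x_t\|$.
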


\textbf{Proof:} See Section B. \qedsymbol{}

\vskip 0.2in

\noindent Combining Lemma~\ref{lemma-R}, \ref{lemma-B-1}, and \ref{lemma-B-1-uniform}, one can obtain 

$\max_{1 \leq k \leq N(r)} \max_{1 \leq j \leq N(\tau)} \sup_{\xi \in \Delta} \|R(r,\beta,\tau)\| = O_{p}\left(T^{-1/4} \log(T) \right)$ as desired.  

\vskip 0.2in

\noindent For the second term in (\ref{eq-R-U}), we have the following lemma. 

\begin{lemma}
\label{lemma-U}
Assume Assumption 3. Then for $T$ large, $U\left(r, \tau, \xi \right)  = r D(\tau) \xi + O_{p}(T^{-1/2})$, uniformly in $r \in [0,1]$, $\tau \in \mathcal{T}$, and $\xi \in \Delta$.
\end{lemma}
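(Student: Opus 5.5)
The plan is a second-order Taylor expansion of the conditional distribution function, followed by a uniform law of large numbers for the resulting partial sums. Write $q_t(\tau)=x_t^{\prime}\beta_0(\tau)$, and recall that $F_t(q_t(\tau))=\tau$ and $f_t(q_t(\tau))=f(0|x_t)$ in the error parametrization. By Assumption 3(iii) and (v), $F_t$ is continuously differentiable with a bounded derivative of $f_t$ near the quantile range. For $\xi\in\Delta$ and $T$ large, the point $q_t(\tau)+x_t^{\prime}\xi/\sqrt{T}$ stays in the region where these bounds hold, at least on the event that $\max_t\|x_t\|/\sqrt{T}$ is small. The mean value theorem then gives
\begin{equation*}
F_t\bigl(q_t(\tau)+x_t^{\prime}\xi/\sqrt{T}\bigr)-F_t(q_t(\tau)) = f_t(q_t(\tau))\,x_t^{\prime}\xi/\sqrt{T} + \tfrac12 f_t^{\prime}(\tilde q_t)\,(x_t^{\prime}\xi)^2/T ,
\end{equation*}
with $\tilde q_t$ an intermediate point. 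Multiplying by $x_t$ and summing, $U(r,\tau,\xi)$ splits into two pieces. The first is a linear piece $\bigl(T^{-1}\sum_{t=1}^{[rT]} f(0|x_t)x_tx_t^{\prime}\bigr)\xi$. The second is a remainder bounded in norm by $\tfrac{c}{2}\,T^{-3/2}\sum_{t=1}^{[rT]}\|x_t\|\,(x_t^{\prime}\xi)^2$.

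The remainder is handled first. Since $\Delta$ is compact, $(x_t^{\prime}\xi)^2\le C\|x_t\|^2$ uniformly in $\xi$. Reducing to nonnegative components, as in the proof of Theorem \ref{thm_1}, $\|x_t\|\,\|x_t\|^2$ is bounded by a finite sum of products $x_{tl}x_{ti}x_{tj}$. Assumption 3(vii) then gives $T^{-1}\sum_{t\le [rT]}x_{tl}x_{ti}x_{tj}=O_p(1)$ uniformly in $r$. Hence the remainder is $O_p(T^{-1/2})$ uniformly in $(r,\tau,\xi)$. The bound does not depend on $\tau$, because $\bar f$ and the bound $c$ on $f^{\prime}$ are uniform over the quantile range.

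The linear piece comes next. Assumption 3(vi) gives $T^{-1}\sum_{t=1}^{[rT]} f(0|x_t)x_tx_t^{\prime}= rD(\tau)+o_p(1)$ uniformly in $r$. Multiplying by the bounded $\xi$ yields $rD(\tau)\xi$ plus an error that is uniform in $\xi$.

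The main obstacle is twofold. First, this error must be uniform in $\tau$ as well. Second, it must be of order $O_p(T^{-1/2})$ rather than merely $o_p(1)$. For uniformity in $\tau$, I would use that $\tau\mapsto f_t(q_t(\tau))$ is Lipschitz, by the bounded $f^{\prime}$, the lower bound $c$ on $f$, and $d\beta_0/d\tau$ bounded. I would then apply a finite-grid argument over $\mathcal{T}$ analogous to the partition used for $R$. Combined with 3(vi) at grid points and 3(vii) for the third moments, this gives uniform control.

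For the rate, I would interpret $D(\tau)$ in the statement as the centering used in the subsequent proof of Theorem \ref{thm_1}, namely the sample analogue $T^{-1}\sum_{t\le T} f_t x_tx_t^{\prime}$ scaled by $r$. Alternatively, I would invoke a $\beta$-mixing CLT (Assumption 3(ii), finite fourth moments from 3(i)) to obtain the $\sqrt{T}$ fluctuation of the partial-sum average. If only $o_p(1)$ were available, it would still suffice for part 1 of Theorem \ref{thm_1}, and I would flag this.
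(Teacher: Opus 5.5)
Your route is the paper's. The paper expands $F_t$ to second order and bounds the $f_t^{\prime}(\bar q_t)$ remainder by Assumption 3(v),(vii). For the linear piece $T^{-1}\sum_{t\le[rT]}f_t(x_t^{\prime}\beta_0(\tau))x_tx_t^{\prime}\xi$ it says only that this converges to $rD(\tau)\xi$ by 3(vi). So the paper supplies neither the $T^{-1/2}$ rate nor uniformity in $\tau$ for that term. The obstacle you isolate is genuine, because 3(vi) is a pointwise-in-$\tau$ consistency condition with no rate, and the paper has no device for it that you overlooked. What your proposal still lacks is a repair that actually closes it.

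Re-centering at the sample analogue does not help. In the proof of Theorem \ref{thm_1} the lemma is combined with the Bahadur representation of Assumption \ref{assum_asymp_linear}, whose $D(\tau)$ is the population matrix. The cancellation $rD(\tau)D(\tau)^{-1}S(1,\tau,0)=rS(1,\tau,0)$ therefore needs the population centering. Moreover, with $A_t(\tau)=f_t(x_t^{\prime}\beta_0(\tau))x_tx_t^{\prime}$, the tied-down sum $T^{-1}\sum_{t\le[rT]}A_t(\tau)-rT^{-1}\sum_{t\le T}A_t(\tau)$ is again only $o_p(1)$ under 3(vi).

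A $\beta$-mixing CLT is the right idea, but what you need is a maximal inequality uniform in $k\le T$ and $\tau\in\mathcal{T}$. Your fixed finite grid over $\mathcal{T}$ only upgrades 3(vi) to a uniform $o_p(1)$. One way to get the rate is as follows. Since $F_t(x_t^{\prime}\beta_0(\tau))=\tau$, one has $\partial_\tau A_t(\tau)=\{f_t^{\prime}(x_t^{\prime}\beta_0(\tau))/f_t(x_t^{\prime}\beta_0(\tau))\}x_tx_t^{\prime}$, which is bounded by $C\|x_t\|^2$ under 3(iii),(v). Writing $A_t(\tau)=A_t(\tau_0)+\int_{\tau_0}^{\tau}\partial_sA_t(s)\,ds$ reduces everything to showing $\textnormal{E}\max_{k\le T}\|\sum_{t\le k}(Y_t-\textnormal{E}Y_t)\|=O(\sqrt{T})$, both for $Y_t=A_t(\tau_0)$ and for $Y_t=\partial_sA_t(s)$ uniformly in $s$. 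Under geometric mixing, such a maximal inequality needs slightly more than the second moments of $x_tx_t^{\prime}$ that 3(i) provides; a condition such as $\textnormal{E}\|x_t\|^{4+\delta}<\infty$ makes it routine.

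Your fallback is right as far as it goes. An $o_p(1)$ error suffices for part 1 of Theorem \ref{thm_1}, but part 2 requires $u_T=O_p(T^{-1/4}\log T)$ and so does depend on a rate here.

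A smaller point: for $\tau$ at an endpoint of $\mathcal{T}$, the intermediate point $\bar q_t$ can leave $[Q(\epsilon|x_t),Q(1-\epsilon|x_t)]$ however small $\max_t\|x_t\|/\sqrt{T}$ is. Your claim that it stays in the region is therefore not right there, and the bound in 3(v) must be assumed on a slightly larger band. The paper is silent on this as well.
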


\textbf{Proof:} See Section B. \qedsymbol{}

\vskip 0.2in

We are now ready to complete the proof of Theorem \ref{thm_1}. Let $w(\tau) = \sqrt{T} (\widehat{\beta}(\tau) - \beta_{0}(\tau))$. Under Assumption 2, $w(\tau) = O_{p}(1)$ uniformly in $\tau \in \mathcal{T}$. From the decomposition in (\ref{eq-R-U}),

\vskip -0.2in

\begin{align*}
S(r,\tau,w(\tau)) = S(r,\tau,0) - R(r,\tau,w(\tau)) - U(r,\tau,w(\tau)).
\end{align*}%
From Lemmas~\ref{lemma-R} and \ref{lemma-U}, there exist a compact set $\Delta \in R^{p}$ such that $w(\tau) \in \Delta$ and $R(r,\tau,w(\tau)) = O_{p}(T^{-1/4}\log(T))$ and $U(r,\tau,w(\tau)) = r D(\tau) w(\tau) + O_{p}(T^{-1/2})$, uniformly in $(r,\tau) \in [0,1] \times \mathcal{T}$. With the Bahadur representation in Assumption 2, $w(\tau) = D(\tau)^{-1} \, T^{-1/2} \sum_{t=1}^{T} x_{t}(\tau - I(y_{t} \leq x_{t}^{\prime}\beta_{0}(\tau))) + r_{T} =  D(\tau)^{-1} S(1,\tau,0) + r_{T}$. This in turn implies that the first order term of $U(r,\tau,w(\tau))$ is $r D(\tau) D(\tau)^{-1} S(1,\tau,0)  + r_{T}= r S(1,\tau,0) + r_{T}$. Therefore, 
\begin{equation*}
S(r,\tau,w(\tau)) = S(r,\tau,0) - r S(1,\tau,0) + u_{T},
\end{equation*}
where $u_{T} = O_{p}(T^{-1/4}\log(T)) + r_{T} = o_{p}(1)$ uniformly in $(r,\tau) \in [0,1] \times \mathcal{T}$. This proves part (i) of the theorem. If $r_{T} = o_{p}(T^{-1/4}\log(T))$, then $u_{T} = O_{p}(T^{-1/4}\log(T))$, as claimed in part (ii). \qedsymbol{}

\vskip 0.2in

\subsection{Proof of Theorem \ref{thm_2}}
\vskip 0.1in

Let $k_{ij} = K\left(\frac{i-j}{bT}\right)$ and by the summation by parts:
\begin{align}
    \widehat{\Lambda}(\tau) =& \frac{1}{T}\sum_{i=1}^{T}\sum_{j=1}^{T} k_{ij}\hat z_{i}(\tau) \hat z_{j}(\tau)^{\prime} \nonumber \\
    =& \sum_{i=1}^{T-1}\sum_{j=1}^{T-1} S(i/T,\tau,w(\tau))(k_{ij} - k_{i,j+1}-k_{i+1,j}+k_{i+1,j+1}) S(j/T,\tau,w(\tau))^{\prime} \nonumber \\
   & + \sum_{i=1}^{T-1} S(i/T,\tau, w(\tau))(k_{iT}-k_{i+1,T}) S(1,\tau,w(\tau))^{\prime} \nonumber \\ 
    & + \sum_{j=1}^{T-1} S(1,\tau, w(\tau))(k_{Tj}-k_{T,j+1}) S(j/T,\tau, w(\tau))^{\prime} +  S(1,\tau, w(\tau)) S(j/T,\tau, w(\tau))^{\prime} \label{sum_byparts}
\end{align}

Rewrite the second term in (\ref{sum_byparts}) as $T^{-1} \sum_{i=1}^{T-1} S(i/T,\tau,w(\tau)) T (k_{iT}-k_{i+1,T}) S(1,\tau, w(\tau))^{\prime}$. Note the following. First, the scaled difference of kernel functions can be written as a derivative:
\begin{align*}
    \lim_{T\to\infty} T(k_{iT} - k_{i+1,T}) = \frac{1}{b} K'\left(\frac{1-r}{b}\right)
\end{align*}
where $r$ is such that $i=[rT]$. Second, by the computational property of quantile regression, 
\begin{equation*}
S(1,\tau,w(\tau)) = T^{-1/2}\sum_{j=1}^{T}\widehat{z}_{j}(\tau) = o_{p}(1),
\end{equation*}
for each $\tau$. Third, by Theorem \ref{thm_1} (iii), $S(t/T,\tau,w(\tau)) \Rightarrow G(r, \tau) - r G(1, \tau) = \widetilde{G}(r, \tau)$ and the limiting process is bounded because it is a Gaussian process on a compact support. Thus, for $T$ large enough, this second term (\ref{sum_byparts}) will vanish stochastically for each $\tau$. One can apply the same argument to the third and the fourth terms in (\ref{sum_byparts}) and show they vanish as well. 

Consider the first term in (\ref{sum_byparts}). Following the fixed-b algebra from \cite{KieferVogelsang05}, we have the results for the three classes of kernels, respectively:
\begin{enumerate}
    \item When $\mathcal{K}(\cdot)$ is twice continuously differentiable.        
    
    Because $\lim_{T\to\infty} T^{2} \left(k_{ij} - k_{i,j+1}-k_{i+1,j}+k_{i+1,j+1}\right) = - \frac{1}{b^{2}} \mathcal{K}''\left(\frac{r-s}{b}\right)$ where $s$ is such that $j=[sT]$, one can rewrite the first term as, for each $\tau$,  
    \begin{align*}
       V_{T}(\tau) = \frac{-1}{b^2 T^2} \sum_{i=1}^{T-1}\sum_{j=1}^{T-1} S(r_{i},\tau,w(\tau)) \mathcal{K}''\left(\frac{r_{i}-s_{j}}{b}\right) S(s_j,\tau,w(\tau))^{\prime}
     \end{align*}%
     where $s_j = j/T$ and $r_i = i/T$. Define the limit functional as
    
    \begin{align*}
       V(\tau) =  \frac{-1}{b^{2}} \int_{0}^{1} \int_{0}^{1} \tilde{G}(r,\tau) \mathcal{K}''\left(\frac{r-s}{b}\right) \tilde{G}(s,\tau)^{\prime} dr ds.
    \end{align*}%
    
    To show $V_{T}(\tau) \Rightarrow V(\tau)$ in $l^{\infty}(\mathcal{T})$, define 
    \begin{align*}
       & \Phi_{T}(f)(\tau) = \frac{-1}{b^2 T^2} \sum_{i=1}^{T-1}\sum_{j=1}^{T-1} \mathcal{K}''\left(\frac{r_{i}-s_{j}}{b}\right) f(r_{i},\tau)  f(s_{j},\tau)^{\prime}, \\
       & \Phi(f)(\tau) = \frac{-1}{b^{2}} \int_{0}^{1} \int_{0}^{1} \mathcal{K}''\left(\frac{r-s}{b}\right) f(r,\tau) f(s,\tau)^{\prime} dr ds.
     \end{align*}%
    Then $V_{T}(\tau) =\Phi_{T}(S)(\tau)$ and $V(\tau) =\Phi(\tilde{G})(\tau)$. The argument proceeds in three steps. First, for bounded $f$ and $g$,
    \begin{align*}
	\Big| \Phi_{T}(f)(\tau) - \Phi_{T}(g)(\tau) \Big| &\leq \frac{1}{b^2 T^2} \sum_{i=1}^{T-1}\sum_{j=1}^{T-1} |K^{\prime \prime}(\cdot)| | f(r_{i},\tau)  f(s_{j},\tau)^{\prime} -  g(r_{i},\tau)  g(s_{j},\tau)^{\prime}| \\ 
	&\leq C_{K}/b^{2} \left(\|f\|_{\infty} + \|g\|_{\infty}\right) \|f-g\|_{\infty}
      \end{align*}%
    with $C_{K} = \sup|K^{\prime \prime}(\cdot)|$. Thus, each $\Phi_{T}$ is continuous from $l^{\infty}([0,1] \times \mathcal{T})$ to $l^{\infty}(\mathcal{T})$, uniformly in $\tau$. Second, for each bounded $f$, 
      \begin{align*}
		\Big| \Phi_{T}(f)(\tau) - \Phi(f)(\tau) \Big| \leq \|f\|_{\infty}^{2} \left|  \frac{1}{b^{2} T^2} \sum_{i=1}^{T-1}\sum_{j=1}^{T-1} \mathcal{K}''\left(\cdot\right) - \frac{1}{b^{2}} \int \int \mathcal{K}''\left(\cdot\right) \right|.
      \end{align*}%
     Note that the upper bound does not depend on $\tau$. Hence, $\|\Phi_{T}(f)(\tau) - \Phi(f)(\tau)\|_{\infty} \rightarrow 0$ by Riemann-sum approximation, uniformly in $\tau$. This means that $\Phi_{T} \rightarrow \Phi$ uniformly on bounded sets. 
     
     Third, because $S \Rightarrow \tilde{G}$ in $l^{\infty}([0,1] \times \mathcal{T})$ by Theorem \ref{thm_1} (iii) and $\Phi_{T} \rightarrow \Phi$ uniformly on bounded sets, the extended continuous mapping theorem yields
     \begin{equation*}
	V_{T}(\cdot) = \Phi_{T}(S) \rightarrow \Phi(\tilde{G}) = V(\cdot) \qquad \textnormal{in} \quad \ell^{\infty}(\mathcal{T}).
    \end{equation*}
        This establishes the claim. 
        
    \item When $\mathcal{K}(.)$ is continuous, $\mathcal{K}(x) = 0$ for $|x|\geq 1$, and $\mathcal{K}(x)$ is twice continuously differentiable except for $|x|=1$, we have
    \begin{align}
         &\lim_{T\to\infty}  \sum_{i=1}^{T-1}\sum_{j=1}^{T-1} S(i/T,\tau,w(\tau))(k_{ij} - k_{i,j+1}-k_{i+1,j}+k_{i+1,j+1}) S(j/T,\tau,w(\tau))^{\prime} \nonumber \\
         =& -\frac{1}{b^2} \frac{1}{T^2} \sum_{i=1}^{T-1}\sum_{j=1}^{T-1} 1\{|i-j|<M\}\mathcal{K}''\left(\frac{r_i-s_j}{b}\right)S(i/T,\tau,w(\tau))S(j/T,\tau,w(\tau))^{\prime} \nonumber \\
         &+\frac{1}{bT} \sum_{i=1}^{T-M-1} K'_{-}\left(1\right)S(i/T,\tau,w(\tau))S((i+M)/T,\tau,w(\tau))^{\prime} \nonumber  \\
         & + \frac{1}{bT} \sum_{j=1}^{T-M-1} K'_{-}\left(1\right)S((j+M)/T,\tau,w(\tau))S(j/T,\tau,w(\tau))^{\prime}
        \label{eq_fixedb_case2}
    \end{align}
    where $K'_{-}\left(1\right)$ is the derivative of $\mathcal{K}(x)$ from the left at $x=1$.  
    
    For the first term in \ref{eq_fixedb_case2}, by re-defining $\mathcal{K}''_{1}(.)=1\{|i-j|<M\}\mathcal{K}''(.)$, the same argument as step (i) gives
    \begin{align*}
      &  -\frac{1}{b^2} \frac{1}{T^2} \sum_{i=1}^{T-1}\sum_{j=1}^{T-1} 1\{|i-j|<M\}\mathcal{K}''\left(\frac{r_i-s_j}{b}\right)S(i/T,\tau,w(\tau))S(j/T,\tau,w(\tau))^{\prime}\\
        &\rightarrow -\frac{1}{b^2}\int\int_{|r-s|<b}\mathcal{K}''\left(\frac{|r-s|}{b}\right) \widehat{G}(r,\tau)\widehat{G}(s,\tau)^{\prime}drds \qquad \textnormal{in} \quad \ell^{\infty}(\mathcal{T}).
    \end{align*}

    The second and the third terms in (\ref{eq_fixedb_case2}) are symmetric, so it suffices to show the result for the second term. A similar argument follows as step (i) by noting that (1) the uniform continuity and and convergence of the corresponding functional do not depend on the kernel except for the boundedness of $K'_{-}(1)$ and (2) the double sum is replaced by the single sum. Accordingly, we obtain
    \begin{align*}
        &\frac{1}{bT} \sum_{i=1}^{T-M-1} K'_{-}\left(1\right)S(i/T,\tau,w(\tau))S((i+M)/T,\tau,w(\tau)) \\
        &\to \frac{K'\_(1)}{b} \int_0^{1-b} \widetilde{G}(r+b,\tau) \widetilde{G}(r,\tau)^{\prime}  dr \qquad \textnormal{in} \quad \ell^{\infty}(\mathcal{T}).
    \end{align*}
    Due to the symmetry of the second and the third terms, we establish the claim for the second case.

    \item When $\mathcal{K}(.)$ is the Bartlett kernel, we can write
    \begin{align}
         &\lim_{T\to\infty}  \sum_{i=1}^{T-1}\sum_{j=1}^{T-1} S(i/T,\tau,w(\tau))(k_{ij} - k_{i,j+1}-k_{i+1,j}+k_{i+1,j+1}) S(j/T,\tau,w(\tau))^{\prime} \nonumber \\
         =&\frac{T-1}{T} \frac{2}{M} \sum_{i=1}^{T-1}S(i/T,\tau,w(\tau))S(i/T,\tau,w(\tau))^{\prime} \nonumber \\
         &- \frac{T-1}{T} \frac{1}{M} \sum_{i=1}^{T-M-1}S((i+M)/T,\tau,w(\tau))S(i/T,\tau,w(\tau))^{\prime}\nonumber \\
          &- \frac{T-1}{T} \frac{1}{M} \sum_{i=1}^{T-M-1}S((i/T,\tau,w(\tau))S((i+M)/T,\tau,w(\tau))^{\prime}.\label{eq_fixedb_case3}
    \end{align}
    Following step (i), a similar argument for the uniform continuity and the uniform convergence of the corresponding functionals in (\ref{eq_fixedb_case3}) and $S \Rightarrow \tilde{G}$ in $l^{\infty}([0,1] \times \mathcal{T})$ together implies the claim for the third case.    
    \qedsymbol{}
\end{enumerate}

\vskip 0.2in

\subsection{Proof of Theorem \ref{thm:fixedK}}

    Consider $ \Phi_T^k(\tau)$. By summation-by-parts, $\sum_{t=1}^{T} a_t b_t = \sum_{t=1}^{T} (a_t-a_{t+1})\sum_{s=1}^{t} b_s $ given $a_{T+1}=0$. Applying this to $ \Phi_T^k(\tau)$, 
    \begin{align*}
        & \Phi_T^k(\tau) = \frac{1}{\sqrt{T}}\sum_{i=1}^{T} \phi_k\left(\frac{i}{T}\right)\widehat{z}_i(\tau) = \sum_{i=1}^{T} \widehat {S}(i,\tau) \left[\phi_k\left(\frac{i}{T}\right) - \phi_k\left(\frac{i+1}{T}\right) \right] \\
         =&\sum_{i=1}^{T} \widehat {S}\left([r_iT],\tau\right) \left[\phi_k\left(r_i\right) - \phi_k\left(r_i+ 1/T\right) \right]=-1/T\sum_{i=1}^{T} \widehat {S}\left([r_iT],\tau\right) \Delta_T\phi_k(r_i)
    \end{align*}
    where $r_i $ is such that $[r_iT]=i$; $\Delta_T\phi_k(r):= \frac{\phi_k\left(r+ 1/T\right)-\phi_k\left(r\right)}{1/T} $; and we set $\phi_k\left(1+1/T\right) = 0$. Define, for $f \in \ell^{\infty}([0,1]\times \mathcal{T})$,   
    \begin{align*}
            \mathcal{G}_T(f)(\tau) =  &-\frac{1}{T}\sum_{i=1}^{T} f(r_i,\tau)  \Delta_T\phi_k(r_i)\\
              \mathcal{G}(f) (\tau)=  &-\int_0^1 f(r,\tau) d\phi_k\left(r\right) 
    \end{align*}
    Then, $\Phi_T^k(\tau) =  \mathcal{G}_T\left(\widehat{S}\right) (\tau)$. Following a similar argument as in the proof of Theorem \ref{thm_2}, we proceed as follows. First, for $f,g  \in \ell^{\infty}([0,1]\times \mathcal{T})$ , and some positive constant $C_\phi<\infty$, 
    \begin{align*}
        \left|   \mathcal{G}_T(f)(\tau) -  \mathcal{G}_T(g)(\tau) \right| \leq    \frac{1}{T} \sum_{i=1}^{T}\left| ( f(r_i,\tau) -  g(r_i,\tau))  \Delta_T\phi_k(r_i)\right| \leq C_\phi  \left\Vert f(r,\tau) -  g(r,\tau)\right\Vert_{\infty}  
  \end{align*}
  where the last inequality follows since $\Delta_T\phi_k(r)$ is bounded uniformly. Thus, 
  $\mathcal{G}_T(f)(\tau)$ is continuous from $l^\infty([0,1]\times \mathcal{T})$
 to $l^\infty (\mathcal{T})$, uniformly in $\tau$. Second, for each $f \in \ell^{\infty}([0,1]\times \mathcal{T})$,
 \begin{align*}
      \left|   \mathcal{G}_T(f)(\tau) -  \mathcal{G}(f)(\tau) \right| \leq \Vert f(r,\tau)\Vert_\infty \left| \frac{1}{T}\sum_{i=1}^{T}  \Delta_T\phi_k(r_i) -  \int_0^1 d\phi_k\left(r\right)\right|,
 \end{align*}
    where the RHS does not depend on $\tau$. By Riemann-sum approximation, the RHS converges to 0. Therefore, $\mathcal{G}_T(f)(\tau) \to\mathcal{G}(f)(\tau)$ in $l^\infty(\mathcal{T})$. Since $\widehat S([rT],\tau) \Rightarrow \widetilde{G}(r,\tau)$ in $l^\infty([0,1]\times \mathcal{T})$ by Theorem \ref{thm_1}, the extended continuous mapping theorem gives that $\Phi_T^k(\tau) =  \mathcal{G}_T\left(\widehat{S}\right) (\tau)   \Rightarrow \mathcal{G}\left({\widetilde{G}}\right) (\tau)=-\int_0^1 \widetilde{G}(r,\tau) d\phi_k\left(r\right) $ in $l^\infty(\mathcal{T})$.

    By integration by parts, 
    \begin{equation*}
     - \int_0^1 \widetilde{G}(r,\tau) d\phi_k\left(r\right) = - \int_0^1 G(r,\tau)d\phi_k\left(r\right) + G(1,\tau)\int_0^1 r d\phi_k\left(r\right)  =\int_0^1 \phi_k\left(r\right) d{G}(r,\tau).  
     \end{equation*}
    We observe that $\textnormal{E}\left[ \int_0^1 \widetilde{G}(r,\tau) d\phi_k\left(r\right)\right] = 0$. By Ito's isometry, 
    \begin{equation*}
    {\rm Var}\left[\int_0^1 \phi_k\left(r\right) d{G}(r,\tau)\right] = \int_0^1 \phi_k\left(r\right)^2 {\rm Var}\left(d{G}(r,\tau)\right) = \int_0^1 \phi_k\left(r\right)^2 dr \Lambda(\tau)= \Lambda(\tau),
    \end{equation*}
    and 
    \begin{align*}
      &  \text{Cov}\left(\int_0^1 \phi_k\left(r\right) d{G}(r,\tau_1),\int_0^1 \phi_k\left(r\right) d{G}(r,\tau_2)\right) = \textnormal{E} \left(\int_0^1\int_0^1 \phi_k\left(r_1\right) \phi_k\left(r_2\right) d{G}(r_1,\tau_1) d{G}(r_2,\tau_2)'\right) \\
         =& \int_0^1\int_0^1 \phi_k\left(r_1\right) \phi_k\left(r_2\right) \textnormal{E} \left( d{G}(r_1,\tau_1) d{G}(r_2,\tau_2)'\right)        = 
        \Lambda(\tau_1,\tau_2),
    \end{align*}
    Because the sum of Gaussian increments is Gaussian, 
    $\int_0^1 \phi_k\left(r\right) d{G}(r,\tau) \overset{d}{=} G_k(1,\tau)$. Moreover, for any
$k,l\in\{1,\ldots,K\}$ and $\tau_1,\tau_2\in\mathcal{T}$,
\begin{align*}
\operatorname{Cov}\!\left(G_k(1,\tau_1),G_l(1,\tau_2)
\right) = &
\operatorname{Cov}\!\left(
\int_0^1\phi_k(r)\,dG(r,\tau_1),
\int_0^1\phi_l(r)\,dG(r,\tau_2)
\right)\\
 = &
\left(\int_0^1\phi_k(r)\phi_l(r)\,dr\right)
\Lambda(\tau_1,\tau_2) =
\mathbf{1}\{k=l\}\Lambda(\tau_1,\tau_2),
\end{align*}
where the last equality follows from the orthonormality of
$\{\phi_k\}_{k\geq1}$.

    Lastly, we note that, for $f\in \ell^{\infty}([0,1]\times \mathcal{T})$, $f\mapsto \frac{1}{K} \sum_{k=1}^K  ff'$ is a continuous mapping, uniform in $[0,1]\times \mathcal{T}$. Then, by continuous mapping theorem,  
    \begin{align*}
        \frac{1}{K} \sum_{k=1}^K  \Phi_T^k(\tau)  \Phi_T^k(\tau)' \Rightarrow \frac{1}{K} \sum_{k=1}^K G_k(1,\tau) G_k(1,\tau)' \quad {\text in}  \ \ell^{\infty}(\mathcal{T}).
    \end{align*}
\qedsymbol{}

\vskip 0.2in

\subsection{Proof of Proposition \ref{prop:fixedb}}

\vskip 0.1in
We focus on the case $\iota = \rm ke$ since the proof for $\iota=\rm os$ follows the same structure. Under Assumptions \ref{assum_gaussian_limit} and \ref{assum_asymp_linear}, 
$\sqrt{T}\left(\widehat{\beta}(\tau)-\beta_0(\tau)\right)\Rightarrow D(\tau)^{-1} G(1,\tau)$ in  $l^{\infty} (\mathcal{T})$. By Theorem \ref{thm_2} under Assumptions \ref{assum_gaussian_limit}, \ref{assum_asymp_linear}, and \ref{assum_reg}, $\widehat{\Lambda}(\tau)\Rightarrow P(\widetilde{G}(r,\tau),b)$ in $l^\infty(\mathcal{T})$ as $T\to\infty$ and $M/T\to b$. 

As in the proof of Theorem~\ref{thm_1}, there exist a compact set $\Delta \in R^{p}$ such that $\sqrt{T}(\widehat{\beta}(\tau) - \beta(\tau)) \in \Delta$ uniformly in $\tau\in\mathcal{T}$ w.p.1, then we can pick a closed neighborhood $K\subset \mathbb{R}^p$ of $\beta(\tau)$ such that $\widehat\beta(\tau)\in K$ uniformly in $\tau\in\mathcal{T}$ w.p.1. Since $R(.)$ is continuous on $\mathbb{R}^p$, $R(.)$ is uniformly continuous on the compact set $K$, and so $\sup_{\tau\in\mathcal{T}}\Vert R(\widehat{\beta}(\tau)) -R({\beta}_0(\tau)) \Vert = o_P(1)$.

Since continuously differentiability in finite-dimensional space is the same as Hadamard differentiability, we can apply the functional delta method in Section 3 of \cite{van1996weak} (p281), which implies, under the null $r(\beta_0(\tau)) = 0$,
\begin{align*}
    \sqrt{T}r\left(\widehat{\beta}(\tau)\right) \Rightarrow R(\beta_0(\tau))D(\tau)^{-1}G(1,\tau) \quad {\rm in} \ \ \ell^\infty(\tau).
\end{align*}

By assumption, $\sup_{\tau\in\mathcal{T}}\Vert \widehat{D}(\tau) -{D}(\tau) \Vert = o_P(1)$, where ${D}(\tau)$ is positive-definite under Assumption \ref{assum_reg}. Due to the condition (3) from the statement and that $R_0 D(\tau)^{-1} P\left(\widetilde{G}(r,\tau),b\right)D(\tau)^{-1} R_0^{\prime}$ is bounded in probability, $R_0 D(\tau)^{-1} P\left(\widetilde{G}(r,\tau),b\right)D(\tau)^{-1} R_0^{\prime}$ is contained in a compact subset of $\mathbb{R}$ w.p.1. Because the inverse function on a compact set is uniformly continuous on a compact set, we can apply continuous mapping theorem to the Wald and t statistics defined in (\ref{eq_wald}) and (\ref{eq_t-stat}). For the os-HAR case, it follows the exactly same structure, with an application of Theorem \ref{thm_1} and \ref{thm:fixedK}, which complete the proof.  \qedsymbol{}

\vskip 0.2in

\subsection{Proof of Proposition \ref{prop:supt_direct}}

\vskip 0.1in

 Under Assumptions \ref{assum_gaussian_limit},\ref{assum_asymp_linear}, and \ref{assum_reg}, we can apply Proposition \ref{prop:fixedb} to obtain, for each $j\in \{1,...,p\}$, 
\begin{align}
    t_j(\tau):=\frac{\left[\sqrt{T}\left(\widehat{\beta}(\tau) -{\beta}(\tau)\right)\right]_j}{\sqrt{T}\widehat{\sigma}_j(\tau)}\Rightarrow \frac{\left[D(\tau)^{-1} G(1,\tau)\right]_j}{\sqrt{\left[D(\tau)^{-1}P(\widetilde{G}(r,\tau),b)D(\tau)^{-1}\right]_j}}=:t_j^\infty(\tau) \label{eq_tsup}
\end{align}

Since $t_j^\infty(\tau_i)$ is mixed-Gaussian, i.e. a continuously distributed random variable, $P\left(\left|t_j^\infty(\tau_i)\right| =  c_{\alpha} \right) = 0$. Then, using the Portmanteau lemma and (\ref{eq_tsup}), we have, for each $i=1,...,m$,
\begin{align*}
   & P\left(\widehat{\beta}_j(\tau_i)-\widehat{\sigma}_j(\tau_i)c_{\alpha} \leq \beta_j(\tau_i) \leq \widehat{\beta}_j(\tau_i)+\widehat{\sigma}_j(\tau_i)c_{\alpha}\right) \\
    =& P\left(\left|\frac{\widehat{\beta}_j(\tau_i)- \beta_j(\tau_i)}{\widehat{\sigma}_j(\tau_i)}\right| \leq c_{\alpha} \right) 
\to  P\left(\left|t_j^\infty(\tau_i)\right| \leq c_{\alpha} \right) 
\end{align*}
It follows that
   \begin{align*}
    P(\beta_j(\bar{\tau})\in I_j(\alpha)) = P\left( \max_{1\leq i \leq m} \left|\frac{\widehat{\beta}_j(\tau_i)- \beta_j(\tau_i)}{\widehat{\sigma}_j(\tau_i)}\right| \leq c_\alpha\right) \to  P\left(\max_{1\leq i \leq m}\left|t_j^\infty(\tau_i)\right| \leq c_{\alpha} \right). 
\end{align*} \qedsymbol{}

\vskip 0.2in

\subsection{Proposition \ref{prop:donsker}}

\begin{proposition}
    \label{prop:donsker}
    Assumption~3 implies that Assumption~ \ref{assum_gaussian_limit} is satisfied.
\end{proposition}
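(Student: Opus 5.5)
We prove Proposition~\ref{prop:donsker} by the standard two-part route for weak convergence in $l^{\infty}([0,1]\times\mathcal{T})$: convergence of finite-dimensional distributions plus asymptotic equicontinuity (tightness) with respect to a suitable semimetric. As in the proof of Theorem~\ref{thm_1}, we first decompose each coordinate $x_{tl}=x_{tl}^{+}-x_{tl}^{-}$. It therefore suffices to treat a scalar nonnegative weight $x_t$; the vector result then follows by the triangle inequality and the Cramér--Wold device.

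\textbf{Step 1 (fidi convergence).} Fix finitely many points $(r_i,\tau_i)$. The vector $(S([r_iT],\tau_i))_i$ is built from partial sums of the bounded-in-indicator, $L^4$-in-$x_t$ strictly stationary sequence $z_t(\tau_i)$, which is $\beta$-mixing with geometric rate by Assumption~\ref{assum_reg}(ii). For any linear combination, a CLT for strongly mixing arrays applies, e.g.\ the multivariate FCLT of Theorem 16.4 in \citet{hansen2022econometrics}, because $\sum_l \beta(l)^{\delta/(2+\delta)}<\infty$ and the moments are finite. The mean is $\textnormal{E}[z_t(\tau)]=0$ from $Q_e(\tau|x_t)=0$. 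The limiting covariance is computed directly: the cross term between $[0,r_1]$ and $[0,r_2]$ yields $(r_1\wedge r_2)\Lambda(\tau_1,\tau_2)$, and the boundary contributions are $o(1)$ by absolute summability of the covariances. This summability follows from Davydov's inequality with the geometric mixing rate. The result matches \eqref{cov}.

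\textbf{Step 2 (tightness).} Index the process by the class $\mathcal{F}=\{(s,y,x)\mapsto 1\{s\le r\}x(\tau-1\{y\le x'\beta_0(\tau)\}) : r\in[0,1],\tau\in\mathcal{T}\}$. This class is the product of two monotone VC-type classes and a fixed envelope $|x|$ with finite fourth moment. Its bracketing numbers in $L^{2}$ therefore grow polynomially: brackets in $\tau$ come from the Lipschitz dependence $|F_t(x'\beta_0(\tau_j))-F_t(x'\beta_0(\tau_{j-1}))|=|\tau_j-\tau_{j-1}|$, and brackets in $r$ from the grid $k/T$. We then invoke the uniform CLT for $\beta$-mixing sequences of \citet{arcones1994central}, also Doukhan--Massart--Rio, whose bracketing-entropy condition holds under geometric $\beta$-mixing and a polynomial bracketing number. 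The time index $r$ is not a stationary-function index. We handle this with the sequential-empirical-process version of the argument: we bound increments over $r$-blocks by a Bernstein/Rosenthal moment inequality for mixing sums, $\textnormal{E}|\sum_{t=a+1}^{b} z_t(\tau)|^{4}\le C(b-a)^2$, uniformly in $\tau$. Combined with the $\tau$-chaining, this gives $\lim_{\eta\to0}\limsup_T P(\sup_{\rho((r,\tau),(r',\tau'))<\eta}|S-S'|>\varepsilon)=0$ for $\rho=|r-r'|+|\tau-\tau'|$.

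\textbf{Conclusion and main obstacle.} Fidi convergence plus asymptotic equicontinuity yields $S([rT],\tau)\Rightarrow G(r,\tau)$, with $G$ having continuous sample paths. The main obstacle is the joint chaining in Step 2. The moment bound must hold uniformly in $\tau$ with an increment bound of order $|\tau-\tau'|$. This order follows from $\textnormal{E}|1\{e_t(\tau)\le0\}-1\{e_t(\tau')\le0\}|\le|\tau-\tau'|$ and Hölder with the $L^4$ envelope, but the rate degrades to $|\tau-\tau'|^{1/2}$. Closing the chaining then requires the 4th-moment inequality plus the discretization device from the proof of Theorem~\ref{thm_1} (monotonicity in $\tau$ with a $T^{-1/2-d}$ grid) to control the within-cell oscillation. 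We would reuse Lemma~\ref{lemma-B-1}-type bounds for that piece.
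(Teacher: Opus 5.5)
Your route is genuinely different from the paper's, and as written its decisive step does not yet close. The paper's proof is modular. It verifies the three conditions of Theorem 2.1 in \citet{arcones1994central}: an $L^q$ envelope with $q>2$ from Assumption~3(i), the mixing rate from geometric $\beta$-mixing, and the VC-subgraph property via half-spaces. This makes the $\tau$-indexed class Donsker. It then proves Lemma~\ref{lemma_fn_donsker}: under stationarity and strong mixing, Donsker implies functionally Donsker. In that lemma, fidi convergence comes from Hansen's FCLT, exactly as in your Step~1. Equicontinuity is split into the $\tau$-modulus uniformly in $r$ and the $r$-modulus uniformly in $\tau$. Both are controlled by the Ottaviani-type inequality of \citet{bucher2015note} plus stationarity. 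Fernique's theorem gives $P(\|\mathbb{G}\|_{\mathcal F}>\varepsilon/\sqrt{\delta})=o(\delta)$, so the union over $\lceil 1/\delta\rceil$ time blocks vanishes. No $\tau$-increment moments are ever computed. Your plan is instead a hands-on two-parameter chaining that reuses the monotone $T^{-1/2-d}$ discretization and the Bernstein bounds from the proof of Theorem~\ref{thm_1}. That is more self-contained, but it puts all the weight on explicit increment estimates, and those are where it currently falls short.

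Concretely, four points need attention.

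(a) As you note, Arcones--Yu and DMR cannot be applied to a class containing $1\{s\le r\}$. So they give you at most the $\tau$-direction of the full-sample process.

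(b) Your pointwise-in-$\tau$ bound $\textnormal{E}|\sum_{t=a+1}^{b}z_t(\tau)|^4\le C(b-a)^2$ only handles $r$-increments on a fixed finite $\tau$-grid. The real work is $\sup_{r}\sup_{|\tau-\tau'|<\eta}|S([rT],\tau)-S([rT],\tau')|$. This needs $\tau$-chaining combined with a maximal inequality over partial sums; the paper gets it from the Ottaviani inequality.

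(c) With the H\"older-degraded rate you state, the normalized fourth moment of a $\tau$-increment of the partial sum is of order $|\tau-\tau'|$. That exponent is exactly one, so polynomial chaining over the roughly $\log T$ dyadic levels does not close. The discretization only removes the finest scales. The remedy is a fact you already used for bracketing: since $P(e_t(\tau)\le 0\mid x_t)=\tau$, you have $\textnormal{E}[|x_t|^s\,|1\{e_t(\tau)\le 0\}-1\{e_t(\tau')\le 0\}|]=|\tau-\tau'|\,\textnormal{E}|x_t|^s$. Through Davydov's inequality this gives an exponent strictly above one. Alternatively, use Bernstein-type exponential bounds throughout, as in Lemma~\ref{lemma-B-1}.

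(d) A fourth-order Rosenthal inequality under mixing needs slightly more than $\textnormal{E}x_t^4<\infty$. Working at order $p\in(2,4)$ suffices. Its extra linear term, of order $T^{1-p/2}|\tau-\tau'|$ up to logarithms, is harmless down to the $T^{-1/2-d}$ grid. Below the grid, monotonicity plus Lemma~\ref{lemma-B-1} controls the within-cell oscillation.

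With these repairs your route goes through.
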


\begin{proof}
For each element of $x_t$, we denote as $x_{tj}$ for $j=1, \ldots ,p$.  Let $f_j(\tau) = x_{j} (\tau - 1(y\leq x^\prime\beta_0(\tau))) $ and denote the function class $F_j = \{f_j(\tau): \tau \in [\epsilon,1-\epsilon] \}$ for each $j$, and let $F = F_1\times \cdots \times F_p$. 

To apply the multivariate version of Theorem~2.1 in \cite{arcones1994central}, we verify the following three conditions.

\begin{enumerate}
    \item For some $q>2$, $\textnormal{E}\left[\sup_{(f_1, \ldots ,f_p)\in F}\left\| (f_1, \ldots ,f_p) \right\|\right]^{q} \leq \textnormal{E}\|x_t\|^{q} < \infty$. The last inequality follows from Assumption~3(i). 
    \item The beta-mixing coefficients satisfy $\beta(l) = O\left(l^{-\frac{q}{q-2}}(\log l )^{\frac{2(q-1)}{q-2}}\right)$. This condition follows directly from Assumption~3(ii).
    \item $F$ is a VC-subgraph class: Let $\tilde F_j$ be a larger class $\tilde F_j = \{(y,x)\mapsto x_j(\tau - 1\{y\leq x^\prime\beta: \tau \in \mathcal T, \beta\in \mathbb R^p\})\}$. Given $(\tau,\beta)$ and some arbitrary real number $m$, its subgraph is given by
    \begin{align*}
        &\left\{(x,y,m) \in \mathbb{R}^{p+2}: x_{j}(\tau - 1\{y \le x^\prime\beta\})>m   \right\}\\
        =& \left\{ \{ x_{j}>m/\tau\} \cap  \left\{ y_t> x^\prime\beta \right\}\right\} \cup \left\{ \{ x_{j}<m/(\tau-1)\} \cap  \left\{ y_t\le  x^\prime \beta \right\}\right\}.
    \end{align*}
    Note that the sets on the right-hand side are finite set operations of finite-dimensional half-spaces (Lemma 2.6.17 of \cite{van1996weak}). Hence, $\tilde F_j$ and, therefore, $F_j$ is VC-subgraph. Thus, the finite product $F$ is also a VC-subgraph class.
\end{enumerate}
Then, it follows that $S(1,0,\tau)$ converge in distribution to a mean-zero Gaussian process indexed by $\tau$, i.e. $F$ is a Donsker class. 

By applying Lemma \ref{lemma_fn_donsker} (since $\beta$-mixing implies $\alpha$-mixing at the same mixing rate), we conclude that $F$ is also functionally Donsker, i.e. $S(r,0,\tau)$ converges weakly to some Gaussian process $G(r,\tau)$ indexed by $(r,\tau) \in [0,1] \times \mathcal{T}$, which is mean zero and has the covariance matrix in (\ref{cov}).
\end{proof}


\vskip 0.5in


\setcounter{section}{0} \setcounter{equation}{0}%
\setcounter{definition}{0}\setcounter{assumption}{0} \setcounter{lemma}{0} \setcounter{table}{0} \setcounter{figure}{0}%
\setcounter{page}{1} \renewcommand{\thepage}{B-\arabic{page}} %
\renewcommand{\theequation}{B.\arabic{equation}}\renewcommand{\thelemma}{B.%
\arabic{lemma}}\setcounter{corollary}{0}\renewcommand{\thecorollary}{B.%
\arabic{corollary}} \renewcommand{\theassumption}{B.\arabic{assumption}}%
\renewcommand{\theremark}{B.\arabic{remark}}
\renewcommand{\thesection}{B.}
\renewcommand{\thesubsection}{B.\arabic{section}.\arabic{subsection}}
\setcounter{subsection}{0}
\renewcommand{%
\thedefinition}{B.\arabic{definition}}
\renewcommand{\thetable}{B.\arabic{table}}
\renewcommand{\thefigure}{B.\arabic{figure}}

\section{Auxiliary Lemmas}

The next two lemmas are used to prove Lemma~\ref{lemma-R}. To study (\ref{eq-R1}), we first establish its pointwise convergence. 

\begin{lemma}
\label{lemma-beta-pt}
Assume Assumption 3. Fix $r \in [0,1]$, $\tau \in \mathcal{T}$, and $\xi \in \Delta$. Then for $T$ large enough, $\| R(r,\tau,\xi) \| = O_{p}\left(T^{-1/4}\right)$.
\end{lemma}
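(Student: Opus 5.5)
We need to show that for fixed $(r,\tau,\xi)$, $R(r,\tau,\xi)=O_p(T^{-1/4})$. Here
\[
R(r,\tau,\xi)=T^{-1/2}\sum_{t=1}^{[rT]}x_t\,\eta_t, \qquad \eta_t=I\bigl(y_t\le x_t'\beta_0(\tau)+x_t'\xi/\sqrt T\bigr)-I\bigl(y_t\le x_t'\beta_0(\tau)\bigr)-F_t\bigl(x_t'\beta_0(\tau)+x_t'\xi/\sqrt T\bigr)+F_t\bigl(x_t'\beta_0(\tau)\bigr).
\]
Following the reduction at the start of the proof of Theorem \ref{thm_1}, I treat $x_t$ as a nonnegative scalar. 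The plan is to bound the second moment, $\textnormal{E}\|R\|^2=O(T^{-1/2})$, and then apply Chebyshev's inequality. Since $\textnormal{E}[x_t\eta_t]=0$ by the law of iterated expectations, this is a variance computation. We have
\[
\textnormal{E}R^2=T^{-1}\sum_{t,s\le [rT]}\textnormal{Cov}(x_t\eta_t,x_s\eta_s),
\]
which splits into diagonal terms and off-diagonal covariances.

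\textbf{Diagonal terms.} The indicator difference is nonzero only when $y_t$ lies between $x_t'\beta_0(\tau)$ and $x_t'\beta_0(\tau)+x_t\xi/\sqrt T$. By Assumption 3(iii) (density bounded by $\bar f$),
\[
\textnormal{E}[x_t^2\eta_t^2]\le \textnormal{E}\bigl[x_t^2\,|F_t(\cdot+x_t\xi/\sqrt T)-F_t(\cdot)|\bigr]\le \bar f\,\|\xi\|\,T^{-1/2}\,\textnormal{E}|x_t|^3,
\]
which is finite by Assumption 3(i). Summed and divided by $T$, the diagonal contribution is $O(T^{-1/2})$.

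\textbf{Off-diagonal terms.} This is the main obstacle, because the summands are not an MDS. I would split the lags at $m_T=C\log T$.

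For $|t-s|\le m_T$, I use the bivariate density bound in Assumption 3(iv). The product of two indicator differences is nonzero only if both $y_t$ and $y_s$ fall into intervals of length $O(|x|/\sqrt T)$, so its expectation is $O(T^{-1}\textnormal{E}[x_t^2x_s^2])=O(T^{-1})$ by Cauchy--Schwarz and fourth moments. Alternatively, Cauchy--Schwarz alone gives $O(T^{-1/2})$ per pair. With the density bound, the near-lag contribution is $T^{-1}\cdot T\cdot m_T\cdot O(T^{-1})=O(\log T/T)$, which is negligible. Without it, the contribution would be $O(m_T T^{-1/2})$, which suffers a log loss; this is why the density bound matters.

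For $|t-s|>m_T$, I apply Davydov's covariance inequality for $\beta$-mixing (hence $\alpha$-mixing) sequences with moment exponent $q$:
\[
|\textnormal{Cov}(x_t\eta_t,x_s\eta_s)|\le C\,\alpha(|t-s|)^{1-2/q}\,\|x_t\eta_t\|_q^2 .
\]
Because $|\eta_t|\le 2$, we have $\|x_t\eta_t\|_q^2\le C\,(T^{-1/2})^{2/q}$ up to moments of $x$; in fact bounded suffices. With geometric decay from Assumption 3(ii), the sum over $j>m_T$ is $O(a^{c m_T})=O(T^{-1})$ once $C$ is chosen large. The far-lag contribution is therefore $O(T^{-1})$.

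Combining the three pieces gives $\textnormal{E}R^2=O(T^{-1/2})$, and Chebyshev's inequality yields $R=O_p(T^{-1/4})$. The vector case follows by applying the scalar bound componentwise and using the triangle inequality.
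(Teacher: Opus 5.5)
Your argument is correct. Its skeleton matches the paper's: show $\textnormal{E}\|R\|^2=O(T^{-1/2})$, handle the diagonal with the bounded conditional density, and finish with Chebyshev. The difference lies in how you treat the off-diagonal covariances.

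\textbf{The paper's route.} It handles all lags with one inequality. It uses Assumption 3(iv) to make the second moment of the product $\eta_1x_1\eta_jx_j$ of order $T^{-1}$. It then applies a $\beta$-mixing covariance bound (Lemma C.2 of Galvao and Kato, 2016) to get $|\textnormal{Cov}(\eta_1x_1,\eta_jx_j)|\le CT^{-1/2}\beta(j)^{1/2}$, and sums this using only $\sum_j\beta(j)^{1/2}<\infty$.

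\textbf{Your route.} You separate smallness from mixing by truncating at $m_T\asymp\log T$. Short lags get their smallness from the densities, with no mixing used. Long lags get their decay from Davydov's inequality, with no smallness used, and the geometric decay in Assumption 3(ii) makes the tail $O(T^{-1})$.

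\textbf{What each buys.} Your route needs only a textbook inequality and the fourth moments of Assumption 3(i), but it relies on fast mixing. The paper's route tolerates merely summable $\beta(j)^{1/2}$, at the cost of a more specialized covariance bound.

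\textbf{A step to tighten.} Your $O(T^{-1})$ per-pair bound at short lags covers only the product-of-indicators part of $\textnormal{E}[x_tx_s\eta_t\eta_s]$. Let $D_t$ denote the indicator difference and $p_s$ its conditional probability given $x_s$. A cross term such as $\textnormal{E}[x_tx_sD_tp_s]$ would need a bounded density of $y_t$ given $(x_t,x_s)$, which Assumption 3 does not supply. Instead, bound $p_s\le\bar f|x_s\xi|/\sqrt T$ and apply H\"older with exponents $(4/3,4)$, using $\textnormal{E}D_t=O(T^{-1/2})$ and fourth moments. This gives $O(T^{-5/8})$ per pair. The short-lag block is then $O(T^{-5/8}\log T)=o(T^{-1/2})$, so your conclusion is unaffected.
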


\textbf{Proof:} When $r=0$, the claim holds trivially. So fix $0 < r \leq 1$ and let 
\begin{equation*}
\eta_{t}(\xi)  = I(y_{t} \leq x_{t}^{\prime}\beta_{0}(\tau) + x_{t}^{\prime}\xi/\sqrt{T}) - I(y_{t} \leq x_{t}^{\prime}\beta_{0}(\tau)) - F_{t}(x_{t}^{\prime}\beta_{0}(\tau) + x_{t}^{\prime}\xi/\sqrt{T}) + F_{t}(x_{t}^{\prime}\beta_{0}(\tau)).
\end{equation*}
Note that $R(r,\tau,\xi) = T^{-1/2}\sum_{t=1}^{[rT]} \eta_{t}(\xi) x_{t}$. To evaluate its stochastic order, we aim to calculate its variance. By stationarity,

\begin{align*}
\textnormal{Var}\left(T^{-1/2}\sum_{t=1}^{[rT]} \eta_{t}(\xi) x_{t} \right) &= r \textnormal{Var}\left(\eta_{t} x_{t} \right) + 2 \sum_{j=2}^{[rT]} \left(\frac{[rT]}{T} - \frac{j}{T}\right) \textnormal{Cov}\left(\eta_{1} x_{1} ,\eta_{j} x_{j} \right) \\
&< r \textnormal{Var}\left(\eta_{t} x_{t} \right) + 2 r \sum_{j=2}^{[rT]} \left| \textnormal{Cov}\left(\eta_{1} x_{1} ,\eta_{j} x_{j} \right) \right|.
\end{align*}

To calculate $\textnormal{Var}\left(\eta_{t}(\xi) x_{t} \right)$, note that $\eta_{t}(\xi)$ is a centered Bernoulli random variable with the success probability $p_{t} = \textnormal{P}(x_{t}^{\prime}\beta_{0}(\tau) \leq y_{t} \leq x_{t}^{\prime}\beta_{0}(\tau) + x_{t}^{\prime}\xi/\sqrt{T}) = \textnormal{P}(0 \leq e_{t}(\tau) \leq x_{t}^{\prime}\xi/\sqrt{T})$ where the quantile error $e_{t}(\tau) = y_{t} - x_{t}^{\prime}\beta_{0}(\tau)$. So, $\textnormal{E}[\eta_{t} x_{t}]=0$ by the law of iterated expectation and $\textnormal{Var}\left(\eta_{t} x_{t}\right) = \textnormal{E}[ \textnormal{E}[\eta_{t}^{2}|x_{t}] x_{t} x_{t}^{\prime}] = \textnormal{E}[\textnormal{E}[p_{t}(1 - p_{t}) | x_{t}] x_{t} x_{t}^{\prime}]$. 
To evaluate this quantity, note that under Assumption~3,
$$p_{t} = \int_{0}^{x_{t}^{\prime}\xi/\sqrt{T}} f(e_{t}|x_{t})de_{t} \leq \int_{0}^{x_{t}^{\prime}\xi/\sqrt{T}} \bar{f} de_{t} \leq \bar{f} |x_{t}^{\prime}\xi| /\sqrt{T}.$$
Therefore, $\textnormal{Var}\left(\eta_{t} x_{t}\right) \leq c_1 T^{-1/2}$ for some constant $c_1$ by Assumption~3 (i), (iii),

To evaluate $\textnormal{E}\left[|\eta_{1}(\xi) \, \eta_{j}(\xi) x_{1} x_{j}^{\prime}|^{2}\right]$, note that $\textnormal{E}\left[I(0 \leq e_{1}(\tau) \leq x_{t}^{\prime}\xi/\sqrt{T}) I(0 \leq e_{j}(\tau) \leq x_{t}^{\prime}\xi/\sqrt{T})\right]$ can be written as 
\begin{align*}
\int_{0}^{x_{t}^{\prime}\xi/\sqrt{T}} \int_{0}^{x_{t}^{\prime}\xi/\sqrt{T}} f(e_{1},e_{j}|x_{1},x_{j}) de_1 de_{j} < \int_{0}^{x_{t}^{\prime}\xi/\sqrt{T}} \int_{0}^{x_{t}^{\prime}\xi/\sqrt{T}} \check{f} de_1 de_{j} = \check{f} |\xi|^{2} T^{-1}
\end{align*}%
by Assumption~3(i), (iv). Then by Lemma C.2 in \citet{GalvaoKato16}, 
$\left|\textnormal{Cov}\left(\eta_{1}(\xi),\eta_{j}(\xi)\right)\right| \leq C T^{-1/2} \beta(j)^{1/2}$ for some constant $C$. Conclude that 
\begin{equation*}
r \sum_{j=2}^{[rT]} \left| \textnormal{Cov}\left(\eta_{1}(\xi) x_{1},\eta_{j}(\xi) x_{j} \right) \right| \leq r c_{0} T^{-1/2} \sum_{j=2}^{T} \beta(j)^{1/2}  \leq r c_{0} T^{-1/2} \sum_{j=1}^{\infty} \beta(j)^{1/2} \leq r C T^{-1/2}
\end{equation*}
because $\sum_{j=1}^{\infty} \beta(j)^{1/2} < \infty$ from Assumption~3(ii). Combining the above results, this establishes that $\textnormal{Var}\left(T^{-1/2} \sum_{t=1}^{[rT]} \eta_{t}(\xi) x_{t}\right) \leq r C T^{-1/2}$. \qedsymbol{}

\vskip 0.2in

The next lemma extends the pointwise convergence in Lemma~\ref{lemma-beta-pt} to the uniform convergence. 

\begin{lemma}
\label{lemma-beta-uf}
Assume Assumption 3, then for $T$ large enough,
$$\quad \max_{1 \leq k \leq N(r)} \max_{1 \leq j \leq N(\tau)} \sup_{\xi \in \Delta} \|R\left(k/T, \tau_{j}, \xi \right)\| = O_{p}(T^{-1/4}\log(T)).$$
\end{lemma}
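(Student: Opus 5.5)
The plan is to combine the pointwise variance bound of Lemma~\ref{lemma-beta-pt} with an exponential (Bernstein-type) inequality for $\beta$-mixing sequences. That lets us take a union bound over the finite grid in $(k,j)$ and over a finite net in $\xi$. A bracketing argument in $\xi$ then controls what happens between net points. As in the proof of Theorem~\ref{thm_1}, we take $x_t$ scalar and nonnegative without loss of generality. The grid has $N(r)\cdot N(\tau)\le T\cdot T^{1/2+d}$ points, which is polynomial in $T$, so a union bound will only cost a logarithmic factor.

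\textbf{Step 1 (finite net in $\xi$).} Cover the compact set $\Delta$ by a grid $\{\xi_1,\dots,\xi_{N(\xi)}\}$ with mesh $\varepsilon_T=T^{-1/2}$, so that $N(\xi)=O(T^{p/2})$. For $\xi$ in a cell, nonnegativity of $x_t$ and monotonicity of $I(y_t\le x_t'\beta_0+x_t'\xi/\sqrt T)$ and of $F_t(\cdot)$ in $\xi$ sandwich $R(k/T,\tau_j,\xi)$ between the values at the cell corners. The same device is used for $\tau$ in the main text. The sandwich error has two parts:
\begin{itemize}
\item a centered indicator increment term, handled like $B_1$;
\item a compensator term of order $T^{-1/2}\sum_t x_t^2\bar f\varepsilon_T/\sqrt T=O_p(T^{-1/2})$, using Assumption~3(iii) and (vii).
\end{itemize}
The centered increment has variance $O(\varepsilon_T/\sqrt T)=O(T^{-1})$, which is well below the target. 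It is therefore enough to bound the maximum over the finite set of triples $(k,j,\xi_l)$.

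\textbf{Step 2 (Bernstein blocking).} Fix a triple. Truncate $x_t$ at level $c_T=T^{a}$ for small $a$; the fourth moment in Assumption~3(i) makes the truncation remainder negligible. The summands $\eta_t x_t$ are then bounded by $2c_T$ and have variance $O(T^{-1/2})$, and Lemma~\ref{lemma-beta-pt} gives long-run variance $O(T^{-1/2})$ as well. Apply a Bernstein inequality for $\beta$-mixing sequences, built from Berbee coupling with big/small blocks of length $q_T\asymp\log T$. The geometric mixing in Assumption~3(ii) makes the coupling error $T a^{q_T}$ polynomially small. This yields
\begin{equation*}
P\bigl(|R(k/T,\tau_j,\xi_l)|>xT^{-1/4}\log T\bigr)\le C\exp\bigl(-c\,x\log T\bigr)+CT^{-A}
\end{equation*}
for large $x$. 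The variance part of the exponent is of order $x^2\log^2T$. The range part involves $c_Tq_T T^{-1/2}\cdot xT^{-1/4}\log T$, which is negligible for small $a$, so the $x^2\log^2 T$ term dominates.

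\textbf{Step 3 (union bound).} Multiply by the $O(T^{3/2+d+p/2})$ triples. For $x$ large enough the total tail probability goes to zero, which gives the stated $O_p(T^{-1/4}\log T)$.

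The main obstacle is Step 2. One needs a Bernstein inequality whose variance proxy is the small pointwise variance $T^{-1/2}$, not an $O(1)$ bound. That requires the covariance summability from Lemma~\ref{lemma-beta-pt} to hold within blocks, together with careful control of the truncation of $x_t$. If the exponent falls short, the fallback is to enlarge the block length to $q_T\asymp (\log T)^2$ and absorb the extra factor into the $\log T$ slack.
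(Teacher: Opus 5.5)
Your overall architecture matches the paper's: a finite net in $\xi$, a monotone bracket for the within-cell oscillation, a Bernstein-type inequality for mixing sums at the variance scale $T^{-1/2}$ from Lemma~\ref{lemma-beta-pt}, and a union bound over polynomially many points. Your mesh $T^{-1/2}$, instead of the paper's $b_T=T^{-1/4}$, is a harmless variation that makes the compensator part of the oscillation $O_p(T^{-1/2})$ outright. One correction there: the bracket should be $x_t'\xi_l/\sqrt{T}\pm\varepsilon_T s_t/\sqrt{T}$ with $s_t=\sum_j|x_{tj}|$, as in the paper. The reduction $x_{tl}=x_{tl}^+-x_{tl}^-$ makes only the weight nonnegative, not the $x_t$ inside the indicator, so ``values at the cell corners'' is not literally a bracket.

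\textbf{The gap: truncation in Step 2.} Take blocks $q_T=C\log T$ and summands bounded by $T^{-1/2}c_T$ with $c_T=T^a$. At level $xT^{-1/4}\log T$ the Bernstein exponent is, up to constants,
\[
\min\bigl\{x^2\log^2 T,\ x\,T^{1/4-a}\bigr\}.
\]
So you need $a<1/4$ strictly. At $a=1/4$ the exponent is $O(x)$, and the union bound over $T^{3/2+d+p/2}$ triples fails.

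With only $\textnormal{E}[x_{tj}^4]<\infty$, however, the discarded part is controlled only when $a\ge 1/4$. You have
\[
\textnormal{Pr}\bigl(\max_t\|x_t\|>T^a\bigr)\le T^{1-4a}\,\textnormal{E}\|x_t\|^4, \qquad \textnormal{E}\Bigl[T^{-1/2}\textstyle\sum_t\|x_t\|1\{\|x_t\|>T^a\}\Bigr]\le T^{1/2-3a}\,\textnormal{E}\|x_t\|^4 .
\]
For $a<1/4$ the first bound is not $o(1)$ and the second is not $O(T^{-1/4}\log T)$. The smallness of $\eta_t$ does not help through an envelope, because $\sup_{j,\xi}|\eta_t|$ is of order one.

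Your fallback goes the wrong way. Lengthening blocks to $(\log T)^2$ enlarges the range term $q_Tc_T$. It only improves the coupling error, which is already polynomially small.

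The paper has the same weak point in another form. It passes from $\max_t\|x_t\|=o_p(T^{1/4})$ to $\max_t\|x_t\|\le BT^{1/4-\kappa}$ for a fixed $\kappa>0$, which does not follow from four moments.

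To close the step you have two options. Note first that truncation preserves centering, since $\textnormal{E}[\eta_t\mid x_t]=0$.
\begin{itemize}
\item Strengthen Assumption~3(i) to $\textnormal{E}\|x_t\|^{4+\delta}<\infty$ and take $a\in(1/(4+\delta),1/4)$. The truncation is then inactive with probability tending to one, so no remainder arises.
\item Keep four moments and truncate at $\theta T^{1/4}$, which is inactive with probability tending to one because $\max_t\|x_t\|=o_p(T^{1/4})$. Run the union bound at level $xT^{-1/4}(\log T)^2$; the range part of the exponent is then $x\log T/(C\theta)$. This yields $O_p(T^{-1/4}(\log T)^2)$, which suffices for parts 1 and 3 of Theorem~\ref{thm_1} but not for the exact rate claimed.
\end{itemize}
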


\vskip 0.1in

To extend a pointwise result to a uniform result, we will use the chaining argument. Because we will use the Bernstein inequality, it will be convenient to define 
$$\widetilde{R}(r,\tau,\xi) = T^{-1/2} R(r,\tau,\xi) = \frac{1}{T} \sum_{t=1}^{[rT]} \eta_{t}(\xi) x_{t}$$%
and study its asymptotic order. For this purpose, take $\gamma_{T} = T^{-3/4}\log(T)$. 

Partition the parameter space $\Delta$ into a union of cubes $E_{i}$ which have vertices on the set $(k_{i1}b_{T},\ldots,k_{ip}b_{T})$ where $k_{il} = \{0,\pm 1,\ldots,\pm b_{T}^{-1}\}$. Let $b_{T} = T^{-1/4}$, then $E_i$ has the side length $b_{T} = c_0 T^{-1/4}$ for some finite constant $c_{0}$ and the number of cubes $N(\xi) = (2 b_{T}^{-1}+1)^{p} = (2T^{1/4}+1)^{p}$. Write $\xi \in E_{i}$ if $\xi$ lies into the $i$-th cube $E_{i}$ where $i=1,\ldots,N$. Let $\xi_i$ be the smallest value in $E_{i}$. By the triangle inequality
\begin{align*}
\sup_{\xi \in \Delta} | \widetilde{R}(r,\tau,\xi) | \leq \max_{1 \leq i \leq N(\xi)} |\widetilde{R}(r,\tau,\xi_{i})| + \max_{1 \leq i \leq N(\xi)} \sup_{\xi \in \Delta \cap E_{i}} | \widetilde{R}(r,\tau,\xi) - \widetilde{R}(r,\tau,\xi_{i})|.
\end{align*}%

To analyze the first term, let 
$$P(\xi) = \max_{1 \leq k \leq T} \max_{1 \leq j \leq N(\tau)}\max_{1 \leq i \leq N(\xi)} | \widetilde{R}(k/T,\tau_{j},\xi_{i}) |.$$ 
By the union bound, for any $M>0$,
\begin{equation}
\label{eq-prob-P}
\textnormal{Pr}\left(|P(\xi)| \geq M \, \gamma_{T}\right) \leq \sum_{k=1}^{N(r)} \sum_{j=1}^{N(\tau)} \sum_{i=1}^{N(\xi)} \textnormal{Pr}\left(| \widetilde{R}(k/T,\tau_{j},\xi_{i})| \geq M \, \gamma_{T}\right).
\end{equation}%

Note that $|\eta_{t}(\xi)| \leq 1$ and because of Assumption~3(i), we have $\max_{1 \leq t \leq T}\|x_{t}\| = o_{p}(T^{1/4})$ (see Lemma 2 in \citealp{ota2018_preprint}). So there exists $\kappa \in (0, 1/8)$ and $B>0$ such that $\max_{t} |\eta_{t}(\xi) x_{t}| \leq \max_{t}\|x_{t}\| \leq B T^{1/4-\kappa}$. 

For each $T$, by the Bernstein inequality under strong mixing conditions with geometrically-decaying mixing coefficients (\citealp{Peligrad.Rio.2009}; \citealp{Hang.Steinwart.2017}), there exists a constant $c_1 > 0$ not depending on $r$, $\tau$, and $\xi$ such that
\begin{align}
& \textnormal{Pr}\left(T| \widetilde{R}(k/T,\tau_{j},\xi_{i})| \geq T M \gamma_{T}\right) \label{eq-bernstein-1} \\
& \leq \exp\left(-\frac{c_{1} T^2 M^2 \gamma_{T}^{2}}{\textnormal{Var}\left(\sum_{t=1}^{[rT]}\eta_{t}(\xi_i)\right)+B^2 T^{1/2-2\kappa} +B T^{1/4-\kappa} T M \gamma_{T}\log(T)^{2}}\right) \nonumber  \\
 &\leq \exp\left(-\frac{c_{1} T^2 M^2 T^{-3/2} \log(T)^{2}}{C r T^{1/2} + B^2 T^{1/2-2\kappa} + B M T^{1/2-\kappa} \log(T)^{3}}\right) \nonumber \\
&= \exp\left(-\frac{c_{1} M^2 \log(T)^{2}}{C r + B^2 T^{-2\kappa}+ B M T^{-\kappa} \log(T)^{3}}\right) 
\label{eq-bernstein-2}
\end{align}%
where the second inequality utilizes the variance calculation in Lemma~\ref{lemma-beta-pt}. Let $T$ grow to infinity, then the second and third terms in the denominator of (\ref{eq-bernstein-2}) converge to $0$ for any finite $M$ and $B$. Therefore, (\ref{eq-bernstein-1}) is less than $\exp\left(-\frac{c_{1} M^2}{C} \log(T)^{2}\right)$ in large samples. With (\ref{eq-prob-P}), conclude that, 
 \begin{align*}
\textnormal{Pr}\left(|P(\xi)| \geq M \gamma_{T}\right) &\leq N(r) N(\tau) N(\xi) \cdot  \exp\left(-\frac{c_{1} M^2}{C} \log(T)^{2}\right) \\
&= \exp\left(\left( \frac{3}{2}+\frac{p}{4}+d \right) \log(T)-\frac{c_{1} M^2}{C} \log(T)^{2}\right) = o(T^{-1})
\end{align*}%
for any finite $M>0$. Next, consider 
$$Q(\xi) =  \max_{1 \leq k \leq T} \max_{1 \leq j \leq N(\tau)} \max_{1 \leq i \leq N(\xi)} \sup_{\xi \in \Delta \cup E_{i}} | \widetilde{R}(k/T,\tau_{j},\xi) - \widetilde{R}(k/T,\tau_{j},\xi_{i})|.$$ 
Given that $\xi \in E_{i}$, $Q(\xi)$ is bounded from above by $$\max_{1 \leq k \leq T} \max_{1 \leq j \leq N(\tau)} \max_{1 \leq i \leq N(\xi)} \frac{1}{T}\sum_{t=1}^{k} \bar{\eta}_{t}(\xi_{i}) x_{t}$$%
where 
$\bar{\eta}_{t}(\xi_{i}) = \big\{ I\left(e_{t}(\tau) \leq x_{t}^{\prime}\xi_{i}/\sqrt{T} + b_{T}s_{t}/\sqrt{T}\right) - I\left(e_{t}(\tau) \leq x_{t}^{\prime}\xi_{i}/\sqrt{T} - b_{T}s_{t}/\sqrt{T}\right) - F_{t}\left(x_{t}^{\prime}\xi_{i}/\sqrt{T} + b_{T}s_{t}/\sqrt{T} \right) +F_{t}\left(x_{t}^{\prime}\xi_{i}/\sqrt{T} - b_{T}s_{t}/\sqrt{T} \right) \big\}$ and $s_{t} = \sum_{j=1}^{p}|x_{tj}|$. Apply the Bernstein inequality to this upper bound. Similarly for $\eta_{t}(\xi)$, one can determine that $\textnormal{Var}\left(\frac{1}{\sqrt{T}} \sum_{t=1}^{[rT]} \bar{\eta}_{t}(\xi_{i}) x_{t}\right)$ is of order $O\left(b_{T}/\sqrt{T}\right)$. Given the choice of $b_T$, $\textnormal{Var}\left(\sum_{t=1}^{[rT]}\bar{\eta}_{t}(\xi_{i}) x_{t}\right) \leq C r T^{1/4}$ where $C$ is a constant. Set $\bar{\gamma}_{T} = T^{-3/4-\kappa} \log(T)^{3}$. By the union bound, for any $M > 0$,
\begin{equation*}
\textnormal{Pr}\left(|Q(\xi)| \geq M \, \bar{\gamma}_{T}\right) \leq \sum_{k=1}^{N(r)} \sum_{j=1}^{N(\tau)} \sum_{i=1}^{N(\xi)} \textnormal{Pr}\left( \big| \sum_{t=1}^{[rT]}\bar{\eta}_{t}(\xi_{i}) x_{t} \big| \geq T M \, \bar{\gamma}_{T} \right).
\end{equation*}%

The summand in the right hand side is bounded by 
\begin{align*}
&\textnormal{Pr}\left(\big|\sum_{t=1}^{[rT]}\bar{\eta}_{t}(\xi_{i}) x_{t} \big| \geq T M \, \bar{\gamma}_{T}\right) \\
&\leq \exp\left(-\frac{c_{1} T^2 M^2 \bar{\gamma}_{T}^{2}}{\textnormal{Var}\left(\sum_{t=1}^{[rT]}\bar{\eta}_{t}(\xi_i)x_{t}\right)+ B^{2} T^{1/2-2\kappa} + B T^{1/4-\kappa} T M \bar{\gamma}_{T}\log(T)^{2}}\right) \\
 &\leq \exp\left(-\frac{c_{1} M^2 T^{1/2-2\kappa} \log(T)^{6}}{C r T^{1/4} + B^{2} T^{1/2-2\kappa} + B M T^{1/2-2\kappa} \log(T)^{5}}\right) \\
&\leq \exp\left(-\frac{c_{1} M^2 \log(T)}{C r T^{-1/4+2\kappa} \log(T)^{-5} + B^2 \log(T)^{-5} + B M }\right).
\end{align*}%
For $T$ sufficiently large, the first and second terms in denominator converge to $0$ for any finite $c_1$ and $M$. So in large samples the last expression is less than 
$\exp\left(- \frac{c_1 M}{B} \log(T)\right)$. Therefore, by choosing $M > \frac{B}{c_1}(5/2+p/4+d)$,
\begin{align*}
\textnormal{Pr}\left(|Q(\beta)| \geq M \eta_{T}\right) & \leq N(r) N(\tau) N(\xi) \exp\left(-\frac{c_1 M}{B} \log(T)\right) \\
& = \exp\left(-\left(\frac{c_1 M}{B} - \left(\frac{3}{2}+\frac{p}{4}+d\right)\right) \log(T)\right) = o(T^{-1}).
\end{align*}%
Combining $P(\beta)$ and $Q(\beta)$, one can choose $M>0$ such that 

\begin{align*}
\textnormal{Pr}\left(\max_{1 \leq k \leq N(r)} \max_{1 \leq j \leq N(\tau)} \sup_{\xi \in \Delta} \|\widetilde{R}(r,\beta,\tau)\| > M \gamma_{T} \right) = o(T^{-1}).
\end{align*}

So, $\max_{1 \leq k \leq N(r)} \max_{1 \leq j \leq N(\tau)} \sup_{\xi \in \Delta} \|\widetilde{R}(r,\beta,\tau)\| = O_{p}\left(T^{-3/4} \log(T) \right)$, therefore, 

$\max_{1 \leq k \leq N(r)} \max_{1 \leq j \leq N(\tau)} \sup_{\xi \in \Delta} \|R(r,\beta,\tau)\| = O_{p}\left(T^{-1/4} \log(T) \right)$ as desired.  \qedsymbol{}

\vskip 0.3in

We now consider the second term on the right-hand side of (\ref{eq-the-goal}). To prove Lemma~\ref{lemma-B-1}, let $B_{1}(r,\tau_{j}) = T^{-1/2}\sum_{t=1}^{[rT]} \zeta_{t} x_{t}$ where
$$\zeta_{t} = I(y_{t} \leq x_{t}^{\prime}\beta_{0}(\tau_{j})) - I(y_{t} \leq x_{t}^{\prime}\beta_{0}(\tau_{j-1})) - (\tau_{j} - \tau_{j-1}). $$

\vskip 0.1in

\noindent \textbf{Proof of Lemma~\ref{lemma-B-1}: } The proof is analogous to the proofs of Lemmas~\ref{lemma-beta-pt} and \ref{lemma-beta-uf}. Note that $\zeta_{t}$ is a centered Bernoulli variable with success probability $\delta = \tau_{j} - \tau_{j-1}$. Proceed similarly to the proof of Lemma~\ref{lemma-beta-pt}, one can show that there exists a constant $C>0$ such that $\textnormal{Var}\left(T^{-1/2} \sum_{t=1}^{[rT]} \zeta_{t} x_{t} \right) \leq C r \delta$. Because  $\delta = T^{-1/2-d}$, this establishes the pointwise rate; $\| B_{1}(r,\tau_{j}) \| = O_{p}\left(T^{-1/4-d/2} \right)$. One can then extend it to the uniform result by applying the Bernstein inequality and obtain $\sup_{r \in [0,1]} \max_{1 \leq j \leq N(\tau)} \|B_{1}(r,\tau_{j})\| = O_{p}\left(T^{-1/4-d/2} \log(T) \right)$. The proof goes similarly as those of Lemma~\ref{lemma-beta-uf}, so we will omit the details.  \qedsymbol{}

\vskip 0.2in

Finally, we study the third term on the right hand side of (\ref{eq-the-goal}). Let $B_2(r,\tau_{j}) = T^{-1/2} \sum_{t=1}^{[rT]} \upsilon_{t} x_{t}$ where

\vskip -0.4in

$$\upsilon_{t} = 
F_{t}\left(x_{t}^{\prime}\beta_{0}(\tau_{j}) + x_{t}^{\prime}\xi/\sqrt{T}\right) - F_{t}\left(x_{t}^{\prime}\beta_{0}(\tau_{j-1}) + x_{t}^{\prime}\xi/\sqrt{T}\right) - F_{t}\left(x_{t}^{\prime}\beta_{0}(\tau_{j}) \right) + F_{t}\left(x_{t}^{\prime}\beta_{0}(\tau_{j-1}) \right).$$ 

\noindent \textbf{Proof of Lemma~\ref{lemma-B-1-uniform}:} By the Taylor series expansion, 

$\upsilon_{t} = \left\{ f_{t}\left(x_{t}^{\prime}\beta_{0}(\tau_{j}) \right) - f_{t}\left(x_{t}^{\prime}\beta_{0}(\tau_{j-1}) \right) \right\} x_{t} x_{t}^{\prime} \, \xi/\sqrt{T} + O_{p}(T^{-1})$ under Assumption~3 (v), (vii). The first order term in $B_2(r,\tau_{j})$ is $T^{-1} \sum_{t=1}^{[rT]} \left\{ f_{t}\left(x_{t}^{\prime}\beta_{0}(\tau_{j}) \right) - f_{t}\left(x_{t}^{\prime}\beta_{0}(\tau_{j-1}) \right) \right\} x_{t} x_{t}^{\prime} \, \xi$. To study it, observe that by the mean value theorem, $x_{t}^{\prime}\beta_{0}(\tau_{j}) - x_{t}^{\prime}\beta_{0}(\tau_{j-1}) = Q(\tau_{j}|x_{t}) - Q(\tau_{j-1}|x_{t}) = \frac{\tau_{j} - \tau_{j-1}}{f_{t}(\tilde{q}_{t})}$ where $x_{t}^{\prime}\beta_{0}(\tau_{j-1}) \leq \tilde{q}_{t}\leq x_{t}^{\prime}\beta_{0}(\tau_{j})$. Because $f_{t}(\cdot)$ is bounded away from zero uniformly, $x_{t}^{\prime}\beta_{0}(\tau_{j}) - x_{t}^{\prime}\beta_{0}(\tau_{j-1}) = O(\delta) = O(T^{-1/2-d})$ for any $j$. By the mean value theorem, $f_{t}\left(x_{t}^{\prime}\beta_{0}(\tau_{j})\right) - f_{t}\left(x_{t}^{\prime}\beta_{0}(\tau_{j-1}) \right) = f_{t}^{\prime}\left(x_{t}^{\prime}\beta_{0}(\tilde{\tau}_{j}) \right) \cdot \left( x_{t}^{\prime}\beta_{0}(\tau_{j}) - x_{t}^{\prime}\beta_{0}(\tau_{j-1}) \right)$ where $\tilde{\tau}_{j} \in [\tau_{j-1}, \tau_{j}]$. By Assumption A(v), $| f_{t}\left(x_{t}^{\prime}\beta_{0}(\tau_{j})\right) - f_{t}\left(x_{t}^{\prime}\beta_{0}(\tau_{j-1}) \right) | \leq c |x_{t}^{\prime}\beta_{0}(\tau_{j}) - x_{t}^{\prime}\beta_{0}(\tau_{j-1})|$ for some $c > 0$ for any $j$. Because $T^{-1}\sum_{t=1}^{[rT]} x_{t} x_{t}^{\prime} \, \xi \rightarrow r Q \xi$, the first order term in $B_2(r,\tau_{j})$ is less than $r Q \xi \times O_{p}(T^{-1/2-d})$. Because $\Delta$ is a compact set in $R^{p}$, conclude that $\sup_{r \in [0,1]} \max_{1 \leq j \leq N(\tau)} \sup_{\xi \in \Delta} \|B_2(r,\tau_{j})\| = O_{p}(T^{-1/2-d})$. \qedsymbol{}

\vskip 0.2in

\noindent We now prove Lemma~\ref{lemma-U} which concerns the second term in (\ref{eq-R-U}).  

\vskip 0.2in

\noindent \textbf{Proof of Lemma~\ref{lemma-U}:} By the Taylor expansion,

\begin{align*}
U\left(r, \tau, \xi \right) &= T^{-1/2} \sum_{t=1}^{[rT]} x_{t} \left(F_{t}\left(x_{t}^{\prime}\beta_{0}(\tau) + x_{t}^{\prime}\xi/\sqrt{T} \right) - F_{t}(x_{t}^{\prime}\beta_{0}(\tau) \right) \\
&= \frac{1}{T} \sum_{t=1}^{[rT]} f_{t}(x_{t}^{\prime}\beta_{0}(\tau))  x_{t} x_{t}^{\prime} \, \xi + \frac{1}{2 T} \sum_{t=1}^{[rT]} f_{t}^{\prime}(\bar{q}_{t})  x_{t} x_{t}^{\prime} \, \xi \xi^{\prime} x_{t} \frac{1}{\sqrt{T}},
\end{align*}%
where $x_{t}^{\prime}\beta_{0}(\tau) \leq \bar{q}_{t} \leq x_{t}^{\prime}\beta_{0}(\tau) + x_{t}^{\prime}\xi/\sqrt{T}$, uniformly in $\tau$ and $\xi$. The first order term converges to $r D(\tau) \xi$, for $T$ large enough, under Assumption 3(vi). The second order term can be shown to be the order of $O_{p}(T^{-1/2})$ uniformly in $r$ under Assumption~3(v), (vii). The details are omitted for brevity. Conclude that for $T$ large, $U\left(r, \tau, \xi \right)  = r D(\tau) \xi + O_{p}(T^{-1/2})$, uniformly in $r \in [0,1]$, $\tau \in \mathcal{T}$, and $\xi \in \Delta$. \qedsymbol{}

\vskip 0.3in

The next lemma, which is used to prove Proposition~\ref{prop:donsker}, extends Theorem 2.12.1 of \cite{van1996weak} to the weak dependence case under strong mixing. Let $\{X_i\}_{i\geq 1}$ be a sequence of random elements. Let $\mathcal{F}$ be a class of measurable functions with envelop $F$. Let $\mathbb{G}_n(f) = \frac{1}{\sqrt{n}}\sum_{i=1}^n\left(f(X_i) - Pf\right)$ denote the empirical process and $\mathbb{Z}_n(r,f) = \frac{1}{\sqrt{n}}\sum_{i=1}^{[rn]}\left(f(X_i)-Pf\right)$ denote the sequential empirical process with $r\in[0,1]$ with $\mathbb{Z}(0,f)=0$, and $f\in \mathcal{F}$. We adopt the notion of Donsker and sequential/functional Donsker class from Section 2 in \cite{van1996weak}. 

\begin{lemma}
\label{lemma_fn_donsker}
    Suppose (1) $PF^q<\infty$ for some $q>2$; (2) $\{X_i\}_{i\geq 1}$ is strictly stationary and strong mixing with mixing coefficient $\alpha(l)=O\left(l^{-\frac{q}{q-2}}(\log l)^{\frac{2(q-1)}{{q-2}}}\right)$ for $q>2$; and (3) $\Vert \mathbb{Z}_n(i/n,f)-\mathbb{Z}_n(j/n,f)\Vert_{\mathcal{F}}$ is measurable for each $0\leq i<j\leq n$. Then $\mathcal{F}$ is functionally Donsker if and only if $\mathcal{F}$ is Donsker.
\end{lemma}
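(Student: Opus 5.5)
The claim is Lemma~\ref{lemma_fn_donsker}: under a moment condition and polynomial strong mixing, a class is functionally Donsker if and only if it is Donsker. The plan is to mimic the proof of Theorem~2.12.1 in \cite{van1996weak}, replacing the i.i.d.\ maximal inequalities by mixing analogues.

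The direction ``functionally Donsker $\Rightarrow$ Donsker'' is immediate. Evaluating $\mathbb{Z}_n$ at $r=1$ is a continuous projection from $\ell^\infty([0,1]\times\mathcal{F})$ to $\ell^\infty(\mathcal{F})$, and it gives $\mathbb{G}_n$. So the substance is the converse.

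For ``Donsker $\Rightarrow$ functionally Donsker'' I would establish two things: convergence of finite-dimensional distributions, and asymptotic equicontinuity of $\mathbb{Z}_n$ under the semimetric $|r-s|+\rho(f,g)$, where $\rho$ is the $L_2$-type semimetric for which $\mathcal{F}$ is totally bounded.

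\textbf{Finite-dimensional distributions.} Fix points $(r_1,f_1),\dots,(r_k,f_k)$. Write each $\mathbb{Z}_n(r_i,f_i)$ as a sum of block increments over $(r_{l-1},r_l]$. A multivariate CLT for strongly mixing sequences with $2+\delta$ moments gives a joint Gaussian limit, since the rate in (2) implies $\sum_l \alpha(l)^{1-2/q}<\infty$. Covariances of partial sums over blocks $[nr_{l-1}],[nr_l]$ scale as $(r_l-r_{l-1})\Lambda$. Cross-block covariances vanish asymptotically, by summability of the mixing coefficients together with Davydov's inequality. This yields the covariance $(r\wedge s)\,\mathrm{Cov}_\infty(f,g)$.

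\textbf{Asymptotic equicontinuity.} I would split
\[
\mathbb{Z}_n(r,f)-\mathbb{Z}_n(s,g)=\bigl(\mathbb{Z}_n(r,f)-\mathbb{Z}_n(s,f)\bigr)+\bigl(\mathbb{Z}_n(s,f)-\mathbb{Z}_n(s,g)\bigr).
\]

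The second piece needs $\max_{k\le n}\sup_{\rho(f,g)<\delta}|n^{-1/2}\sum_{i\le k}(f-g)(X_i)|$ to be small in probability. Stationarity lets me regard each partial sum $\sum_{i\le k}$ as an empirical process of sample size $k$. The Donsker property of $\mathcal{F}$, combined with a mixing version of Ottaviani's inequality, converts the $k$-fixed statement into a maximal one. Alternatively I would apply the Doukhan--Massart--Rio / Arcones--Yu type maximal inequality with a chaining bound that already controls the maximum over $k$ via a dyadic blocking.

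The first piece needs control of increments over short $r$-intervals. Cover $[0,1]$ by intervals of length $\eta$. By stationarity, each increment over such an interval has the law of $\sqrt{\eta}\,\mathbb{G}_{[\eta n]}$, maximized over partial sums inside the window. A maximal inequality for the $\mathcal{F}$-indexed sup of partial sums gives a tail bound of order $\eta^{q/2}$, up to constants. Summing over the $1/\eta$ windows then gives a bound of order $\eta^{q/2-1}\to0$, because $q>2$.

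I expect this last step to be the main obstacle. In the i.i.d.\ case, Ottaviani's and Hoffmann-J{\o}rgensen's inequalities supply the maximal bound directly. Under mixing one needs a moment bound of order $q>2$ for $\|\mathbb{G}_m\|_{\mathcal{F}}$ that holds uniformly in $m$. I would obtain this via Berbee coupling into independent blocks, with block length chosen so that the rate in (2) makes the coupling error negligible. Condition (3) ensures the needed suprema are measurable, so probability bounds apply without outer-measure complications.
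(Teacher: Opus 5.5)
Your outline matches the paper's proof except in the $r$-increment term. The only-if direction by projecting at $r=1$, the split into $\sup_{|s-r|<\delta}\Vert\mathbb{Z}_n(s,f)-\mathbb{Z}_n(r,f)\Vert_{\mathcal{F}}$ plus an $\mathcal{F}_\delta$ term, and the handling of the latter via the Donsker property and a strong-mixing Ottaviani inequality are all what the paper does. The gap is the step you yourself flag.

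\textbf{The gap.} Your per-window bound of order $\eta^{q/2}$ needs $\sup_m \textnormal{E}\Vert\mathbb{G}_m\Vert_{\mathcal{F}}^p<\infty$ for some $p>2$. That uniform moment bound is not among the hypotheses, and you do not prove it.

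\textbf{The proposed tool does not apply.} Berbee's lemma couples blocks with error equal to the $\beta$-mixing coefficient. The lemma assumes only strong mixing, under which no such total-variation coupling of blocks is available.

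\textbf{Even with $\beta$-mixing, the natural route is squeezed.} Bound the largest of the $n/l$ block sums by the envelope in a Hoffmann--J{\o}rgensen step. This leaves a term of order $n^{1-q/2}l^{q-1}PF^q$, which forces $l\lesssim n^{(q-2)/(2(q-1))}$. But making $(n/l)\beta(l)\to0$, with $\beta$ decaying only at the rate in (2), forces $l$ to exceed $n^{(q-2)/(2(q-1))}$ by a logarithmic factor. This is why the paper takes $l_n=n^{(q-2)/(2(q-1))}(\log n)^{1+\eta}$.

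\textbf{The missing idea: no moment bound is needed.} The paper proceeds in three steps.
\begin{enumerate}
\item It bounds the $r$-increment term by $\lceil 1/\delta\rceil\, P(\max_{k\le n\delta}\sqrt{k/n}\,\Vert\mathbb{G}_k\Vert_{\mathcal{F}}>3\varepsilon)$.
\item It applies the same mixing Ottaviani inequality (Lemma~3 of \cite{bucher2015note}) that you already use for the $\mathcal{F}_\delta$ term.
\item It then uses only weak convergence: $\limsup_n P(\sqrt{\lfloor n\delta\rfloor/n}\,\Vert\mathbb{G}_{\lfloor n\delta\rfloor}\Vert_{\mathcal{F}}>\varepsilon)\le P(\Vert\mathbb{G}\Vert_{\mathcal{F}}\ge\varepsilon/\sqrt{\delta})\le C\exp(-c\varepsilon^2/\delta)$ by Fernique's theorem.
\end{enumerate}
The last bound is $o(\delta)$ and absorbs the factor $\lceil 1/\delta\rceil$. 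Mixing enters only through $\lfloor n/l_n\rfloor\alpha(l_n)\to0$.

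\textbf{A smaller point on the finite-dimensional limits.} As (2) is written, $\log l$ enters with a positive power, so $\alpha(l)^{1-2/q}$ is of order $l^{-1}(\log l)^{2(q-1)/q}$. This is not summable, so your justification of the finite-dimensional limits does not follow from (2). You can derive them from the Donsker hypothesis instead. Separate the increments over $(r_{l-1},r_l]$ by gaps of length $g_n\to\infty$ with $g_n=o(n)$. The gaps are negligible because $\mathbb{G}_{g_n}$ is tight, and the remaining blocks are asymptotically independent because $\alpha(g_n)\to0$.
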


\begin{proof}
    The ``only if'' part is automatic by noticing that $\mathbb{G}_n = \mathbb{Z}_n(1,f)$. For the ``if'' part, we follow the proof of Theorem 2.12.1 of \cite{van1996weak} with adjustments due to weak dependence.  

    First, under conditions (1) and (2), $\left(\mathbb{Z}_n(\cdot,f_1),\ldots,\mathbb{Z}_n(\cdot,f_m)\right)$ satisfy a FCLT for every finite $\{f_1,\ldots,f_m\}\subset \mathcal{F}$ due to Theorem 16.4 of \cite{hansen2022econometrics}. 
    
    Given the finite-dimensional FCLT, the weak convergence of $\mathbb{Z}_n(r,f)$ in $l^{\infty}([0,1]\times \mathcal{F})$ is implied by a stochastic equicontinuity as follows: With the usual notation, let $\mathcal{F}_{\delta} = \{f-g: f,g\in \mathcal{F}, \Vert f-g\Vert_{P,2}<\delta\}$.
    \begin{align*}
        \sup_{|s-r| + \Vert f-g\Vert_{P,2}<\delta} |\mathbb{Z}_n(s,f)-\mathbb{Z}_n(r,g)|\leq  \sup_{|s-r|<\delta} \Vert\mathbb{Z}_n(s,f)-\mathbb{Z}_n(r,f)\Vert_{\mathcal{F}}+\sup_{r\in[0,1]} |\mathbb{Z}_n(r,h)\Vert_{\mathcal{F_\delta}}
    \end{align*}

    Consider the second term. Since $r$ can only take values over $1/n,2/n,...,1$, we can rewrite this term as $\max_{k\leq n} \sqrt{k/n}\Vert \mathbb{G}_n(h) \Vert_{\mathcal{F_\delta}}$. Fix any $\varepsilon>0$. By Lemma 3 of \cite{bucher2015note} (Ottaviani-type inequality under strong mixing) with $l=l_n$ and $\alpha = \alpha_n$, we have
    \begin{align*}
       & P\left(\max_{k\leq n} \sqrt{k/n}\Vert \mathbb{G}_k(h) \Vert_{\mathcal{F_\delta}}>3\varepsilon\right) \\
        \leq &\frac{P\left(\Vert \mathbb{G}_n(h) \Vert_{\mathcal{F_\delta}}>\varepsilon\right) + P\left(\max_{1\leq j<k\leq n, k-j\leq 2l_n} \Vert \sqrt{\frac{k}{n}}\mathbb{G}_k(h)-\sqrt{\frac{j}{n}}\mathbb{G}_j(h) \Vert_{\mathcal{F_\delta}}>\varepsilon\right) + \lfloor  n/l_n\rfloor \alpha_{l_n}}{P\left(\max_{k\leq n} \Vert \mathbb{G}_n(h)-\sqrt{\frac{k}{n}}\mathbb{G}_k(h) \Vert_{\mathcal{F_\delta}}>\varepsilon\right)} \\
        =&\frac{P\left(\Vert \mathbb{G}_n(h) \Vert_{\mathcal{F_\delta}}>\varepsilon\right) + P\left(\max_{ m\leq 2l_n} \Vert \sqrt{\frac{m}{n}}\mathbb{G}_m(h) \Vert_{\mathcal{F_\delta}}>\varepsilon\right) + \lfloor  n/l_n\rfloor\alpha_{l_n}}{1-P\left(\max_{m\leq n} \Vert  \sqrt{\frac{m}{n}}\mathbb{G}_m(h)\Vert_{\mathcal{F_\delta}}>\varepsilon\right)}
    \end{align*}
    where the equality follows from the stationarity. Since $\mathcal{F}$ is a Donsker class, it implies a stochastic equicontinuity for $G_n$. Thus, $P\left(\Vert \mathbb{G}_n(h) \Vert_{\mathcal{F_\delta}}\right)\to 0$ as $n\to\infty$ followed by $\delta\to0$. 

    Note that $\lfloor  n/l_n\rfloor a_{l_n} = O\left(n l_n^{-\frac{2(q-1)}{q-2}}(\log l_n)^{\frac{2(q-1)}{{q-2}}}\right) = O\left(n (\log l_n / l_n)^{\frac{2(q-1)}{{q-2}}}\right)$. For $ (\log l_n / l_n)^{\frac{2(q-1)}{{q-2}}} = o(1/n)$, or $ \log l_n / l_n = o\left(n^{-\frac{q-2}{2(q-1)}}\right)$, we can take $l_n =n^{\frac{q-2}{2(q-1)}} (\log n)^{(1+\eta)}$ for some $\eta>0$. Then, $l_n = o(n)$ and $\lfloor  n/l_n\rfloor a_{l_n} = o(1)$.

    Consider $\Vert  \sqrt{\frac{m}{n}}\mathbb{G}_n(h)\Vert_{\mathcal{F_\delta}}$. For $m\leq n_0$ with a finite $n_0<n$, we have 
    $$\sqrt{\frac{m}{n}}\Vert\mathbb{G}_n(h)\Vert_{\mathcal{F_\delta}} \leq \frac{1}{n^{1/2}} \left(2\sum_{i=1}^{n_0}F(X_i) + 2n_0PF\right)=O_P(n^{-1/2}),$$ 
    where $F$ is the envelop of $\mathcal{F}$. For sufficiently large $n_0$, $m>n_0$, the stochastic equicontinuity of $G_m$ implies that $P\left(\max_{n_0<m\leq n} \Vert  \sqrt{\frac{m}{n}}\mathbb{G}_m(h)\Vert_{\mathcal{F_\delta}}>\varepsilon\right)$ is bounded away from 1. Then, we have the denominator above bounded away from 0, and $P\left(\max_{ m\leq 2l_n} \Vert \sqrt{\frac{m}{n}}\mathbb{G}_m(h) \Vert_{\mathcal{F_\delta}}>\varepsilon\right)\to 0$ because $m/n<l_n/n = o(1)$. So, we conclude $P\left(\max_{k\leq n} \sqrt{k/n}\Vert \mathbb{G}_k(h) \Vert_{\mathcal{F_\delta}}>3\varepsilon\right)\to 0$ as $n\to\infty$ followed by $\delta\to 0$.

    To bound the first term, we note that it suffices to bound 
    \begin{align*}
        &P\left(\max_{j:0\leq j\delta\leq 1}\sup_{j\delta\leq s\leq(j+1)\delta} \Vert\mathbb{Z}_n(s,f)-\mathbb{Z}_n(j\delta ,f)\Vert_{\mathcal{F}}>3\varepsilon\right)\\
        \leq & \lceil 1/\delta\rceil P\left( \sup_{0\leq s\leq \delta} \Vert\mathbb{Z}_n(s,f)\Vert_{\mathcal{F}}>3\varepsilon\right) =\lceil 1/\delta\rceil P\left( \max_{k\leq n\delta} \sqrt{\frac{k}{n}} \Vert\mathbb{G}_k\Vert_{\mathcal{F}} >3\varepsilon\right)
    \end{align*}
   We then apply Lemma 3 of \cite{bucher2015note} again. Note that 
   \begin{align*}
     \lim\sup_{n\to\infty} P\left( \sqrt{\frac{\lfloor n\delta \rfloor}{n}} \Vert\mathbb{G}_{\lfloor n\delta \rfloor}\Vert_{\mathcal{F}} >\varepsilon\right) \leq P\left(  \Vert\mathbb{G}\Vert_{\mathcal{F}} >\varepsilon/\sqrt{\delta}\right). 
   \end{align*}
  Because $\mathbb{G}$ is a tight, centered Gaussian random element in
$\ell^\infty(\mathcal F)$, it possesses a separable version for which
Fernique's theorem implies that, for some constants $c,C>0$,
\[
P\left(\|\mathbb{G}\|_{\mathcal F}>x\right)
\leq C\exp(-cx^2)
\]
for all sufficiently large $x$. Consequently,
\[
P\left(
\|\mathbb{G}\|_{\mathcal F}>
\frac{\varepsilon}{\sqrt{\delta}}
\right)
\leq
C\exp\left(-\frac{c\varepsilon^2}{\delta}\right)
=o(\delta^m)
\]
for every $m>0$ as $\delta\downarrow0$. The rest of arguments are similar as above.
\end{proof}

\vskip 0.5in


\setcounter{section}{0} \setcounter{equation}{0}%
\setcounter{definition}{0}\setcounter{assumption}{0} \setcounter{lemma}{0} \setcounter{table}{0} \setcounter{figure}{0}%
\setcounter{page}{1} \renewcommand{\thepage}{C-\arabic{page}} %
\renewcommand{\theequation}{C.\arabic{equation}}\renewcommand{\thelemma}{C.%
\arabic{lemma}}\setcounter{corollary}{0}\renewcommand{\thecorollary}{C.%
\arabic{corollary}} \renewcommand{\theassumption}{C.\arabic{assumption}}%
\renewcommand{\theremark}{C.\arabic{remark}}
\renewcommand{\thesection}{C.}
\renewcommand{\thesubsection}{C.\arabic{section}.\arabic{subsection}}
\setcounter{subsection}{0}
\renewcommand{%
\thedefinition}{C.\arabic{definition}}
\renewcommand{\thetable}{C.\arabic{table}}
\renewcommand{\thefigure}{C.\arabic{figure}}

\section{Additional Materials}

\subsection{Additional Simulation Results}
\label{sec:appendix_sim}

This section first investigates the performance of the Wald test as the number of restrictions, or equivalently the number of quantile levels, increases. The null hypothesis is $H_0 : \beta_1(\tau_1)= \cdots =\beta_1(\tau_m) = 1$, which imposes $m$ restrictions. For $m = 5$, we consider the quantile levels $\tau \in \{0.1, 0.3, \cdots , 0.9\}$. Similarly, $m = 9$ and $m = 17$ correspond to $\tau \in \{0.1, 0.2, \ldots, 0.8, 0.9\}$ and $\tau \in \{0.1, 0.15, \ldots, 0.85, 0.9\}$, respectively. 

Table~\ref{tab:sim-stacked} reports the results. As the number of restrictions $m$ increases, the Wald test tends to exhibit higher null rejection rates. For example, consider the case with $T=500$, $\rho=0.8$, and a nominal significance level of 5\%. For the kernel-HAR method, the rejection rates are $0.112$, $0.138$, and $0.179$ for $m=5$, $9$, and $17$, respectively. The corresponding rejection rates for the OS-HAR method are $0.103$, $0.124$, and $0.155$. We see a similar pattern for other values of $T$ and $\rho$. Overall, the results indicate that over-rejection becomes more pronounced as the number of restrictions under the null becomes larger.

\begin{table}[!htbp]
\centering
\caption{Wald test for $H_0 : \beta_1(\tau_1)= \cdots =\beta_1(\tau_m) = 1$.}
\label{tab:sim-stacked}
\footnotesize
\setlength{\tabcolsep}{2.5pt}

\begin{tabular}{@{}rrrrrrrrr@{}}
\toprule
\multicolumn{3}{c}{Setting}
  & \multicolumn{3}{c}{kernel-HAR}
  & \multicolumn{3}{c}{OS-HAR} \\
\cmidrule(lr){1-3}\cmidrule(lr){4-6}\cmidrule(lr){7-9}
$T$ & $\rho$ & $m$ & $M$ & 10\% & 5\%
  & $K$ & 10\% & 5\% \\
\midrule
200 & 0.0 & 5 & 1 & 0.157 & 0.097 & 199 & 0.146 & 0.090 \\
 &  & 9 & 1 & 0.170 & 0.109 & 199 & 0.157 & 0.100 \\
 &  & 17 & 1 & 0.269 & 0.187 & 199 & 0.246 & 0.160 \\
\cmidrule(lr){1-9}
200 & 0.5 & 5 & 7 & 0.168 & 0.102 & 43 & 0.155 & 0.093 \\
 &  & 9 & 7 & 0.183 & 0.116 & 43 & 0.175 & 0.107 \\
 &  & 17 & 6 & 0.262 & 0.178 & 45 & 0.248 & 0.162 \\
\cmidrule(lr){1-9}
200 & 0.8 & 5 & 36 & 0.228 & 0.150 & 14 & 0.204 & 0.128 \\
 &  & 9 & 33 & 0.244 & 0.155 & 13 & 0.215 & 0.118 \\
 &  & 17 & 26 & 0.296 & 0.209 & 21 & 0.244 & 0.142 \\
\midrule
500 & 0.0 & 5 & 1 & 0.135 & 0.076 & 499 & 0.129 & 0.070 \\
 &  & 9 & 1 & 0.135 & 0.079 & 499 & 0.132 & 0.075 \\
 &  & 17 & 1 & 0.172 & 0.106 & 499 & 0.167 & 0.101 \\
\cmidrule(lr){1-9}
500 & 0.5 & 5 & 7 & 0.143 & 0.092 & 112 & 0.133 & 0.082 \\
 &  & 9 & 6 & 0.147 & 0.090 & 113 & 0.144 & 0.083 \\
 &  & 17 & 5 & 0.192 & 0.121 & 119 & 0.192 & 0.120 \\
\cmidrule(lr){1-9}
500 & 0.8 & 5 & 35 & 0.176 & 0.112 & 34 & 0.164 & 0.103 \\
 &  & 9 & 33 & 0.213 & 0.138 & 32 & 0.197 & 0.124 \\
 &  & 17 & 27 & 0.266 & 0.179 & 34 & 0.258 & 0.155 \\
\midrule
800 & 0.0 & 5 & 1 & 0.121 & 0.064 & 799 & 0.112 & 0.060 \\
 &  & 9 & 1 & 0.119 & 0.068 & 799 & 0.116 & 0.065 \\
 &  & 17 & 1 & 0.149 & 0.087 & 799 & 0.149 & 0.084 \\
\cmidrule(lr){1-9}
800 & 0.5 & 5 & 14 & 0.136 & 0.081 & 129 & 0.128 & 0.069 \\
 &  & 9 & 12 & 0.139 & 0.081 & 132 & 0.136 & 0.078 \\
 &  & 17 & 10 & 0.166 & 0.096 & 138 & 0.167 & 0.097 \\
\cmidrule(lr){1-9}
800 & 0.8 & 5 & 69 & 0.152 & 0.095 & 39 & 0.144 & 0.085 \\
 &  & 9 & 65 & 0.174 & 0.106 & 37 & 0.176 & 0.103 \\
 &  & 17 & 53 & 0.223 & 0.139 & 39 & 0.232 & 0.139 \\
\bottomrule
\end{tabular}
\par\medskip
{\footnotesize
 \parbox{0.6\linewidth}{
   Note: $M$ and $K$ are the medians of the bandwidths computed by the test-based bandwidth rules described in Section \ref{sec:bandwidth}. The table reports rejection rates for the stack-Wald test at nominal levels of 10\% and 5\%.
 }%
}
\end{table}

We then repeat the same exercise for the Sup-t test. The null hypothesis is $H_0 : \beta_1(\tau) = 1$ for all $\tau \in [0.1,0.9]$. To implement the test, we discretize the quantile index set using $n$ grid points. When $n=5$, we consider the grid $\{0.1, 0.3, \cdots , 0.9\}$. Similarly, $n = 9$ and $n = 17$ correspond to the grids $\{0.1, 0.2, \ldots, 0.8, 0.9\}$ and $\{0.1, 0.15, \ldots, 0.85, 0.9\}$, respectively.

Table~\ref{tab:sim-sup} reports the rejection rates of the Sup-t test. The results show that the test is largely insensitive to the choice of $n$. For example, consider the case with $T=800$, $\rho=0.8$, and a 5\% significance level. The rejection rates for the kernel-HAR method are $0.087$, $0.089$, and $0.094$ for $n = 5$ ,$9$, and $17$, respectively, while the corresponding rates for the OS-HAR method are $0.074$, $0.077$, and $0.079$, respectively. Thus, when the Sup-t test is viewed as a test of the null hypothesis over the continuous quantile range $\mathcal{T} = [0.1,0.9]$, its performance is largely unaffected by the number of grid points used to discretize $\mathcal{T}$.

These results can also be viewed from a different perspective. If the test is instead interpreted as a test of the multiple hypothesis $H_0 : \beta_1(\tau_1)= \cdots =\beta_1(\tau_n) = 1$, it imposes $n$ restrictions. From this perspective, the results indicate that, unlike the Wald test, the Sup-t test is relatively insensitive to the number of restrictions under the null. This suggests that the Sup-t test serves as a useful alternative to the Wald test when the null hypothesis involves a large number of restrictions.

Comparing the kernel-HAR and OS-HAR approaches, we find that the two methods provide comparable size control, with neither clearly dominating the other. This pattern holds for both the Sup-t and Wald tests.

\begin{table}[!htbp]
\centering
\caption{Sup-t test for $H_0 : \beta_1(\tau) = 1$ for all $\tau \in [0.1,0.9]$.}
\label{tab:sim-sup}
\footnotesize
\setlength{\tabcolsep}{2.5pt}
\begin{tabular}{@{}rrrrrrrrr@{}}
\toprule
\multicolumn{3}{c}{Setting}
  & \multicolumn{3}{c}{kernel-HAR}
  & \multicolumn{3}{c}{OS-HAR} \\
\cmidrule(lr){1-3}\cmidrule(lr){4-6}\cmidrule(lr){7-9}
$T$ & $\rho$ & $n$ & $M$ & 10\% & 5\%
  & $K$ & 10\% & 5\% \\
\midrule
200 & 0.0 & 5 & 1 & 0.151 & 0.090 & 199 & 0.150 & 0.089 \\
 &  & 9 & 1 & 0.155 & 0.093 & 199 & 0.153 & 0.094 \\
 &  & 17 & 1 & 0.157 & 0.098 & 199 & 0.154 & 0.097 \\
\cmidrule(lr){1-9}
200 & 0.5 & 5 & 7 & 0.163 & 0.102 & 43 & 0.164 & 0.101 \\
 &  & 9 & 7 & 0.171 & 0.101 & 43 & 0.168 & 0.102 \\
 &  & 17 & 6 & 0.179 & 0.113 & 45 & 0.172 & 0.109 \\
\cmidrule(lr){1-9}
200 & 0.8 & 5 & 36 & 0.197 & 0.127 & 14 & 0.210 & 0.131 \\
 &  & 9 & 33 & 0.201 & 0.127 & 13 & 0.211 & 0.131 \\
 &  & 17 & 26 & 0.214 & 0.138 & 21 & 0.230 & 0.149 \\
\midrule
500 & 0.0 & 5 & 1 & 0.127 & 0.075 & 499 & 0.127 & 0.073 \\
 &  & 9 & 1 & 0.130 & 0.071 & 499 & 0.131 & 0.071 \\
 &  & 17 & 1 & 0.134 & 0.076 & 499 & 0.132 & 0.076 \\
\cmidrule(lr){1-9}
500 & 0.5 & 5 & 7 & 0.149 & 0.087 & 112 & 0.146 & 0.085 \\
 &  & 9 & 6 & 0.156 & 0.093 & 113 & 0.150 & 0.089 \\
 &  & 17 & 5 & 0.163 & 0.101 & 119 & 0.154 & 0.092 \\
\cmidrule(lr){1-9}
500 & 0.8 & 5 & 35 & 0.167 & 0.104 & 34 & 0.171 & 0.104 \\
 &  & 9 & 33 & 0.168 & 0.103 & 32 & 0.171 & 0.106 \\
 &  & 17 & 27 & 0.178 & 0.117 & 34 & 0.180 & 0.115 \\
\midrule
800 & 0.0 & 5 & 1 & 0.117 & 0.060 & 799 & 0.117 & 0.059 \\
 &  & 9 & 1 & 0.116 & 0.064 & 799 & 0.115 & 0.065 \\
 &  & 17 & 1 & 0.119 & 0.069 & 799 & 0.118 & 0.068 \\
\cmidrule(lr){1-9}
800 & 0.5 & 5 & 14 & 0.130 & 0.077 & 129 & 0.130 & 0.074 \\
 &  & 9 & 12 & 0.138 & 0.077 & 132 & 0.136 & 0.077 \\
 &  & 17 & 10 & 0.144 & 0.080 & 138 & 0.138 & 0.079 \\
\cmidrule(lr){1-9}
800 & 0.8 & 5 & 69 & 0.142 & 0.087 & 39 & 0.153 & 0.090 \\
 &  & 9 & 65 & 0.146 & 0.089 & 37 & 0.154 & 0.093 \\
 &  & 17 & 53 & 0.155 & 0.094 & 39 & 0.162 & 0.097 \\
\bottomrule
\end{tabular}

\par\medskip
{\footnotesize
 \parbox{0.6\linewidth}{
   Note: The table reports rejection rates for the Sup-t test at nominal levels of 10\% and 5\%. $M$ and $K$ are the medians of the bandwidths selected by the test-based bandwidth rules described in Section \ref{sec:bandwidth}. 
   }%
}
\end{table}

Figures~\ref{fig:homo_combined} and \ref{fig:mono_combined} report the simulation results for the homogeneity and monotonicity tests, respectively. To examine the effect of bandwidth choices on these tests, we evaluate the tests over a grid of bandwidth values for both the kernel and OS HAR estimators. For the kernel method, we use the grid $b \in \{0.1,0.2,\ldots, 1\}$. For these tests, the OS HAR estimator does not need to be invertible. Therefore, $K$ can be smaller than the number of restrictions, and we consider 10 evenly spaced values between $5$ and $T-1$ in the simulation. See the main text for a detailed discussion of the findings.

\begin{figure}[!htbp]

    \caption{Rejection Rates of the Homogeneity Test}
    \label{fig:homo_combined}

    \centering

    \begin{subfigure}{\linewidth}
        \centering
        \includegraphics[width=\linewidth]{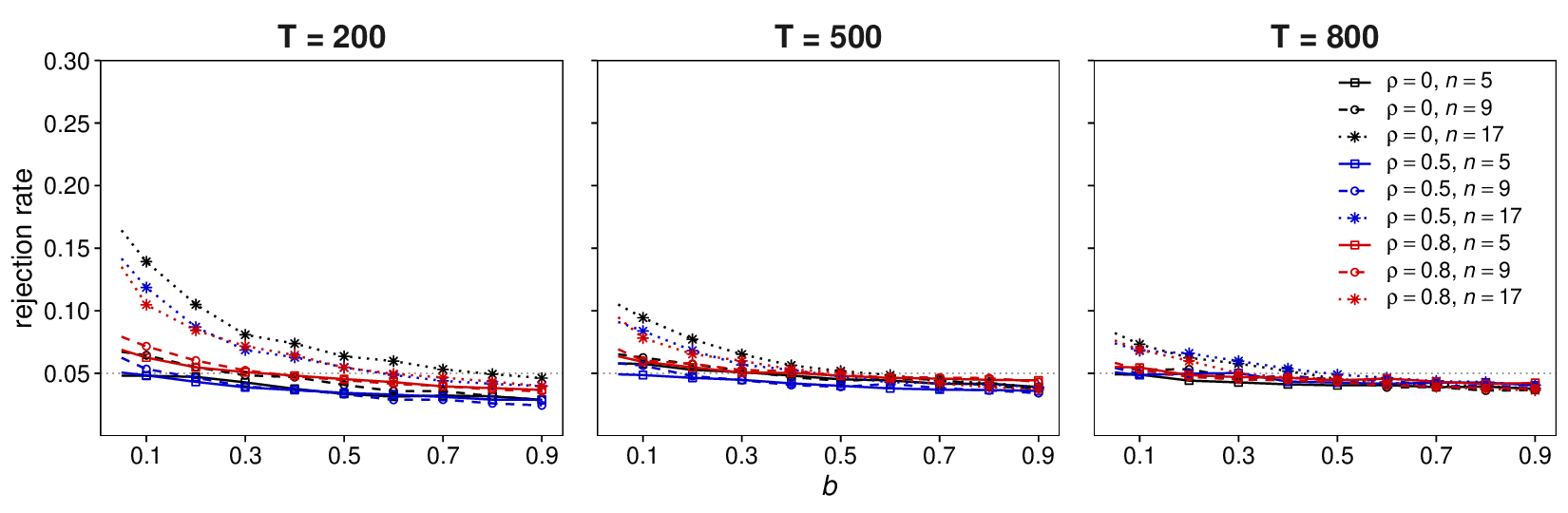}
        \caption{Rejection rates of the homogeneity test; Bartlett variance estimator}
        \label{fig:6_homo_kernel}
    \end{subfigure}

    \vspace{0.8em}

     \begin{subfigure}{\linewidth}
        \centering
        \includegraphics[width=\linewidth]{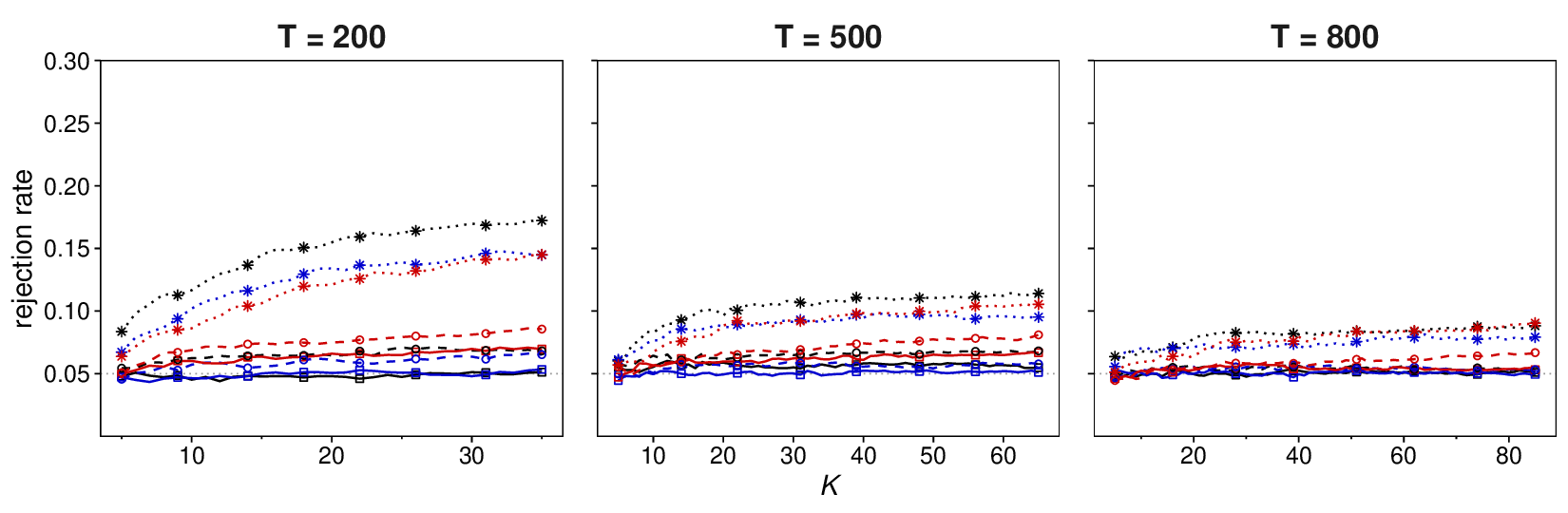}
        \caption{Rejection rates of the homogeneity test; OS variance estimator}
        \label{fig:7_homo_OS}
    \end{subfigure}
\end{figure}

\begin{figure}[!htbp]

    \caption{Rejection Rates of the Monotonicity Test}
    \label{fig:mono_combined}
    
    \centering
    \begin{subfigure}{\linewidth}
        \centering
        \includegraphics[width=\linewidth]{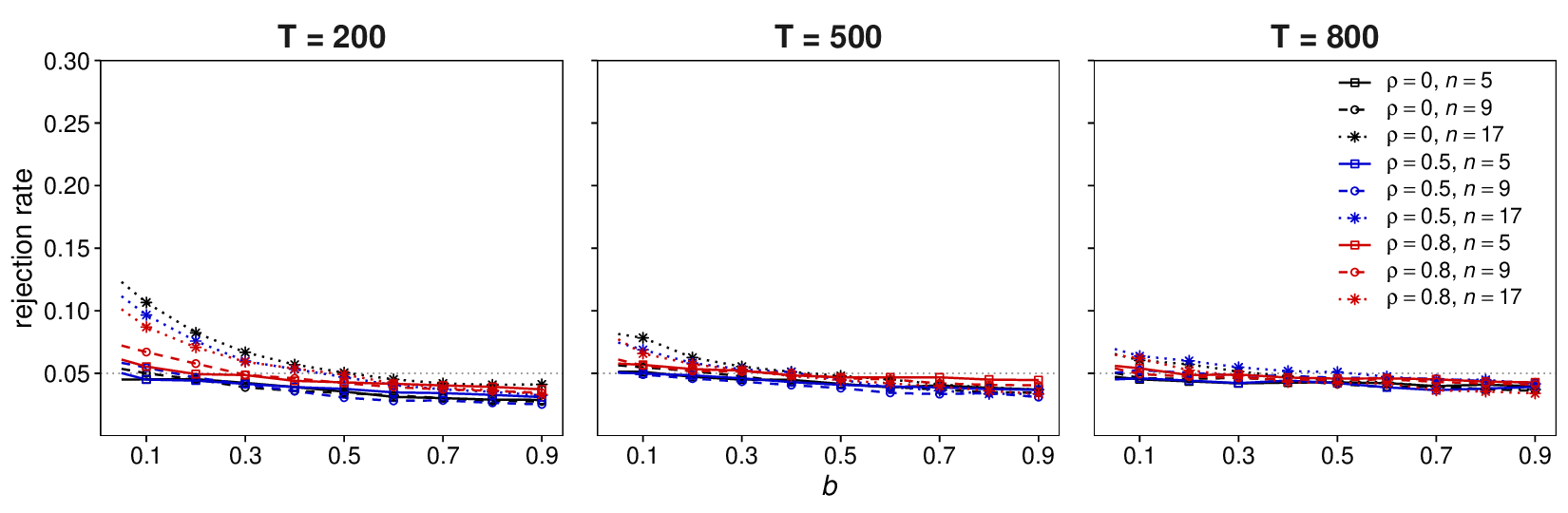}
        \caption{Rejection rates of the monotonicity test; Bartlett variance estimator}
        \label{fig8_mono_kernel}
    \end{subfigure}

    \vspace{0.8em}

    \begin{subfigure}{\linewidth}
        \centering
        \includegraphics[width=\linewidth]{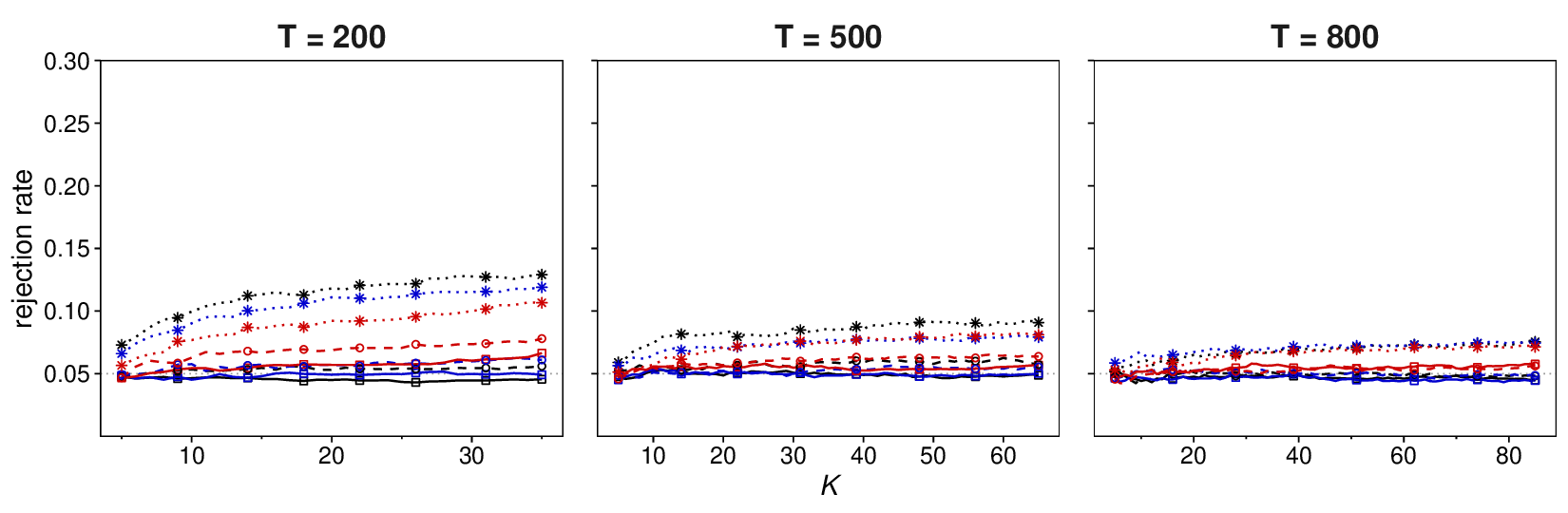}
        \caption{Rejection rates of the monotonicity test; OS variance estimator}
        \label{fig:9_mono_OS}
    \end{subfigure}
\end{figure}

\subsection{Additional simulation results for the power analysis}
\label{sec:appendix_power}

Section~\ref{sec:sim} of the main text examines the finite sample power of the proposed tests, with particular emphasis on the effects of the smoothing parameters and the number of restrictions. Here, we further investigate how the number of restriction $m$ (or the grid size $n$) and the placement of the quantile levels $\tau$ (e.g., near the center or in the tails of the distribution) affect empirical power. The results are reported in Figures~\ref{fig_power_diffm_kernel} and \ref{fig_power_diffm_os}.

For the case $m=2$, we consider two alternative choices: $\tau \in \{0.1,0.9\}$ and $\tau \in \{0.3,0.7\}$. The first grid places the quantile levels in the tails of the distribution, while the second focuses on central quantiles. The results show that the tail-based grid yields noticeably lower power than the center-based grid, confirming that inference becomes more challenging at tail quantiles.

Overall, the simulations reveal the well-known trade-off between size distortion and power associated with the degree of smoothing. This trade-off is particularly pronounced for the Wald test. In addition, when the number of restrictions $m$ is small, the Wald test exhibits strong power, but its power declines substantially as $m$ increases. Complete tabulated results are reported in Tables~\ref{tab:kernel_T200}--\ref{tab:os_T200}.

\begin{figure}[htb!]
\caption{Size-adjusted power with different $m$ and $n$, Kernel-HAR method}
\centering
\includegraphics[width=0.49\linewidth]{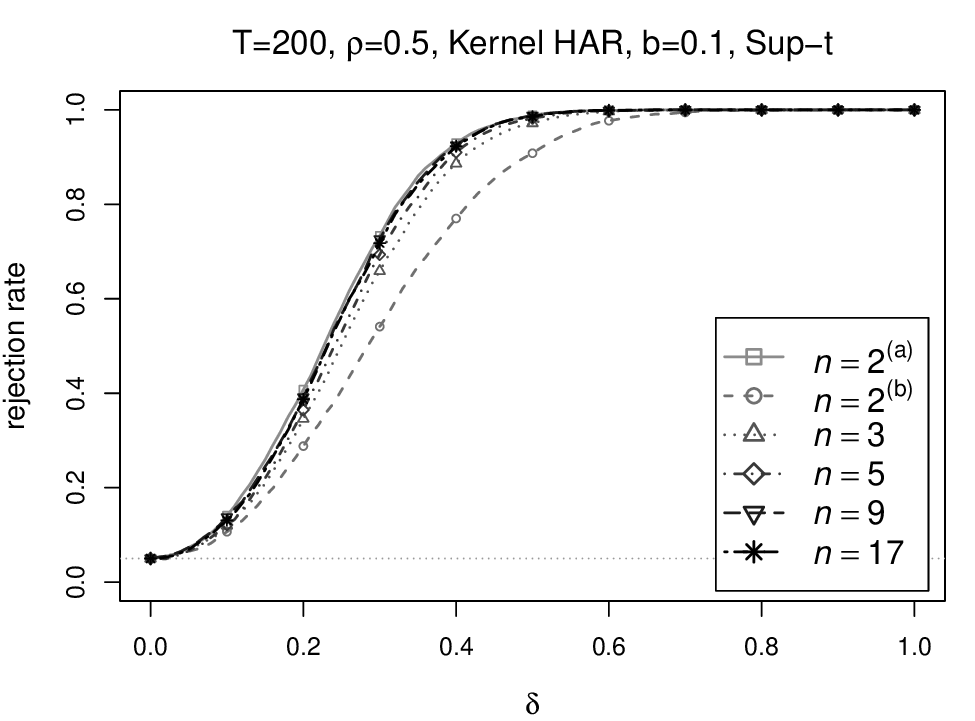}\hfill\includegraphics[width=0.49\linewidth]{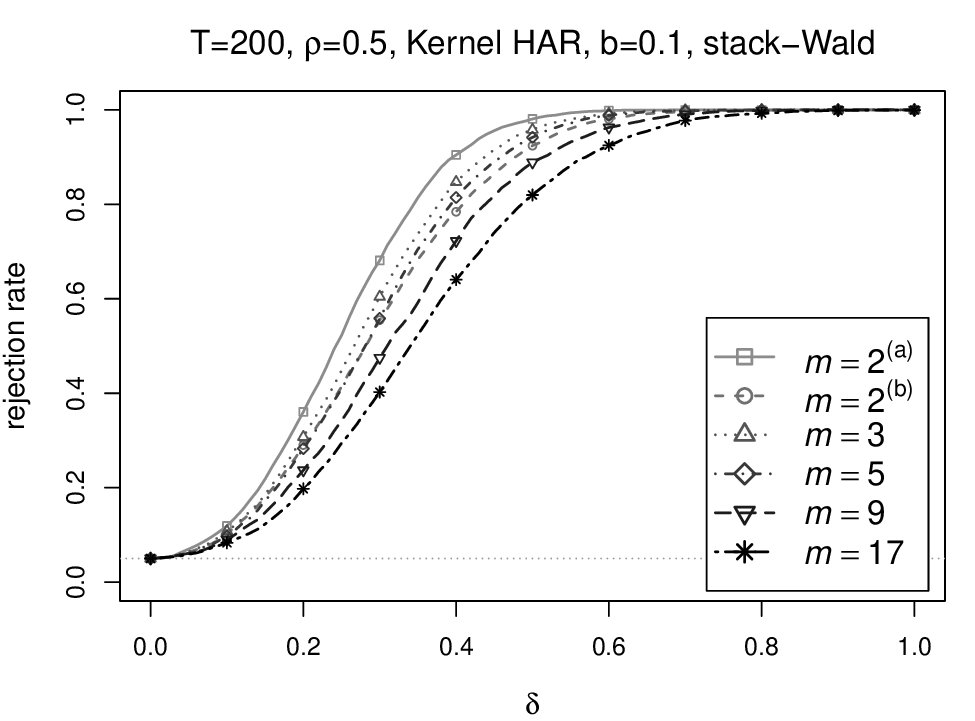}\\
\includegraphics[width=0.49\linewidth]{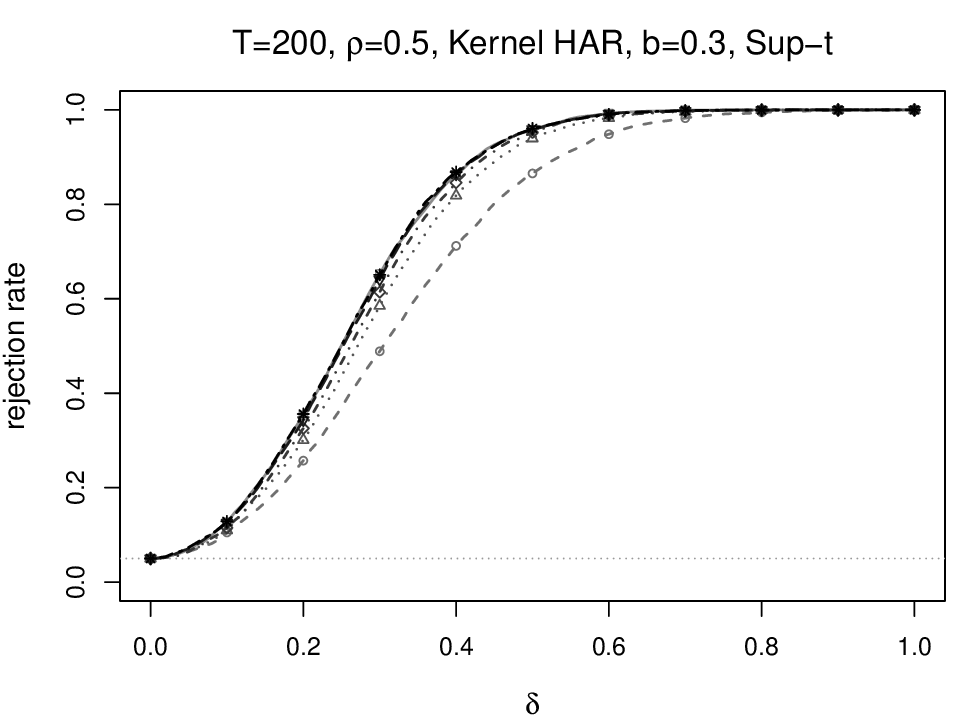}\hfill\includegraphics[width=0.49\linewidth]{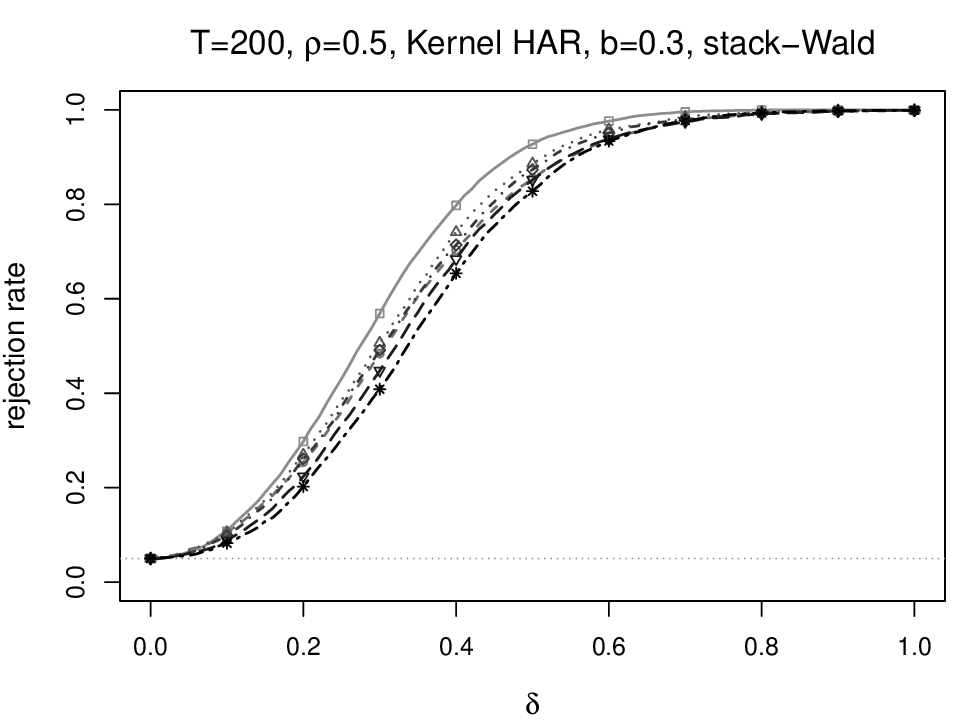}\\
{\footnotesize
 \parbox{0.8\linewidth}{
 \textit{Note:} $m{=}2^{(a)},\,n{=}2^{(a)}$: $\tau \in \{0.3, 0.7\}$; $m{=}2^{(b)},\,n{=}2^{(b)}$: $\tau \in \{0.1, 0.9\}$.
 }}\\
\label{fig_power_diffm_kernel}
\end{figure}

\begin{figure}[htbp]
\caption{Size-adjusted power with different $m$ and $n$, OS-HAR method}
\centering
\includegraphics[width=0.49\linewidth]{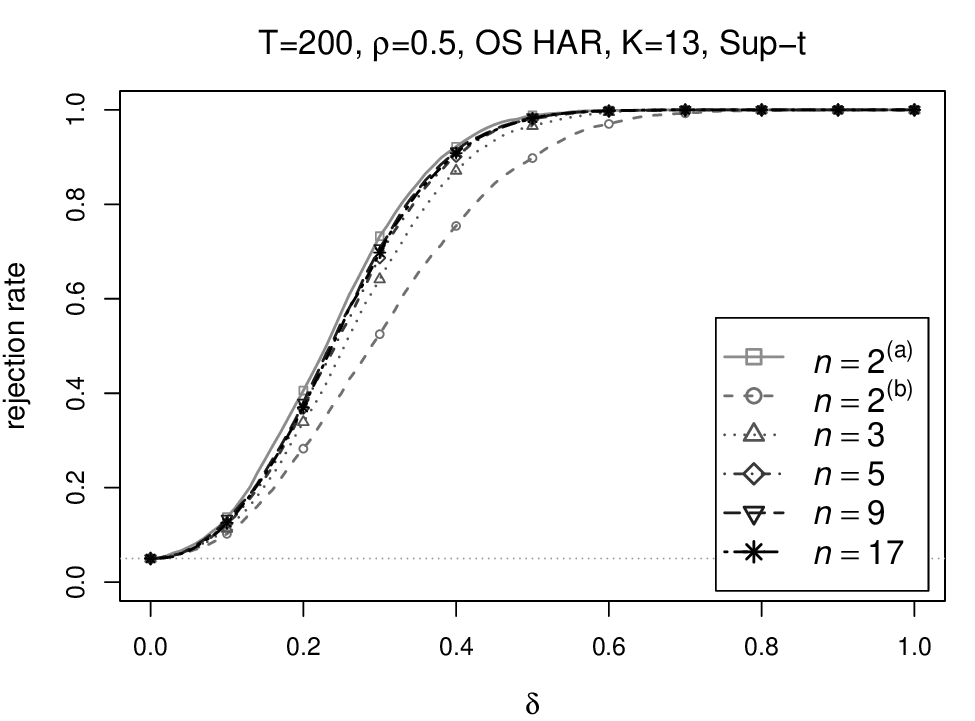}\hfill\includegraphics[width=0.49\linewidth]{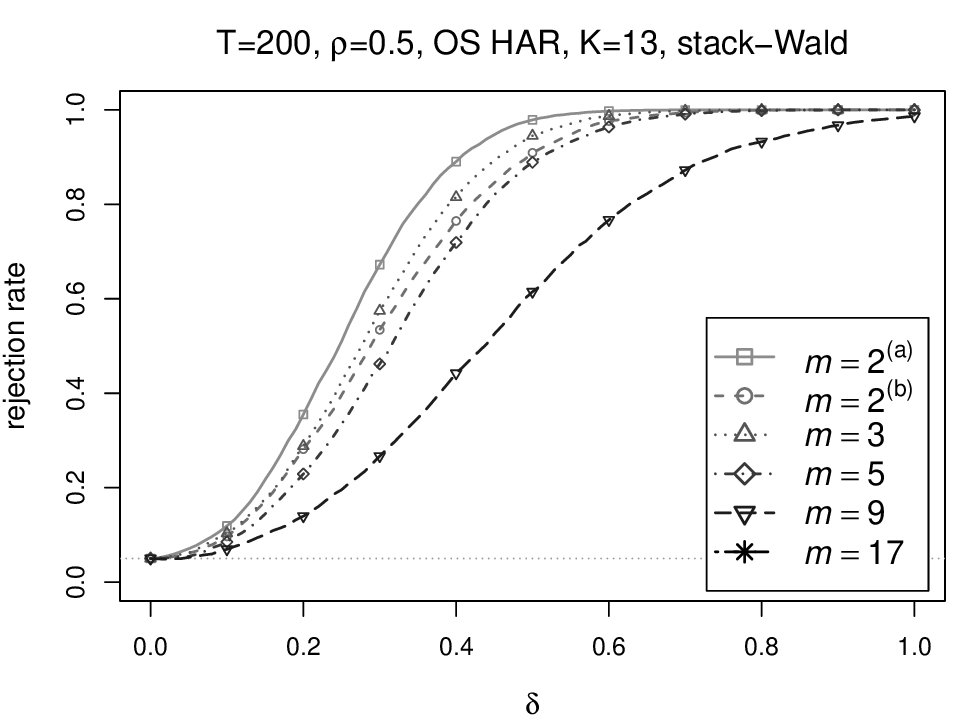}\\
\includegraphics[width=0.49\linewidth]{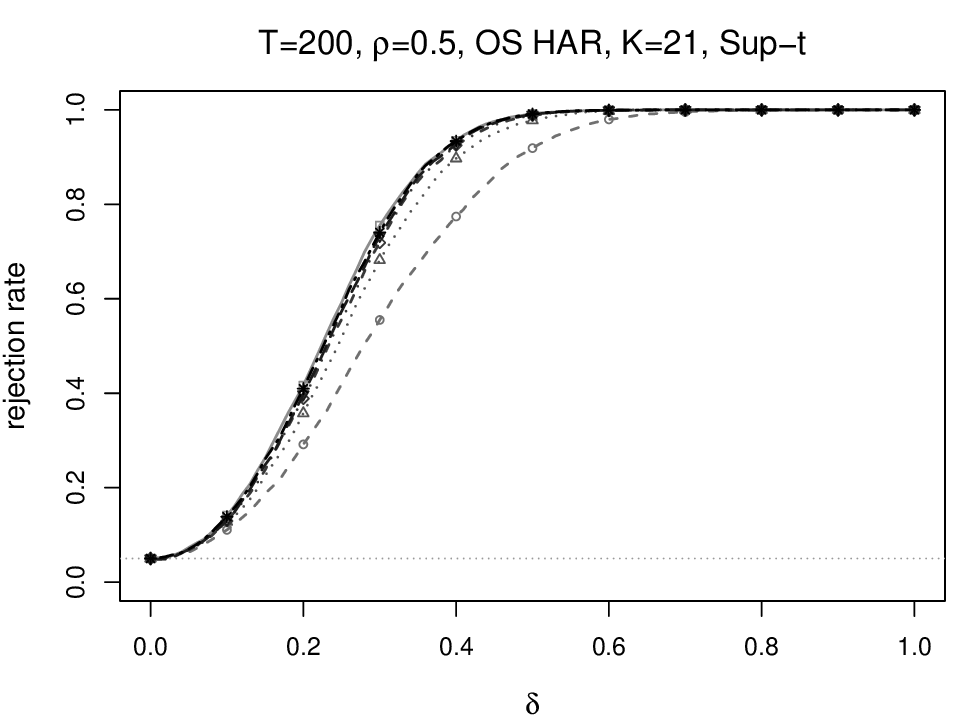}\hfill\includegraphics[width=0.49\linewidth]{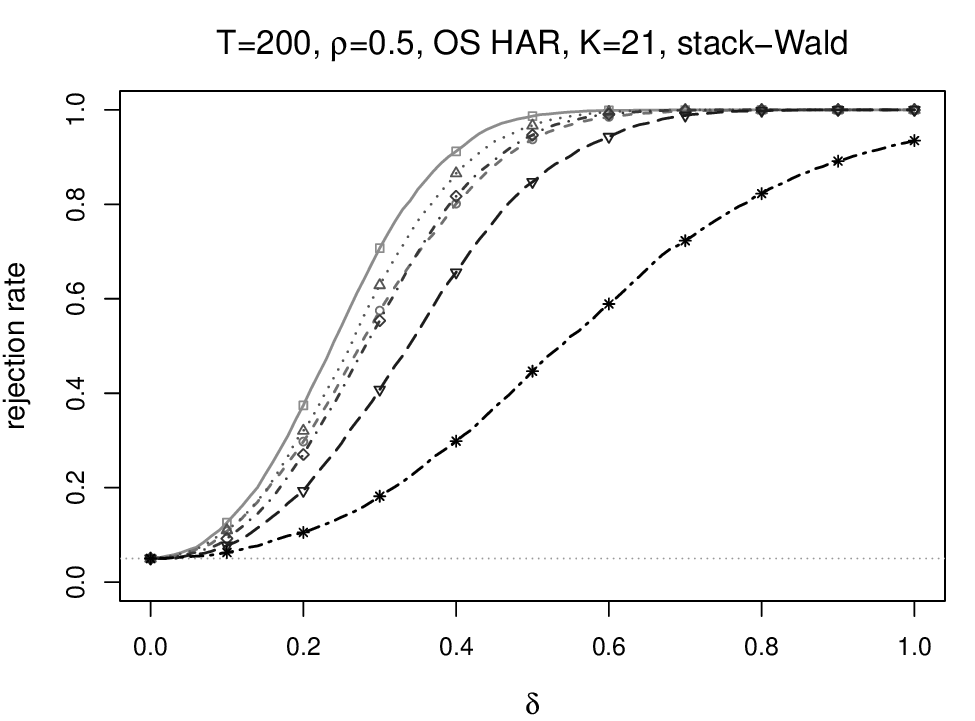}\\
{\footnotesize
 \parbox{0.8\linewidth}{
 \textit{Note:} $m{=}2^{(a)},\,n{=}2^{(a)}$: $\tau \in \{0.3, 0.7\}$; $m{=}2^{(b)},\,n{=}2^{(b)}$: $\tau \in \{0.1, 0.9\}$. For the stack-Wald test with K = 13, the m = 17 case is dropped due to the rank condition.
 }}\\
\label{fig_power_diffm_os}
\end{figure}

\begin{table}[htb!]
\centering\scriptsize
\caption{Size Adjusted Power, Kernel HAR method, $T=200$}
\label{tab:kernel_T200}
\setlength{\tabcolsep}{1pt}
\begin{threeparttable}

\begin{tabular}{l *{12}{>{\centering\arraybackslash}p{1.1cm}}}
\toprule
 & \multicolumn{6}{c}{Sup-t} & \multicolumn{6}{c}{Wald} \\
\cmidrule(lr){2-7}\cmidrule(lr){8-13}
 & $n{=}2^{(a)}$ & $n{=}2^{(b)}$ & $n{=}3$ & $n{=}5$ & $n{=}9$ & $n{=}17$ & $m{=}2^{(a)}$ & $m{=}2^{(b)}$ & $m{=}3$ & $m{=}5$ & $m{=}9$ & $m{=}17$ \\
\midrule
\multicolumn{13}{l}{\textit{$\rho = 0$}} \\
\midrule
$b{=}0.055$ & 0.817 & 0.749 & 0.794 & 0.805 & 0.811 & 0.810 & 0.810 & 0.757 & 0.791 & 0.785 & 0.755 & 0.710 \\
$b{=}0.1$ & 0.810 & 0.745 & 0.788 & 0.800 & 0.805 & 0.805 & 0.803 & 0.747 & 0.781 & 0.769 & 0.736 & 0.699 \\
$b{=}0.3$ & 0.797 & 0.716 & 0.770 & 0.789 & 0.794 & 0.793 & 0.781 & 0.718 & 0.754 & 0.743 & 0.733 & 0.706 \\
$b{=}0.5$ & 0.789 & 0.701 & 0.761 & 0.780 & 0.791 & 0.791 & 0.768 & 0.706 & 0.749 & 0.746 & 0.736 & 0.703 \\
$b{=}0.7$ & 0.787 & 0.705 & 0.761 & 0.780 & 0.790 & 0.788 & 0.767 & 0.708 & 0.751 & 0.747 & 0.732 & 0.706 \\
$b{=}0.9$ & 0.788 & 0.703 & 0.761 & 0.780 & 0.790 & 0.792 & 0.768 & 0.709 & 0.752 & 0.749 & 0.737 & 0.705 \\
\midrule
\multicolumn{13}{l}{\textit{$\rho = 0.5$}} \\
\midrule
$b{=}0.055$ & 0.779 & 0.717 & 0.758 & 0.768 & 0.774 & 0.775 & 0.767 & 0.725 & 0.749 & 0.736 & 0.703 & 0.668 \\
$b{=}0.1$ & 0.773 & 0.711 & 0.750 & 0.759 & 0.769 & 0.767 & 0.757 & 0.715 & 0.733 & 0.720 & 0.688 & 0.656 \\
$b{=}0.3$ & 0.747 & 0.688 & 0.726 & 0.737 & 0.745 & 0.748 & 0.720 & 0.683 & 0.697 & 0.688 & 0.673 & 0.661 \\
$b{=}0.5$ & 0.736 & 0.676 & 0.721 & 0.730 & 0.739 & 0.741 & 0.706 & 0.664 & 0.687 & 0.690 & 0.677 & 0.660 \\
$b{=}0.7$ & 0.732 & 0.673 & 0.717 & 0.730 & 0.738 & 0.741 & 0.708 & 0.660 & 0.688 & 0.691 & 0.678 & 0.657 \\
$b{=}0.9$ & 0.735 & 0.672 & 0.717 & 0.732 & 0.738 & 0.743 & 0.709 & 0.664 & 0.692 & 0.690 & 0.675 & 0.657 \\
\midrule
\multicolumn{13}{l}{\textit{$\rho = 0.8$}} \\
\midrule
$b{=}0.055$ & 0.655 & 0.584 & 0.625 & 0.643 & 0.650 & 0.648 & 0.629 & 0.590 & 0.604 & 0.585 & 0.560 & 0.540 \\
$b{=}0.1$ & 0.650 & 0.580 & 0.623 & 0.641 & 0.646 & 0.640 & 0.616 & 0.577 & 0.584 & 0.553 & 0.537 & 0.526 \\
$b{=}0.3$ & 0.613 & 0.542 & 0.585 & 0.604 & 0.606 & 0.608 & 0.562 & 0.516 & 0.525 & 0.515 & 0.517 & 0.535 \\
$b{=}0.5$ & 0.599 & 0.523 & 0.569 & 0.593 & 0.599 & 0.600 & 0.545 & 0.501 & 0.516 & 0.513 & 0.519 & 0.533 \\
$b{=}0.7$ & 0.597 & 0.527 & 0.575 & 0.595 & 0.596 & 0.595 & 0.541 & 0.502 & 0.519 & 0.517 & 0.511 & 0.535 \\
$b{=}0.9$ & 0.597 & 0.525 & 0.571 & 0.596 & 0.595 & 0.598 & 0.547 & 0.506 & 0.520 & 0.520 & 0.515 & 0.529 \\
\bottomrule
\end{tabular}
\begin{tablenotes}[para,flushleft]
\item \scriptsize \textit{Note:} The first row b computed as $b = 0.055 =
\frac{\lfloor 2 T^{1/3}\rfloor}{T}$. The areas under the size-adjusted power curve are reported. $m{=}2^{(a)}$ corresponds to the $\tau \in \{0.3, 0.7\}$ case. $m{=}2^{(b)}$ corresponds to the $\tau \in \{0.1, 0.9\}$ case.
\end{tablenotes}
\end{threeparttable}
\end{table}

\begin{table}[!htbp] 
\centering
\scriptsize
\caption{Size Adjusted Power, OS HAR method, $T=200$}
\setlength{\tabcolsep}{3pt}
\renewcommand{\arraystretch}{0.9} 
\begin{tabular}{l *{12}{>{\centering\arraybackslash}p{1.1cm}}}
\toprule
 & \multicolumn{6}{c}{Sup-t} & \multicolumn{6}{c}{Wald} \\
\cmidrule(lr){2-7}\cmidrule(lr){8-13}
 & $n{=}2^{(a)}$ & $n{=}2^{(b)}$ & $n{=}3$ & $n{=}5$ & $n{=}9$ & $n{=}17$ & $m{=}2^{(a)}$ & $m{=}2^{(b)}$ & $m{=}3$ & $m{=}5$ & $m{=}9$ & $m{=}17$ \\
\midrule
\multicolumn{13}{l}{\textit{$\rho = 0$}} \\
\midrule
$K{=}7$ & 0.792 & 0.718 & 0.765 & 0.779 & 0.787 & 0.780 & 0.769 & 0.713 & 0.724 & 0.611 & -- & -- \\
$K{=}9$ & 0.799 & 0.727 & 0.779 & 0.791 & 0.792 & 0.791 & 0.782 & 0.731 & 0.752 & 0.696 & 0.141 & -- \\
$K{=}13$ & 0.809 & 0.738 & 0.786 & 0.799 & 0.799 & 0.801 & 0.797 & 0.740 & 0.769 & 0.746 & 0.626 & -- \\
$K{=}21$ & 0.814 & 0.746 & 0.792 & 0.801 & 0.807 & 0.807 & 0.806 & 0.752 & 0.785 & 0.776 & 0.724 & 0.541 \\
$K{=}28$ & 0.818 & 0.745 & 0.792 & 0.804 & 0.809 & 0.813 & 0.809 & 0.757 & 0.787 & 0.779 & 0.745 & 0.666 \\
$K{=}35$ & 0.819 & 0.747 & 0.792 & 0.806 & 0.812 & 0.813 & 0.812 & 0.756 & 0.788 & 0.786 & 0.752 & 0.693 \\
\midrule
\multicolumn{13}{l}{\textit{$\rho = 0.5$}} \\
\midrule
$K{=}7$ & 0.753 & 0.688 & 0.727 & 0.739 & 0.746 & 0.745 & 0.724 & 0.678 & 0.663 & 0.513 & -- & -- \\
$K{=}9$ & 0.759 & 0.694 & 0.736 & 0.748 & 0.751 & 0.754 & 0.731 & 0.692 & 0.694 & 0.626 & 0.136 & -- \\
$K{=}13$ & 0.771 & 0.706 & 0.745 & 0.758 & 0.764 & 0.761 & 0.754 & 0.707 & 0.723 & 0.686 & 0.558 & -- \\
$K{=}21$ & 0.777 & 0.715 & 0.756 & 0.767 & 0.771 & 0.774 & 0.763 & 0.722 & 0.741 & 0.720 & 0.664 & 0.461 \\
$K{=}28$ & 0.780 & 0.722 & 0.757 & 0.769 & 0.776 & 0.775 & 0.767 & 0.727 & 0.747 & 0.731 & 0.692 & 0.602 \\
$K{=}35$ & 0.783 & 0.720 & 0.761 & 0.766 & 0.776 & 0.776 & 0.769 & 0.730 & 0.747 & 0.738 & 0.703 & 0.645 \\
\midrule
\multicolumn{13}{l}{\textit{$\rho = 0.8$}} \\
\midrule
$K{=}7$ & 0.622 & 0.550 & 0.588 & 0.600 & 0.602 & 0.600 & 0.559 & 0.516 & 0.469 & 0.337 & -- & -- \\
$K{=}9$ & 0.639 & 0.567 & 0.606 & 0.617 & 0.620 & 0.616 & 0.584 & 0.549 & 0.524 & 0.419 & 0.097 & -- \\
$K{=}13$ & 0.651 & 0.572 & 0.613 & 0.631 & 0.637 & 0.636 & 0.609 & 0.573 & 0.566 & 0.505 & 0.373 & -- \\
$K{=}21$ & 0.654 & 0.580 & 0.621 & 0.638 & 0.641 & 0.645 & 0.629 & 0.583 & 0.598 & 0.563 & 0.514 & 0.329 \\
$K{=}28$ & 0.655 & 0.584 & 0.625 & 0.638 & 0.644 & 0.644 & 0.630 & 0.594 & 0.607 & 0.581 & 0.545 & 0.466 \\
$K{=}35$ & 0.659 & 0.586 & 0.629 & 0.641 & 0.646 & 0.643 & 0.641 & 0.602 & 0.617 & 0.595 & 0.567 & 0.513 \\
\bottomrule
\end{tabular}
\label{tab:os_T200}
\begin{tablenotes}[para,flushleft]
\item \scriptsize \textit{Note:} The areas under the size-adjusted power curve are reported. $m{=}2^{(a)}$ corresponds to the $\tau \in \{0.3, 0.7\}$ case. $m{=}2^{(b)}$ corresponds to the $\tau \in \{0.1, 0.9\}$ case.
\end{tablenotes}
\end{table}

\end{document}